\newcommand*\circled[1]{\tikz[baseline=(char.base)]{
        \node[shape=rounded rectangle,draw,inner sep=2pt] (char) {#1};}}
\def\vomega{\mbox{\boldmath $\omega $}}
\newcommand{\Remark}{\vspace{0mm} \parindent=0pt
         {\bf Remark.} \hspace{0mm} \parindent=3ex}
\numberwithin{theorem}{section}
\newcommand{\TheTitle}{Boundary Effects in the Discrete Bass Model} 
\newcommand{\TheAuthors}{G. Fibich, T. Levin, and O. Yakir}
\headers{\TheTitle}{\TheAuthors}
\title{{\TheTitle}\thanks{Submitted to the editors 2 January 2018.
\funding{This work was supported by the U.S. Department of Energy under Award DE-EE0007657.}}}
\author{
  Gadi Fibich\thanks{Department of Applied Mathematics, Tel Aviv University, Tel Aviv 6997801, Israel 
(\email{fibich@tau.ac.il},     \email{levintmr@gmail.com},     \email{oren.yakir@gmail.com}).}
  \and
 Tomer Levin\footnotemark[2]
  \and
  Oren Yakir\footnotemark[2]
}
\begin{document}

\maketitle

\begin{abstract}
To study the effect of boundaries on diffusion of new products, we introduce two novel analytic tools: The {\em indifference principle}, which enables us to explicitly compute the aggregate diffusion on various networks,
and the {\em dominance principle}, which enables us to rank the diffusion on different networks.
Using these principles, we prove our main result that on a finite line, one-sided diffusion (i.e., when each consumer can only be influenced by her left neighbor)
is strictly slower than two-sided diffusion (i.e., when each consumer can be influenced by her left and right neighbor). 
This is different from the periodic case of diffusion on a circle, where 
one-sided and two-sided diffusion are identical. We observe numerically similar results in higher dimensions.  
\end{abstract}

\begin{keywords}
  agent-based model, diffusion, indifference principle, boundary, new products. 
\end{keywords}

\begin{AMS}
  91B99, 92D25, 90B60
\end{AMS}

\section{Introduction}
Diffusion of new products is a fundamental problem in marketing~\cite{Mahajan-93}. More generally, 
diffusion in social networks has attracted the attention of researchers  
in physics, mathematics, biology, computer science, social sciences, economics, and management science,
as it concerns the spreading of ``items'' ranging from diseases and computer viruses to rumors, information, opinions, technologies and 
innovations~\cite{Albert-00,Anderson-92,Jackson-08,Pastor-Satorras-01,Rogers-03,Strang-98}. 
The first mathematical model of diffusion of new products
was proposed in~1969 by Bass~\cite{Bass-69}.  In the Bass model, an individual adopts a new product because 
of {\em external influences} by mass media, and {\em internal influences}
by individuals who have already adopted the product.
Bass wrote a single ODE for the number of adopters in the market as a function of time, and showed that its solution has an S-shape.
This classical paper inspired a
huge body of theoretical and empirical research~\cite{Hopp-04}.
Most of these studies also used ODEs to model the adoption level of the whole market. 
More recently, diffusion of new products has been studied 
using discrete, agent-based models (ABM)~\cite{OR-10,Garcia-05,Goldenberg-09,GLM-02,GLM-10}. 
This kinetic-theory approach has the advantage that it reveals the relation between the behavior of individual consumers 
and the aggregate market diffusion. In particular, discrete models can allow for a social network structure, whereby individuals are only influenced by their peers. Most studies that used a discrete Bass model with a spatial structure were numerical (e.g.,~\cite{Delre-07,Garber-04,GLM-02}). To the best of our knowledge, an analysis of the effect of the network
structure on the diffusion of new products was only done in~\cite{OR-10} for the discrete Bass model,
and in~\cite{Bass-SIR-analysis-17} for the discrete Bass-SIR model.
Thus, at present there is limited understanding of the effect of the social network structure  on the diffusion.

  In this paper we present the first study of boundary effects in the discrete Bass model.
Our motivation comes from 
products that diffuse predominantly through internal influences by geographical neighbors,
such as residential rooftop solar systems~\cite{Bollinger-12,Graziano-15}. For such products, it is reasonable to approximate the social network by a two-dimensional grid, in which each node represents a residential unit. Previous studies of the discrete Bass model 
on two-dimensional networks avoided boundary effects by 
imposing periodic boundary conditions~\cite{Bass-SIR-model-16,Bass-SIR-analysis-17,OR-10}. 
In that case, all nodes are interior and are influenced  from all sides by the adjacent nodes. 
In practice, however, some residential units (nodes) lie at the boundary of the municipality. In addition, some 
interior residential units  are separated by a physical barrier (river, highway) from adjacent units. 
Therefore, real two-dimensional networks are not periodic, but rather have exterior and possibly also interior boundaries.  

The goal of this paper is to study the role such boundaries play in the diffusion of new products. 
The paper is organized as follows. Section~\ref{sec:review} reviews the discrete Bass model. 
Section~\ref{sec:General-principles} introduces several  
 analytic tools that are used in the subsequent analysis. As in~\cite{OR-10}, {\em translation invariance} (Section~\ref{sec:translation}) simplifies the analysis on periodic networks. 
The  methodological contribution of this paper consists of two novel principles:
$(i)$~The {\em dominance principle} (Section~\ref{sec:dominance}) identifies pairs of networks for which the 
adoption in the first network is slower than in the second network, and $(ii)$~The {\em indifference principle} (Section~\ref{sec:Indifference}) enables the explicit calculation of the adoption curve on various networks, by identifying edges which have no effect on the adoption probabilities of certain sets of nodes. 
 
In Section~\ref{sec:circle} we  use the indifference principle 
to simplify the explicit calculations of one-sided and two-sided diffusion on the circle, which where first done in~\cite{OR-10}. 
In particular, we recover the result that {\em on the circle, one-sided and two-sided diffusion } coincide. In Sections~\ref{sec:line}
and~\ref{sec:hybrid}  we 
use the  indifference principle to obtain new results, namely, the
explicit calculation of one-sided and two-sided diffusion on a line, and on a hybrid network of a circle with a ray. 
In Section~\ref{sec:f_one-sided<f_two-sided} we prove our main result that {\em on the line, one-sided diffusion  
is strictly slower than two-sided diffusion}. Since one-sided and two-sided diffusion on the circle coincide, the difference between one-sided and two-sided diffusion on the line 
 is purely a boundary effect. This insight explains a previous finding~\cite{Bass-SIR-analysis-17} that when adopters are allowed to {\em recover} (i.e., to become non-contagious), 
one-sided diffusion on the circle is slower than two-sided diffusion. Indeed, once consumers begin to recover, the circle is broken into  several disjoint lines. Since one-sided diffusion on each of these lines is slower than two-sided diffusion, so is the overall diffusion.

In higher dimensions, the explicit calculation of the aggregate diffusion is an open problem. 
Numerical simulations suggest that the behavior is similar to the one in the one-dimensional case, namely, one-sided and two-sided diffusion are identical when the network is periodic, but that one-sided diffusion is strictly slower than two-sided diffusion when the network is non-periodic (Section~\ref{sec:D>=2}).  

\section{Discrete Bass model}
\label{sec:review}

We now review the discrete Bass model for diffusion of new products~\cite{Bass-69,OR-10}. 
A new product is introduced  at time $t=0$ to a market with $M$~potential consumers.
Initially all consumers are non-adopters.  
If consumer~$j$ adopts the product, she remains an adopter at all later times.
Let~$X_j(t)$ denote the state of consumer~$j$ at time~$t$, so that 
$$ 
 X_j(t)= \left\{ 
\begin{array}{ll}
  1, & \quad \mbox{if~$j$ adopted by time~$t$} ,\\
  0, & \quad  \mbox{otherwise}.
\end{array}
\right.
$$
The consumers belong to a social network which is described by an undirected or directed graph, such that
node $j$ has a positive weight $p_j$, and the directed edge \circled{$i$} $\to$ \circled{$j$} has a positive weight~$q_{i,j}$.\footnote{If the graph is undirected, then $q_{i,j}=q_{j,i}$.}
 If~$j$ did not adopt the product by time~$t$,
her probability to adopt it in the interval~$(t,t+\Delta t)$ is
\begin{equation}
  \label{eq:Bass-model-heterogeneous}
\text{Prob}{j~\text{adopts in}\choose {(t,t+\Delta t)}}= \Big( p_j + \sum_{i\not= j}q_{i,j} {X}_{i}(t) \Big)\Delta t, \qquad \Delta t \to 0. 
\end{equation}

The parameter~$p_j$ describes the likelihood of~$j$ to
adopt the product due to {\em external influences} by mass
media or commercials. Similarly, 
 the parameter~$q_{i,j}$ describe the likelihood of~$j$ adopting the product
due to {\em internal influences} ({\em word of mouth}, {\em peers effect}) by~$i$, provided that~$i$ had already adopted the product and that~$j$ can be influenced by $i$.\footnote{$q_{i,j}=0$ if there is no edge from $i$ to $j$.}  
The level of internal influences experienced by~$j$ is the sum of the individual influences of the adopters connected to~$j$. Typically, we assume that~$\sum_{i\neq j}q_{i,j}=q$ for all~$j$, i.e.\ the maximal internal effect experienced by~$j$ is~$q$~\cite{Bass-SIR-model-16}.

\subsection{Cartesian networks and boundary conditions}

In this study we mainly consider diffusion on $D$-dimensional Cartesian grids.
The two-dimensional case is relevant for products that spread predominantly through internal influences by geographical neighbors, 
such as residential rooftop solar systems~\cite{Bollinger-12,Graziano-15}. The one-dimensional case has the advantage that it can be solved explicitly, and is conjectured to serve as a lower bound for all other networks~\cite{OR-10}. 
 
We consider the homogeneous case where all nodes and all edges have the same weights, and
the adoption probability reads
\begin{equation}
\label{LinearAdoptionRates-new}
\text{Prob}{j~\text{adopts in}\choose {(t,t+\Delta t)}}= 
\left\{\begin{array}{lc}
\left(p+q \frac{n_j(t)}{k_D}\right) \Delta t+o(\Delta t), & \quad k_D >0,\\
  p\Delta t+o(\Delta t), & \quad k_D =0,
\end{array}
\right.
\end{equation}
as $\Delta t\to0$,  where~$n_j(t)$ is the number of adopters connected to~$j$ at time~$t$, and~$k_D$ is the 
number of consumers connected to~$j$ (the degree or in-degree of node~$j$).   


We consider both {\em two-sided diffusion}, where each node can be influenced by its $k_D = 2D$ nearest neighbors,  and {\em one-sided diffusion}, where each node can be influenced by its $k_D = D$ nearest neighbors.
The internal influence of  each adopter on a connected potential adopter is~$q/2D$ in the two-sided case,
and $q/D$ in the one-sided case. In the latter case, the diffusion is one-sided in each of the $D$~coordinates. 

For each of these two cases, we consider two types of boundary conditions:
\begin{description}
  \item[Periodic BC.]  
When we want to avoid boundary effects, we assume periodicity in each of the $D$~coordinates (i.e., the domain is a $D$-dimensional torus).
 In this case, all nodes have the same degree. 

\item[Non-periodic  BC.]
When we allow for boundary effects, the domain is a $D$-dimensional box~$B_D$. 
We can visualize this case as though we embed the box~$B_D$ in a larger domain, and set $X_j(t) \equiv 0$ for all nodes outside~$B_D$. 
Therefore, we effectively have a Dirichlet boundary condition at $\partial B_D$.\footnote{See, e.g., \eqref{eq:periodicity3} and~\eqref{eq:periodicity4}.} 

Since the degree of the boundary nodes can be smaller than that of the interior ones, 
and all edges have the same weight of~$q/2D$ in the two-sided case
and~$q/D$ in the one-sided case, see~\eqref{LinearAdoptionRates-new}, the maximal internal influence experienced by boundary nodes can be smaller than that of the interior ones.  
\end{description}

\subsection{1D networks}
\label{sec:1D-models}
  We describe the four network types  in the one-dimensional case with $M$ nodes.

\begin{figure}[ht!]
\begin{center}
\scalebox{0.6}{\includegraphics{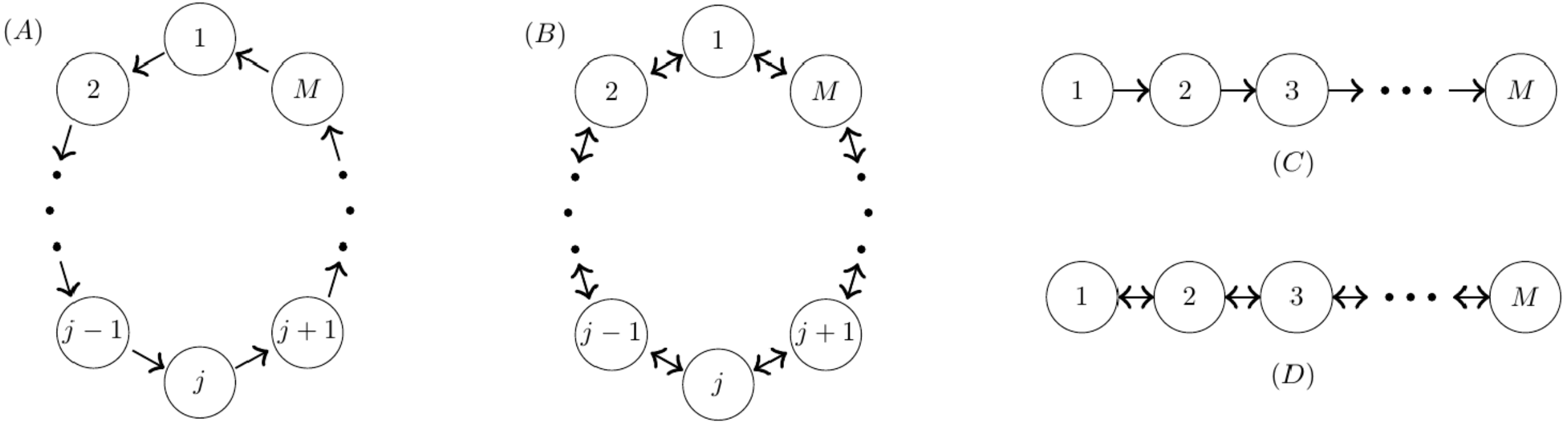}}
\caption{\emph{(}A\emph{)}~One-sided diffusion on a circle. \emph{(}B\emph{)}~Two-sided diffusion on a circle. \emph{(}C\emph{)}~One-sided diffusion on a line. \emph{(}D\emph{)}~Two-sided diffusion on a line.}
\label{fig:1D-graphs}
\end{center}
\end{figure}

\begin{description}
\item{\bf One-sided circle.}
Each consumer can only be influenced by her left neighbor 
 (Figure~\ref{fig:1D-graphs}A). 
Therefore, if~$j \in \{1, \dots, M\}$ has not yet adopted at time~$t$, then 
\begin{subequations}
\label{GeneralAdoptionRates_Left}
\begin{flalign}
\text{Prob}{j~\text{adopts in}\choose {(t,t+\Delta t)}}
= 
 \left(p+q X_{j-1}(t)\right)\Delta t+o(\Delta t),   
 \qquad \Delta t\to0, 
\end{flalign}
 where by periodicity
\begin{flalign}\label{eq:periodicity}
X_0(t):=X_M(t).
\end{flalign}
\end{subequations}

\item{\bf Two-sided circle.} Each consumer can be influenced by her left and right neighbors
 (Figure~\ref{fig:1D-graphs}B). 
Therefore, if~$j \in \{1, \dots, M\}$ has not yet adopted at time~$t$, then 
\begin{subequations}
\label{GeneralAdoptionRates_Left_Right}
\begin{flalign}
\text{Prob}{j~\text{adopts in}\choose {(t,t+\Delta t)}}
= \left(p+\frac{q}{2} \left( X_{j-1}(t) + X_{j+1}(t)\right)\right)\Delta t+o(\Delta t),  
 \qquad \Delta t\to0, 
\end{flalign}
 where by periodicity
\begin{flalign}\label{eq:periodicity2}
X_0(t):=X_M(t), \qquad X_{M+1}(t):=X_1(t).
\end{flalign}
\end{subequations}

\item{\bf One-sided line.} Each consumer can only be influenced by her left neighbor (Figure~\ref{fig:1D-graphs}C). Therefore, if~$j \in \{1, \dots, M\}$ has not yet adopted by time~$t$, then 
\begin{subequations}
\label{GeneralAdoptionRates_Left-line}
\begin{flalign}
\text{Prob}{j~\text{adopts in}\choose {(t,t+\Delta t)}}
= \left(p+q X_{j-1}(t)\right)\Delta t+o(\Delta t),  
 \qquad \Delta t\to0, 
\end{flalign}
 where 
\begin{flalign}
  \label{eq:periodicity3}
X_0(t)\equiv 0.
\end{flalign}
\end{subequations}

\item{\bf Two-sided line.} Each consumer can be influenced by her left and right neighbors
 (Figure~\ref{fig:1D-graphs}D). Therefore, 
if~$j \in \{1, \dots, M\}$ has not yet adopted at time~$t$, then 
\begin{subequations}
\label{eq:GeneralAdoptionRates_Left_Right-line}
\begin{flalign}
\text{Prob}{j~\text{adopts in}\choose {(t,t+\Delta t)}}
= \left(p+\frac{q}{2} \left( X_{j-1}(t) + X_{j+1}(t)\right)\right)\Delta t+o(\Delta t),  
 \qquad \Delta t\to0, 
\end{flalign}
where
\begin{flalign}\label{eq:periodicity4}
X_0(t)\equiv 0, \qquad X_{M+1}(t)\equiv 0.
\end{flalign}
\end{subequations}
\end{description}

\section{Analytic tools}
\label{sec:General-principles}

 Let us denote by $\Omega \subset \{1, \dots, M\}$ a subset of the $M$~nodes, 
by~$n(t)$ the number of adopters at time~$t$, and by $f(t)=\frac1M \mathbb{E}[n(t)]$ 
the expected fraction of adopters.
Then
\begin{equation}
\label{eq:f_sum}
   f(t) = \frac1M \sum_{j=1}^M \text{Prob} ( X_j(t)=1) = \frac1M \sum_{j=1}^M \mathbb{E}[X_j(t)] = \frac1M \sum_{j=1}^M \left(1-\text{Prob} ( X_j(t)=0)\right) .
\end{equation}



\subsection{Translation Invariance}
\label{sec:translation}

We can simplify the analysis on periodic Cartesian domains by utilizing translation invariance:

\begin{lemma}[Translation Invariance~\cite{OR-10}]
\label{lem:Translation-Invariance}
Let $M = m^D$. 
  Consider the homogeneous discrete Bass model~\eqref{LinearAdoptionRates-new} with one-sided or two-sided diffusion on a periodic hypercube~$[1,2,\dots ,m]^D$. Then the probability for adoption is the same for all nodes,
 i.e., $\emph{\text{Prob}} \emph{(} X_j(t)=1\emph{)}$ is independent of~$j$. Therefore, for any~$j$, 
$$
  f(t) = \emph{\text{Prob}} \emph{(} X_j(t)=1\emph{)} =  \mathbb{E}[X_j(t)] .
$$
\end{lemma}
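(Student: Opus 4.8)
The plan is to exploit the symmetry of the periodic hypercube. The translation group $G = (\mathbb{Z}/m\mathbb{Z})^D$ acts on the node set $[1,2,\dots,m]^D$ by coordinate-wise shifts modulo $m$, and this action is transitive. I would first argue that the stochastic dynamics are equivariant under $G$, and then deduce that all single-node marginals coincide, from which the stated formula is immediate.

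First I would view the model as a continuous-time Markov jump process on the state space $\{0,1\}^M$, in which node $j$ flips from $0$ to $1$ at rate $p + q\, n_j(x)/k_D$ (and no other transitions occur), where $n_j(x)$ is the number of adopted nodes connected to $j$ in state $x$. Each $g \in G$ induces a permutation of the nodes and hence a map $T_g$ on $\{0,1\}^M$ defined by $(T_g x)_{g(j)} = x_j$. The central observation is that on the periodic hypercube the influence relation is $G$-invariant: for one-sided and two-sided diffusion alike, the nodes that can influence $g(j)$ are exactly the $g$-images of the nodes that can influence $j$. Consequently the adoption rate is invariant, $n_{g(j)}(T_g x) = n_j(x)$, and since $p$, $q$, and $k_D$ are the same at every node in the homogeneous case, the transition rates commute with $T_g$; equivalently, the generator satisfies $\mathcal{L}(\phi \circ T_g) = (\mathcal{L}\phi)\circ T_g$.

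Next I would combine this equivariance with the translation-invariant initial condition $X_j(0)=0$ for all $j$ to conclude that the law of $(X_j(t))_{j}$ is invariant under $T_g$ for every $g \in G$ and every $t$. A clean way to make this rigorous is a pathwise coupling: realize the process from an i.i.d. family of Poisson clocks and uniform marks attached to the nodes, and note that relabelling these driving variables by $g$ produces a process with the same distribution in which node $j$ plays the role of $g(j)$. In particular $\text{Prob}(X_{g(j)}(t)=1) = \text{Prob}(X_j(t)=1)$ for all $g$. Because $G$ acts transitively, for any two nodes $j, j'$ there is a $g$ with $g(j)=j'$, so $\text{Prob}(X_j(t)=1)$ is independent of $j$.

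Finally, the stated formula follows immediately from~\eqref{eq:f_sum}: since every summand $\mathbb{E}[X_j(t)] = \text{Prob}(X_j(t)=1)$ is the same, the average over the $M$ nodes equals that common value, giving $f(t) = \text{Prob}(X_j(t)=1) = \mathbb{E}[X_j(t)]$ for each $j$. The only real subtlety — and the step I would be most careful about — is justifying equivariance of the \emph{law} and not merely of the rates; the Poisson-clock construction handles this, converting a statement about generators into an exact distributional identity through a measure-preserving relabelling of the driving randomness.
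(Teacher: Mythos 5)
Your proof is correct, but there is nothing in the paper to compare it against: the paper states this lemma without proof, attributing it to~\cite{OR-10}, so what you have written is a self-contained substitute rather than a variant of an in-paper argument. Your route is the standard symmetry argument for vertex-transitive graphs: the translation group $(\mathbb{Z}/m\mathbb{Z})^D$ acts transitively on the periodic hypercube, preserves the one-sided and two-sided edge structures together with the homogeneous weights $p$ and $q/k_D$, and hence commutes with the dynamics; combined with the symmetric initial condition $X_j(0)\equiv 0$, this makes the law of the whole process invariant under relabelling by any translation, so all single-node marginals coincide and the formula for $f$ follows from~\eqref{eq:f_sum}. You are also right that the only delicate point is upgrading equivariance of the rates (or generator) to equivariance of the law, and your graphical construction handles that legitimately. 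One remark: this step can be carried out entirely within the machinery the paper already sets up. In the discrete-time realization used in the proof of the dominance principle (Lemma~\ref{lem:dominance-principle}), permute the coordinates of each uniform vector $\omega^{n}$ by $g$; since those coordinates are i.i.d., the permutation is measure-preserving, and the permuted realization is precisely the original process with node $g(j)$ playing the role of node $j$, which yields the same distributional identity without introducing continuous-time Poisson constructions that the paper never defines. Either implementation is sound; yours is simply phrased in the language of Markov jump processes rather than the paper's $\Delta t$-discretization.
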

More generally, the adoption probabilities of a set of nodes
are invariant under translation. 
For example, let us denote by~$S_k^{\text{one/two-sided}}(t;M)$ the probabilities that nodes $j+1,j+2,\ldots,j+k$ did not adopt by time $t$ in one-sided and two-sided circular networks with $M$ nodes, respectively. 
By translation invariance, $S_k^{\text{one/two-sided}}$ are independent 
of~$j$. Therefore, 
\begin{equation}
\label{eq:S_k-def}
S_k^{\text{one/two-sided}}(t;M) = \text{Prob}\left(X_{j+1}(t)=0,X_{j+2}(t)=0,\ldots,X_{j+k}(t)=0\right).
\end{equation} 
Obviously, translation invariance is lost in the non-periodic case.

Since $S_k^{\text{one-sided}}(t,M)\equiv S_k^{\text{two-sided}}(t,M)$, see~\cite{OR-10}, 
we sometimes drop the superscripts and denote
\begin{subequations}
\begin{equation}
\label{eq:sk_1s=sk_2s}
S_k^{\text{one-sided}}=S_k^{\text{two-sided}}=S_k.
\end{equation}
For $k=1$, we sometimes drop also the subscript and denote 
\begin{equation}
\label{eq:s1_1s=s1_2s}
S^{\text{one-sided}}_1=S^{\text{two-sided}}_1=S_1 = S.
\end{equation}
Since $S$ is the probability to be a non-adopter by time $t$ on (one-sided or two-sided) circular networks, it follows from Lemma~\ref{lem:Translation-Invariance} that
\begin{equation}
\label{eq:s=1-f}
S=1-f_{\text{circle}},
\end{equation}
where $f_{\text{circle}}$ is the expected fraction of adopters in (one-sided or two-sided) circular networks.\footnote{The expected fractions of adopters in one-sided and two-sided circular networks coincide, see~\cite{OR-10} and also~\eqref{eq:f=f1s=f2s}.}

\end{subequations}

\subsection{Dominance principle}
\label{sec:dominance}

	\begin{definition}
		Consider the heterogeneous Bass model~\eqref{eq:Bass-model-heterogeneous} on networks A and B with $M$~nodes, with external parameters $\{p_i^A\}$ and \{$p_i^B\}$, and with internal parameters  $\{q_{i,j}^{A}\}$ and $\{q_{i,j}^{B}\}$, respectively. 
  We say that $A \preceq B$ if 
$$
p_j^{A} \leq p_j^B \quad \mbox{for all~j} \qquad \mbox{and} \qquad  q_{i,j}^{A} \leq q_{i,j}^{B} \quad \mbox{for all~} i \not=j.
$$
 We say that $A \prec B$ if at least one of these $M^2$ inequalities is strict.
	\end{definition}

\begin{lemma}[Dominance principle]
\label{lem:dominance-principle}
	If $A \preceq B$ then $f_{A}(t) \leq f_{B}(t)$ for $t>0$. If $A \prec B$, then $f_{A}(t) < f_{B}(t)$ for $t>0$.
\end{lemma}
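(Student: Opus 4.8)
The plan is to prove the stronger \emph{pathwise} statement: that the two processes can be realized on a common probability space so that $X_j^A(t) \le X_j^B(t)$ for every node $j$ and every $t \ge 0$, almost surely. Taking expectations and averaging over $j$ then yields $f_A(t) \le f_B(t)$ through~\eqref{eq:f_sum}, since it forces $\mathbb{E}[X_j^A(t)] \le \mathbb{E}[X_j^B(t)]$ for each $j$. The key device is a \emph{monotone coupling} built by thinning. Because each $X_j$ takes values in $\{0,1\}$ and there are finitely many nodes, every instantaneous adoption rate $p_j + \sum_{i\ne j} q_{i,j} X_i(t)$ is bounded by $\Lambda := \max_j \big(p_j^B + \sum_{i \ne j} q_{i,j}^B\big)$, which dominates the rates of both networks since $A \preceq B$. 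I would attach to each node $j$ an independent rate-$\Lambda$ Poisson clock $N_j$, and to its $\ell$-th ring an independent uniform $U_{j,\ell} \sim \mathrm{Unif}[0,1]$, and use the \emph{same} clocks and uniforms to drive both networks: at a ring of $N_j$ at time $\tau$, node $j$ (if still a non-adopter) adopts in network A iff $U \le \lambda_j^A(\tau^-)/\Lambda$ and in network B iff $U \le \lambda_j^B(\tau^-)/\Lambda$, where $\lambda_j^{A/B}(\tau^-)$ is the corresponding adoption rate evaluated just before $\tau$.

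Next I would verify, by induction over the (almost surely finite, in any bounded time interval) sequence of ring times, that the componentwise order $X^A \le X^B$ is preserved. The order holds at $t=0$ since all nodes start as non-adopters. At a ring of $N_j$ with uniform $U$, only coordinate $j$ can change, and since adoption is absorbing the only way to break the order is for $j$ to adopt in A while remaining a non-adopter in B, i.e.\ with $X_j^B(\tau^-)=0$. But then the induction hypothesis $X_i^A(\tau^-) \le X_i^B(\tau^-)$, together with $p_j^A \le p_j^B$, $q_{i,j}^A \le q_{i,j}^B$ and $q_{i,j} \ge 0$, gives $\lambda_j^A(\tau^-) \le \lambda_j^B(\tau^-)$; hence $U \le \lambda_j^A(\tau^-)/\Lambda$ forces $U \le \lambda_j^B(\tau^-)/\Lambda$, so $j$ adopts in B as well, and the order is maintained. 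This is exactly where the two-fold monotonicity — dominance of the parameters \emph{and} nonnegativity of the internal weights — enters, and I expect this preservation step to be the conceptual crux of the argument.

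Finally, for the strict statement when $A \prec B$, the coupling already gives $\{X_{j_0}^A(t)=1\} \subseteq \{X_{j_0}^B(t)=1\}$, so it suffices to exhibit, for the node $j_0$ carrying a strict inequality, an event of positive probability on which $X_{j_0}^A(t)=0$ but $X_{j_0}^B(t)=1$. If the strict inequality is $p_{j_0}^A < p_{j_0}^B$, I would take the scenario in which $N_{j_0}$ rings exactly once in $(0,t)$ with its uniform landing in $(p_{j_0}^A/\Lambda,\, p_{j_0}^B/\Lambda]$ while no other clock rings before $t$; then $j_0$ adopts in B but not in A. If instead the strict inequality is $q_{i_0,j_0}^A < q_{i_0,j_0}^B$, I would first let $i_0$ adopt in both networks via a single ring of $N_{i_0}$ with uniform below $p_{i_0}^A/\Lambda$, and then let $N_{j_0}$ ring with its uniform in $\big((p_{j_0}^A+q_{i_0,j_0}^A)/\Lambda,\, (p_{j_0}^B+q_{i_0,j_0}^B)/\Lambda\big]$, again with no other rings; the hypotheses guarantee this interval is nonempty. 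Either scenario has strictly positive probability for every $t>0$, so $\mathbb{E}[X_{j_0}^B(t)] - \mathbb{E}[X_{j_0}^A(t)] > 0$, and therefore $f_A(t) < f_B(t)$.
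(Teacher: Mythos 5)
Your proof is correct, and it rests on the same conceptual pillar as the paper's: a monotone coupling that realizes both processes on a common probability space with shared randomness, an induction showing the componentwise order $X^A \le X^B$ is preserved, and then taking expectations via \eqref{eq:f_sum}. The difference is in the implementation. The paper discretizes time into steps $t_n = n\Delta t$ and drives both networks with the same sequence of uniform vectors $\vomega^n \in [0,1]^M$, proving the pathwise bound \eqref{eq:reduction} by induction on $n$; this is elementary, but it inherits the informality of working with a fixed $\Delta t$ for a model that is defined only in the limit $\Delta t \to 0$. Your construction stays in continuous time, using rate-$\Lambda$ Poisson clocks and thinning (uniformization), so the coupled processes have exactly the right laws and no limiting argument is needed; the price is invoking standard facts about Poisson processes (almost surely finitely many rings in a bounded interval, no simultaneous rings), and your induction over ring times plays precisely the role of the paper's induction over time steps, with the identical two-step rate comparison using $A \preceq B$ and nonnegativity of the weights. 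Where your write-up genuinely improves on the paper is the strict inequality for $A \prec B$: the paper disposes of it with a one-sentence claim that sequences on which node $j$ in $B$ adopts strictly earlier have positive measure, whereas you exhibit explicit positive-probability events --- a single ring whose uniform lands in the gap $\bigl(p_{j_0}^A/\Lambda,\, p_{j_0}^B/\Lambda\bigr]$ with no other rings, or, for a strict edge inequality, the external adoption of $i_0$ followed by a ring of $j_0$ landing in the corresponding gap. Note only that your second scenario implicitly uses the paper's standing assumption that the node weights $p_j$ are strictly positive (so that $i_0$'s external adoption has positive probability); it is worth stating this explicitly.
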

	\begin{proof}
		Assume first that $A \preceq B$. Let $t_{n} = n\Delta t$. For node $j$ in network $A$, define the random variable
		 \begin{align*}
		 X_{j}^{A}(t_{n}) = \begin{cases}
		 1, \quad &\text{if $j$ adopted by time } t_{n}, \\ 0, &\text{else},
		 \end{cases} \ \ \ \ \ &&j = 1,\dots, M, &&n=0,1,\ldots
		 \end{align*}
		 Let us define a specific realization $\widetilde{X}_{j}^{A}(t_{n})$ of $X_{j}^{A}(t_{n})$ as follows:
		 \renewcommand\labelitemii{$\bullet$} 		 
		 \renewcommand\labelitemiii{$\bullet$}
		 \renewcommand\labelitemiv{$\bullet$}  
		 \begin{itemize}
		 	\item $\widetilde{X}_{j}^{A}(0) = 0$ for $j=1,\ldots,M$
		 	\item for $n=1,2,\ldots$ \begin{itemize}
		 		\item sample a random vector $\vomega^{n} = (\omega_{1}^{n},\ldots,\omega_{M}^{n})$ from the uniform distribution on $[0,1]^{M}$
		 		\item for $j=1,\ldots,M$ \begin{itemize}
		 		\item if $\widetilde{X}_{j}^{A}(t_{n}) = 1$, then $\widetilde{X}_{j}^{A}(t_{n+1}) = 1$
		 		\item if $\widetilde{X}_{j}^{A}(t_{n}) = 0$, then
		 		\begin{itemize}
		 			\item if $\omega_{j}^{n} \leq \Big( p_j^{A} + \sum_{i\not= j}q_{i,j}^{A} \widetilde{X}_{i}^{A}(t_{n}) \Big)\Delta t$, then $\widetilde{X}_{j}^{A}(t_{n+1}) = 1$ 
		 			\item else $\widetilde{X}_{j}^{A}(t_{n+1}) = 0$ 
		 		\end{itemize}
		 		\end{itemize}
		 	\item end
		 	\end{itemize}
		 	\item end
		 \end{itemize}  
Define $X_{j}^{B}(t_{n})$ and $\widetilde{X}_{j}^{B}(t_{n})$ in the same manner. We claim that if we use the \textit{same} sequence $\{\vomega^{n}\}_{n=1}^{\infty}$ for $\widetilde{X}_{j}^{A}$ and $\widetilde{X}_{j}^{B}$, then
		 \begin{equation}
		 \label{eq:reduction}
		 \widetilde{X}_{j}^{A}(t_{n}) \leq \widetilde{X}_{j}^{B}(t_{n}), \qquad j=1,\ldots,M, \qquad  n=0,1,\ldots.
		 \end{equation}
		The result will follow from (\ref{eq:reduction}), because
		\begin{align*}
		\mathbb{E}\big[X_{j}^{A}(t_{n})\big] &= \int_{[0,1]^{M\times n}}\widetilde{X}_{j}^{A}\big(t_{n};\vomega^{1},\ldots,\vomega^{n}\big)d\vomega^{1}\cdots d\vomega^{n} \\ &\leq \int_{[0,1]^{M\times n}}\widetilde{X}_{j}^{B}\big(t_{n};\vomega^{1},\ldots,\vomega^{n}\big)d\vomega^{1}\cdots d\vomega^{n} = \mathbb{E}\big[X_{j}^{B}(t_{n})\big],
		\end{align*}
		and so
		$
		f_{A}(t_{n}) = \frac{1}{M}\sum_{j=1}^{M}\mathbb{E}\big[X_{j}^{A}(t_{n})\big] \leq \frac{1}{M}\sum_{j=1}^{M}\mathbb{E}\big[X_{j}^{B}(t_{n})\big] = f_{B}(t_{n}).
		$
        
		We prove (\ref{eq:reduction}) by induction on $n$. For $n=0$, (\ref{eq:reduction}) holds since $\widetilde{X}_{j}^{A}(0) = \widetilde{X}_{j}^{B}(0) = 0$. To prove the induction step, we only need to consider the case $\widetilde{X}_{j}^{A}(t_{n}) = \widetilde{X}_{j}^{B}(t_{n}) = 0$. 
Now
\begin{align*}
			\bigg( p_j^{A} + \sum_{i\not= j}q_{i,j}^{A} \widetilde{X}_{i}^{A}(t_{n}) \bigg)\Delta t 
\leq 
 \bigg( p_j^{A} + \sum_{i\not= j}q_{i,j}^{A} \widetilde{X}_{i}^{B}(t_{n}) \bigg)\Delta t 
 &\leq \bigg( p_j^{B} + \sum_{i\not= j}q_{i,j}^{B} \widetilde{X}_{i}^{B}(t_{n}) \bigg)\Delta t,
		\end{align*}
		where the first inequality follows from the induction assumption. Hence if $\widetilde{X}_{j}^{A}(t_{n+1}) = 1$, then 
$\omega_{j}^{n}\leq\Big( p_j^{A} + \sum_{i\not= j}q_{i,j}^{A} \widetilde{X}_{i}^{A}(t_{n}) \Big)\Delta t \leq \Big( p_j^{B} + \sum_{i\not= j}q_{i,j}^{B} \widetilde{X}_{i}^{B}(t_{n}) \Big)\Delta t$, and so $\widetilde{X}_{j}^{B}(t_{n+1}) = 1$ as well.
		
 The extension of the proof for the case where $A\prec B$ goes as follows. It is easy to verify that for some sequences $\{\vomega^{n}\}_{n=1}^{\infty}$,  nodes~$j$ in~$B$ and in~$A$ adopt at the same time, while for other sequences (that have a positive measure) node~$j$ in~$B$ adopts strictly before node~$j$ in~$A$. There are, however, no sequences 
$\{\vomega^{n}\}_{n=1}^{\infty}$ for which node~$j$ in~$A$ adopts before node~$j$ in~$B$.
	\end{proof}

An immediate consequence of the dominance principle is
\begin{corollary}
\label{cor:f_a<f_B}
If network~$B$ is obtained from network~$A$ by adding links with positive weights, then $f_A(t) < f_B(t)$ for $t > 0$.
\end{corollary}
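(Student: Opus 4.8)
The plan is to recognize that the hypothesis ``network~$B$ is obtained from network~$A$ by adding links with positive weights'' is nothing but a concrete instance of the strict domination relation $A \prec B$, so that the statement follows at once from Lemma~\ref{lem:dominance-principle}. The whole proof is therefore a verification that the hypotheses of the dominance principle are met.

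First I would translate the operation of adding links into the language of the parameters $\{p_j\}$ and $\{q_{i,j}\}$. Adding an edge to the graph does not change any node weight, so $p_j^A = p_j^B$ for every~$j$, and in particular $p_j^A \le p_j^B$. For the edges I would split into two cases. If the directed edge $i \to j$ is already present in~$A$, it is retained with the same weight in~$B$, so $q_{i,j}^A = q_{i,j}^B$. If instead $i \to j$ is one of the newly added links, then by the bookkeeping convention of Section~\ref{sec:review} (namely $q_{i,j}=0$ when there is no edge from~$i$ to~$j$) we have $q_{i,j}^A = 0$, whereas $q_{i,j}^B > 0$ by hypothesis. In either case $q_{i,j}^A \le q_{i,j}^B$ for all $i \neq j$, so $A \preceq B$.

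Next I would note that the domination is in fact strict: since at least one link is added, there is at least one pair $(i,j)$ with $q_{i,j}^A = 0 < q_{i,j}^B$, so at least one of the $M^2$ defining inequalities is strict. Hence $A \prec B$. Applying the strict part of the dominance principle, Lemma~\ref{lem:dominance-principle}, then yields $f_A(t) < f_B(t)$ for all $t > 0$, which is the assertion.

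Since the argument is purely a verification that the hypotheses of Lemma~\ref{lem:dominance-principle} hold, there is no real obstacle here. The only point deserving care is the convention $q_{i,j}=0$ for absent edges: it is precisely this convention that guarantees a newly created link \emph{strictly} increases the corresponding parameter (from $0$ to a positive value), rather than comparing against an undefined quantity, and thus that the relation $A \prec B$ is genuinely strict.
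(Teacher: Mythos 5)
Your proposal is correct and is exactly the argument the paper intends: the corollary is stated there as an ``immediate consequence'' of Lemma~\ref{lem:dominance-principle} with no written proof, and your verification that adding links gives $q_{i,j}^A = 0 < q_{i,j}^B$ for the new edges (hence $A \prec B$) followed by the strict part of the dominance principle is precisely the omitted routine step. Nothing is missing.
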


 We can generalize the dominance principle to subsets of the nodes.
\begin{lemma}[Generalized dominance principle]
\label{lem:dominance-principle-generalized}
Let $S_{\rm \Omega}^{A,B}(t)$ denote the probabilities that none of the nodes in~$\Omega$ have adopted by time~$t$ in networks $A$ and $B$, respectively, where $\Omega \subset \{1, \dots, M\}$.
	If $A \preceq B$, then $S_{\rm \Omega}^A(t) \ge  S_{\rm \Omega}^B(t)$ for $t>0$. 
\end{lemma}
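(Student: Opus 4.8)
The plan is to reuse, essentially verbatim, the coupling constructed in the proof of Lemma~\ref{lem:dominance-principle}. Fix $\Delta t > 0$ and $t_n = n\Delta t$, and build the two realizations $\widetilde{X}_j^A(t_n)$ and $\widetilde{X}_j^B(t_n)$ driven by one \emph{common} sequence $\{\vomega^n\}_{n=1}^\infty$ of uniform samples on $[0,1]^M$. Since $A \preceq B$, the pointwise domination~\eqref{eq:reduction}, namely $\widetilde{X}_j^A(t_n) \le \widetilde{X}_j^B(t_n)$ for every $j$ and every $n$, holds on the whole probability space $[0,1]^{M\times n}$.

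The essential observation is that the event ``no node of $\Omega$ has adopted'' is a \emph{decreasing} event in the configuration, so the coupling reverses the inequality relative to $f$. Concretely, I would introduce the two indicators
\[
  g^A(\vomega^1,\ldots,\vomega^n) = \prod_{j \in \Omega}\bigl(1 - \widetilde{X}_j^A(t_n)\bigr), \qquad
  g^B(\vomega^1,\ldots,\vomega^n) = \prod_{j \in \Omega}\bigl(1 - \widetilde{X}_j^B(t_n)\bigr),
\]
each of which equals $1$ exactly when none of the nodes in $\Omega$ have adopted, and $0$ otherwise. From~\eqref{eq:reduction} together with the fact that all $\widetilde{X}_j$ take values in $\{0,1\}$, whenever $g^B = 1$ one has $\widetilde{X}_j^A(t_n) \le \widetilde{X}_j^B(t_n) = 0$ for all $j \in \Omega$, hence $g^A = 1$. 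Thus $g^A \ge g^B$ pointwise on $[0,1]^{M\times n}$.

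Integrating this inequality over the shared randomness then yields, at every discrete time $t_n$,
\[
  S_\Omega^A(t_n) = \int_{[0,1]^{M\times n}} g^A \, d\vomega^1 \cdots d\vomega^n
  \ge \int_{[0,1]^{M\times n}} g^B \, d\vomega^1 \cdots d\vomega^n = S_\Omega^B(t_n),
\]
which is the claimed inequality for the discrete-time scheme. Finally, I would let $\Delta t \to 0$ to pass from the discrete scheme to the continuous-time model~\eqref{eq:Bass-model-heterogeneous}, giving $S_\Omega^A(t) \ge S_\Omega^B(t)$ for all $t > 0$.

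I do not anticipate a genuine obstacle: the statement is an immediate corollary of the coupling already established. The only points requiring care are the \emph{direction} of the inequality---because non-adoption of a fixed set is a down-closed event, the domination $\widetilde{X}^A \le \widetilde{X}^B$ produces $S_\Omega^A \ge S_\Omega^B$ rather than the reverse---and the harmless check that the product form of $g^{A}, g^{B}$ correctly encodes the joint non-adoption event for an arbitrary subset $\Omega$, as opposed to the single-node quantity $f$ treated in Lemma~\ref{lem:dominance-principle}.
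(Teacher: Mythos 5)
Your proof is correct and takes essentially the same approach as the paper's: both reuse the coupling and the pointwise domination~\eqref{eq:reduction} from Lemma~\ref{lem:dominance-principle}, observe that joint non-adoption of $\Omega$ is a decreasing event, and integrate the resulting pointwise inequality between indicators over the shared randomness. The only cosmetic difference is that the paper encodes the event via the first-adoption time $t_\Omega$ (integrating $\mathbbm{1}_{\{t_\Omega>t_n\}}$, see~\eqref{eq:S_indicate}) rather than via your product indicator $\prod_{j\in\Omega}\bigl(1-\widetilde{X}_j(t_n)\bigr)$, which is the same function.
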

\begin{proof}
 Let $t_\Omega=t_\Omega(\{\vomega^{n}\}_{n=1}^{\infty})$ denote the time of the first adoption in~$\Omega$ under sequence $\{\vomega^{n}\}_{n=1}^{\infty}$. 
Then
\begin{equation}
\label{eq:S_indicate}
S_{\Omega}(t_n) = 
\int\displaylimits_{~~[0,1]^{M\times n}}    \!\!\!\!\! \!\!\!\!\! \big[\mathbbm{1}_{\{t_\Omega>t_n\}}\big]d\vomega^{1}\cdots d\vomega^{n}, \qquad \mathbbm{1}_{\{t_\Omega>t_n\}}:=
\begin{cases}
1, \qquad \text{if}~t_\Omega>t_n,\\
0, \qquad \text{otherwise}.
\end{cases}
\end{equation}
By~(\ref{eq:reduction}), $t_\Omega^A \ge t_\Omega^B$.  Therefore, the result follows.
\end{proof}

It is not true, however, that if $A \prec B$, then $S_{\rm \Omega}^A(t) >  S_{\rm \Omega}^B(t)$ for $t>0$. Indeed, if we only change the weights of {\em non-influential edges to~$\Omega$} (see Section~\ref{sec:Indifference}),  this will have no effect on~$S_\Omega$. We can prove a strict inequality, however,   
if we increase the weights of {\em influential edges}. For example, we have
\begin{lemma}
\label{lem:s2<s1}
Let $S_k$ be given by~\eqref{eq:S_k-def}. Then
$$
S^{\text{\emph{one/two-sided}}}_k(t;p,q_2,M)<S^{\text{\emph{one/two-sided}}}_k(t;p,q_1,M), \qquad q_2>q_1, \quad t>0.
$$
\end{lemma}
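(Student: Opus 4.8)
The plan is to read off the weak inequality from the generalized dominance principle and then upgrade it to a strict one by exhibiting a positive-probability family of realizations on which the block $\Omega:=\{j+1,\dots,j+k\}$ adopts strictly earlier when the internal parameter is larger. Let $A$ be the circular network with parameters $(p,q_1)$ and $B$ the one with $(p,q_2)$, for the same $M$ and the same (one-sided or two-sided) edge structure. Since $q_2>q_1$ we have $A\preceq B$, so Lemma~\ref{lem:dominance-principle-generalized} gives $S_k(t;p,q_1,M)=S_\Omega^A(t)\ge S_\Omega^B(t)=S_k(t;p,q_2,M)$ for all $t>0$, and it only remains to rule out equality. We may assume $1\le k\le M-1$, so that $\Omega$ has at least one neighbor outside itself; when $k=M$ there is no adopter available to transmit internal influence, $S_M$ is independent of $q$, and the strict inequality genuinely fails.

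To rule out equality I use the coupling behind Lemma~\ref{lem:dominance-principle-generalized}: running $A$ and $B$ on the same randomness gives $t_\Omega^A\ge t_\Omega^B$, so by \eqref{eq:S_indicate}
\begin{equation*}
S_\Omega^A(t)-S_\Omega^B(t)=\mathbb{E}\big[\mathbbm{1}_{\{t_\Omega^A>t\}}-\mathbbm{1}_{\{t_\Omega^B>t\}}\big]\ge 0
\end{equation*}
with a nonnegative integrand. It therefore suffices to produce an event $E$ of positive probability on which this integrand equals $1$, i.e.\ on which no node of $\Omega$ adopts in $A$ by time $t$ while some node of $\Omega$ has already adopted in $B$.

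It is cleanest to build $E$ in continuous time, using the graphical representation obtained as the $\Delta t\to0$ limit of the coupling: attach to each node $i$ an external clock of rate $p$ shared by $A$ and $B$, and to each directed edge $i\to i'$ a clock of rate equal to the larger internal weight, each of whose points is independently labelled ``$A$-active'' or ``$B$-only'' so that the $A$-active points occur at the $A$-rate and trigger adoptions in both processes while the $B$-only points trigger adoptions only in $B$; this reproduces the continuous-time analogue of \eqref{eq:reduction}. Let $w=j$ be the outside neighbor of the boundary node $u=j+1\in\Omega$. The edge $w\to u$ is present in both the one-sided and the two-sided circle and its weight strictly increases from $A$ to $B$, so its $B$-only clock has positive rate. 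Define $E$ by four independent requirements on disjoint clocks: (i) the external clock of $w$ fires in $(0,t/2)$; (ii) the $B$-only clock of $w\to u$ fires in $(t/2,t)$; (iii) no external clock of a node of $\Omega$ fires in $(0,t)$; (iv) no $A$-active point occurs on any edge entering $\Omega$ from outside during $(0,t)$. Each requirement has positive probability, so $\mathbb{P}(E)>0$. On $E$, a short induction on clock times using (iii)--(iv) shows that every node of $\Omega$ stays a non-adopter in $A$ throughout $(0,t)$, whence $t_\Omega^A>t$; while (i) makes $w$ an adopter in $B$ by time $t/2$ and then the $B$-only firing in (ii) forces $u\in\Omega$ to adopt before $t$, whence $t_\Omega^B<t$. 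Thus the integrand equals $1$ on $E$ and $S_\Omega^A(t)-S_\Omega^B(t)\ge\mathbb{P}(E)>0$.

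The weak inequality is immediate; the real work is the strict one, and its delicate point is the two-sided case. There one cannot simply push a single coordinate into the gap between the $A$- and $B$-thresholds, since that gap, of size $(q_2-q_1)/2$, may be smaller than the extra contribution $q_1/2$ of a second adopted neighbor, so the naive construction need not keep $\Omega$ frozen in $A$. Splitting each edge clock into ``$A$-active'' and ``$B$-only'' points is precisely what isolates the influence present in $B$ but absent in $A$ and guarantees, irrespective of the configuration outside $\Omega$, that no node of $\Omega$ can adopt in $A$. A secondary technical point is to justify the continuous-time graphical construction as the limit of the discrete coupling of Lemma~\ref{lem:dominance-principle-generalized}; alternatively $E$ can be described directly in terms of the sequences $\{\vomega^{n}\}$ for a fixed small $\Delta t$, at the price of a more careful inductive check that $\Omega$ remains frozen in $A$.
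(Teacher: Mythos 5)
Your proof is correct, and at its core it is the route the paper intends: the paper's entire proof of this lemma is the remark that it is ``similar to'' Lemmas~\ref{lem:dominance-principle} and~\ref{lem:dominance-principle-generalized}, i.e., couple the $(p,q_1)$ and $(p,q_2)$ processes on shared randomness to get the weak inequality, then argue that strict separation occurs with positive probability. Where you differ is in execution, and the differences are substantive. First, you actually construct the separating event, which the paper's template (the last paragraph of the proof of Lemma~\ref{lem:dominance-principle}) waves off as ``easy to verify''; your device of splitting each edge clock into $A$-active and $B$-only points is what makes the verification clean, since conditions (iii)--(iv) freeze $\Omega$ in $A$ regardless of the configuration outside $\Omega$ --- exactly the subtlety you flag in the two-sided case, where the threshold gap $(q_2-q_1)/2$ can be smaller than the contribution $q_1/2$ of an extra adopted neighbor, so a naive ``land one uniform in the gap'' argument does not keep $\Omega$ frozen. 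Second, you work in a continuous-time graphical representation rather than the paper's discrete-time coupling with uniforms $\{\vomega^{n}\}$ at fixed $\Delta t$; this is a legitimate formalization of~\eqref{eq:Bass-model-heterogeneous}, and you correctly note it can be traded for the discrete coupling at the cost of bookkeeping. Third --- and this is a genuine (minor) correction to the paper --- your restriction to $1\le k\le M-1$ is necessary: for $k=M$ the stated strict inequality is false, since $S_M(t;p,q,M)=e^{-Mpt}$ by~\eqref{eq:sM_deriviative} does not depend on $q$. The paper states the lemma with no restriction on $k$; nothing downstream breaks, because all of its later applications (in the proofs concerning $\beta$, $\alpha$, and $\psi$ in the appendices) invoke it only with $k<M$, but your version is the one that is actually true.
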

\begin{proof}
The proof is similar to that of Lemmas~\ref{lem:dominance-principle} and~\ref{lem:dominance-principle-generalized}.
%
\end{proof}

\subsection{Indifference principle}
\label{sec:Indifference}

\begin{definition}[influential and non-influential edges]
\label{def:influential-edge}
Consider a directed network with $M$~nodes \emph{(}if the network is undirected, replace each undirected edge  by two directed edges\emph{)}.
Let $\Omega \subsetneqq \{1, \dots, M\}$ be a subset of the nodes, and let $\Omega^{\rm \emph{c}} = \{1, \dots, M\} \setminus \Omega $ be its complement.
A directed edge \circled{$a$}~$\to$~\circled{$b$} is called ``non-influential to $\Omega$'' if
\begin{enumerate}
  \item  $a \in \Omega$, or 
  \item  $a  \in \Omega^{\rm \emph{c}}, b  \in \Omega^{\rm \emph{c}}$, and there is no sequence of directed edges from $b$ to $\Omega$, or
  \item  $a  \in \Omega^{\rm \emph{c}}, b  \in \Omega^{\rm \emph{c}}$, and all sequences of directed edges from $b$ to $\Omega$ go through the node $a$.
\end{enumerate} 
An edge which is not  non-influential to $\Omega$ is called ``influential to $\Omega$".\footnote{Thus, a directed edge \circled{$a$}~$\to$~\circled{$b$} is influential to $\Omega$ if: 1.~$a \in \Omega^{\emph{\text{c}}}$, and 2.~either $b \in \Omega$, or there is a sequence of directed edges from $b$ to a node $u \in \Omega$ which does not go thorough the node $a$. } 
\end{definition}

An illustration of influential and non-influential edges is shown in Figure~\ref{fig:influential}.

\begin{figure}[ht!]
\begin{center}
\scalebox{1}{\includegraphics{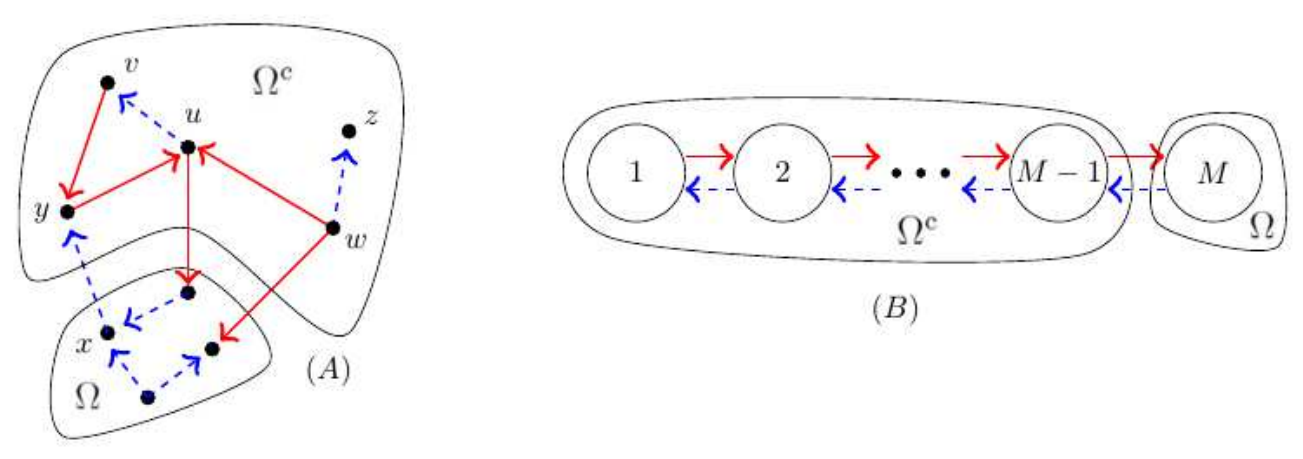}}
\caption{Illustrations of influential and non-influential edges. The nodes are divided into two separate sets: $\Omega$ and $\Omega^{\rm \emph{c}}$. Influential and non-influential edges to~$\Omega$ are denoted by solid red and dashed blue arrows, respectively. The non-influential edges \protect\circled{$x$} $\to$ \protect\circled{$y$}, \protect\circled{$w$} $\to$ \protect\circled{$z$} and \protect\circled{$u$} $\to$ \protect\circled{$v$} in \emph{(}A\emph{)} correspond to cases 1,2 and 3 in \emph{Definition} \emph{\ref{def:influential-edge}}, respectively.}
\label{fig:influential}
\end{center}
\end{figure}

\begin{lemma}[Indifference principle]
\label{lem:Indifference}
Let $S_{\rm \Omega}(t)$ denote the probability that all the nodes in~$\Omega$ did not adopt by time~$t$ 
in the discrete Bass model~\eqref{eq:Bass-model-heterogeneous}. Then $S_{\rm \Omega}(t)$
remains unchanged if we remove or add non-influential edges to~$\Omega$.
\end{lemma}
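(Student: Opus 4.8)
The plan is to argue pathwise, reusing the coupling from the proof of Lemma~\ref{lem:dominance-principle}. First I would reduce to a single edge. Since removing an edge is the inverse of adding one, and since removing edges can never create new directed paths (so non-influentiality of the remaining edges is preserved), a finite set of non-influential edges may be processed one at a time; it therefore suffices to show that $S_\Omega$ is unchanged when we pass between a network $A$ and the network $B$ obtained from $A$ by adding a single non-influential edge \circled{$a$}~$\to$~\circled{$b$} of weight $q_{a,b}>0$. Running the discrete realizations $\widetilde X^A$ and $\widetilde X^B$ with the \emph{same} sequence $\{\vomega^n\}$, and recalling from~\eqref{eq:S_indicate} that $S_\Omega(t_n)$ is the integral of $\mathbbm{1}_{\{t_\Omega>t_n\}}$, it suffices to prove that the first adoption time in $\Omega$ coincides, $t_\Omega^A=t_\Omega^B$, for every realization; integrating and letting $\Delta t\to0$ then gives $S_\Omega^A(t)=S_\Omega^B(t)$. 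The driving observation throughout is that the edge \circled{$a$}~$\to$~\circled{$b$} contributes $q_{a,b}\widetilde X_a(t_n)$ to the rate of $b$, hence is \emph{inactive} while $\widetilde X_a=0$.

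I would then split according to the three cases of Definition~\ref{def:influential-edge}. In Case~1 ($a\in\Omega$) the edge stays inactive until some node of $\Omega$ adopts: an induction on $n$ shows that as long as no node of $\Omega$ has adopted, $\widetilde X_a=0$ in both networks, so every rate, and hence every adoption decision, agrees step by step, forcing the first adoption in $\Omega$ to occur simultaneously in $A$ and $B$. In Case~2 ($a,b\in\Omega^{\mathrm c}$ with no directed path from $b$ to $\Omega$) I would work with the ancestor set $R=\Omega\cup\{u:\text{there is a directed path }u\to\Omega\}$. The structural point is that $R$ is closed under predecessors (if $i\to j$ and $j\in R$ then $i\in R$), so the dynamics on $R$ are driven only by edges internal to $R$; since $b\notin R$, the edge \circled{$a$}~$\to$~\circled{$b$} is not an in-edge of any node of $R$, and a direct induction yields $\widetilde X_j^A=\widetilde X_j^B$ for all $j\in R\supseteq\Omega$, whence $t_\Omega^A=t_\Omega^B$.

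Case~3 ($a,b\in\Omega^{\mathrm c}$ with every path from $b$ to $\Omega$ passing through $a$) is the main obstacle, since here $b$ is a genuine ancestor of $\Omega$ and the edge acts on a node that can reach $\Omega$. I would handle it in two phases. Phase~1: while $a$ has not adopted the edge is inactive, so as in Case~1 the full realizations of $A$ and $B$ coincide; in particular $a$ adopts at the same time $t_a$ in both (and if $a$ never adopts, or if $\Omega$ adopts at or before $t_a$, we are already done). Phase~2: for $t_n\ge t_a$ the edge is active in $B$ only, and the task is to confine the resulting discrepancy away from $\Omega$. I would introduce $W=\{w\ne a:\text{every directed path from }w\text{ to }\Omega\text{ passes through }a\}$, observing $b\in W$, $a\notin W$, and $\Omega\cap W=\emptyset$. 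The crucial closure property is that if $w\in W$ and $w\to u$ is an edge with $u\ne a$, then $u\in W$ (otherwise a path from $u$ to $\Omega$ avoiding $a$ would extend to one from $w$, contradicting $w\in W$); consequently the only node outside $W$ receiving an in-edge from $W$ is $a$, which is an adopter throughout Phase~2. An induction on $n$ then keeps the discrepancy inside $W$: for $j\notin W$ with $j\ne a$ every in-neighbor lies outside $W$ and the edge \circled{$a$}~$\to$~\circled{$b$} is irrelevant, so the rates agree; for $j=a$ the value is frozen at $1$. Since $\Omega\cap W=\emptyset$, the trajectories on $\Omega$ never diverge, giving $t_\Omega^A=t_\Omega^B$ and completing the proof.
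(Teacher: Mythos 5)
Your case analysis is sound, and your argument is correct in its essential content, but it takes a genuinely different route from the paper's. The paper proves~\eqref{eq:indifference_AB} for \emph{all} added and removed non-influential edges simultaneously, by contradiction: assuming the first adoption in $\Omega$ occurs strictly earlier in $B$, it traces \emph{backwards} a chain of causative adoptions $j^B, j_1^B, j_2^B,\dots$ along edges that are influential in $B$ (hence present in $A$), shows the chain consists of distinct nodes, and contradicts the finiteness of the network (or the positivity of time, in the infinite case). You instead argue \emph{forward}: you reduce to adding a single edge \circled{$a$}~$\to$~\circled{$b$} and split according to the three clauses of Definition~\ref{def:influential-edge}, in each case exhibiting an explicit node set that absorbs the discrepancy between the coupled realizations --- in Case~2 the predecessor-closed ancestor set $R$ of $\Omega$, which the new edge cannot enter; in Case~3 the set $W$, which by your closure property can leak influence only into $a$, already an adopter in Phase~2. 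Your version is more constructive: it shows that the full trajectories agree on $R$ (resp.\ on $W^{\rm c}$), not merely that $t_\Omega^A=t_\Omega^B$, it explains exactly why each clause of the definition renders an edge harmless, and its inductions apply verbatim to infinite networks. What the paper's argument buys in exchange is that it needs no reduction at all: it handles an arbitrary mixture of additions and removals in one stroke.

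That reduction is the one place where your write-up has a real, though reparable, gap. To add several non-influential edges one at a time you must know that after adding the first one, the remaining edges are \emph{still} non-influential; your justification (``removal creates no new paths'') covers only the removal direction, and ``adding is the inverse of removing'' silently assumes the edges are non-influential in the \emph{larger} network, which is not what the hypothesis (read in the smaller network) gives you. The missing claim is true and provable: adding a non-influential edge \circled{$a_1$}~$\to$~\circled{$b_1$} creates no new simple path from any node into $\Omega$ avoiding any given node. Indeed, a simple path using the new edge has a tail from $b_1$ that lies in the old network; under clause~2 no such tail exists, under clause~3 the tail must pass through $a_1$, which the path has already visited (contradicting simplicity), and under clause~1 the prefix of the path already reaches $\Omega$. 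You should state and prove this stability lemma (or adopt the paper's implicit convention that edges are non-influential in whichever network contains them, and run your one-at-a-time procedure as removals from the larger network, for which your monotonicity remark suffices). With that patch your proof is complete.
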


\begin{proof}
We start with two identical networks $A$ and $B$ that have the same nodes $\{1,\dots,M\}$, the same external parameters $\{p_i\}$, and the same internal parameters $\{q_{i,j}\}$.  Then we add and remove edges in network $B$ that are non-influentials to $\Omega$. 

As in the proof of the dominance principle, we consider two specific realizations $\widetilde{X}^A_j(t_n)$ and $\widetilde{X}^B_j(t_n)$ which are produced from the same sequence $\{\vomega^{n}\}_{n=1}^{\infty}$. Let $j^A$ and $j^B$ denote the first node in $\Omega$ that adopts in networks $A$ and $B$, respectively, and let $t^A$ and $t^B$ be the times at which these adoptions occur. We claim that
\begin{equation}
\label{eq:indifference_AB}
j^A = j^B, \qquad t^A = t^B.
\end{equation} 
The result will follow from~\eqref{eq:S_indicate} and~\eqref{eq:indifference_AB}.

We prove \eqref{eq:indifference_AB} by contradiction. Assume by negation that $t^B<t^A$.
Then the external influence on node $j^{B}$ at time $t^B-\Delta t$ in $B$ is greater than in $A$. Let $t^{1}<t^B$ be the earliest time where the external influence on $j^{B}$ was greater in $B$ than in $A$. Then at $t^{1-}:=t^{1}-\Delta t$, some node $j_1^{B}$ in $B$ which has an influence over node $j^{B}$ decided to adopt. Since until $t^B$ all of the nodes in $\Omega$ are non-adopters in $B$, we have that
$
j_1^{B}\in \Omega^c.
$
Since edge \circled{$j_1^{B}$} $\to$ \circled{$j^{B}$} is influential in $B$, and no influential edges were added, we conclude that in $A$, node $j^B_1$ also has an influence over node $j^B$. Hence, at time $t^{1-}$ node $j_1^{B}$ in $A$ remains a non-adopter.

We now consider the scenario that at time $t^{1-}$, node $j_1^{B}$ decided to adopt in $B$, but remained a non-adopter in~$A$.
By repeating the arguments of the previous stage, we deduce that this is possible only if at some time $t^{2-}<t^{1-}$, some node $j^B_2$ in $B$ which has an influence over $j^B_1$ decided to adopt. We recall that until time $t^B$ all of the nodes in $\Omega$ are non-adopters in $B$. In addition, until time $t^{1-}$ node $j^B_1$ in $B$ is also a non-adopter. This implies that
\begin{equation*}
j^B_2\in \{\Omega_1\}^c, \qquad \Omega_1 := \Omega\cup j^B_1.
\end{equation*}
By construction, nodes $j^B$, $j^B_1$ and $j^B_2$ are all distinct from one another.
Combining the facts that for network $B$ we have that $j_2^B\in \Omega^c$, that the path \circled{$j_2^B$} $\to$ \circled{$j_1^B$} $\to$ \circled{$j^B$} consists of distinct nodes only, 
and that no influential edges were added to $B$, we get that in $A$, node $j_2^{B}$ also has an influence over node $j_1^{B}$. Hence, at time $t^{2-}$ node $j_2^{B}$ in $A$ remains a non-adopter.

By repeating the above argument, we obtain sequences of nodes $\{j^{B}_k\}$, sets $\{\Omega_k\}$, and times $\{t^{k-}\}$, for $k=1,2,\ldots$. Since $k$ is arbitrarily large, and the sequence of sets $\Omega_1 \subsetneqq \Omega_2\subsetneqq \cdots \subsetneqq\Omega_k$ is strictly increasing, but the number of nodes $M$ is finite, we get a contradiction. In the case of an infinite network, the contradiction can be obtained by observing that $t^B$ is reached after a finite number of discrete time steps, but $t^{1-}>t^{2-}>t^{3-}>\cdots>t^{k-}$, and so $t^{k-}$ becomes negative for $k$ sufficiently large.  
\end{proof}

An immediate consequence of the {\em indifference principle} is
\begin{corollary}  
$S_{\rm \Omega}(t)$ remains unchanged if nodes in $\Omega$ change their influences on other nodes \emph{(}both inside and outside of~$\Omega$\emph{)}.
\end{corollary}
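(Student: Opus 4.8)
The plan is to derive this directly from the Indifference Principle (Lemma~\ref{lem:Indifference}) by observing that every edge whose \emph{source} lies in~$\Omega$ is non-influential to~$\Omega$. Indeed, an edge \circled{$a$}~$\to$~\circled{$b$} with $a \in \Omega$ is non-influential by Case~1 of Definition~\ref{def:influential-edge}, regardless of whether its target $b$ lies inside or outside~$\Omega$. The phrase ``nodes in~$\Omega$ change their influences on other nodes'' means precisely that we modify the weights $q_{a,b}$ for edges with $a \in \Omega$, i.e.\ that we modify only Case-1 non-influential edges. Thus the corollary should follow as the specialization of Lemma~\ref{lem:Indifference} to this subfamily of edges.

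First I would reduce the operation of \emph{changing} an edge weight to the operations of \emph{removing} and \emph{adding} edges, since that is the form in which Lemma~\ref{lem:Indifference} is stated. Changing $q_{a,b}$ from its old value to a new one can be carried out in two steps: remove the edge \circled{$a$}~$\to$~\circled{$b$} (set its weight to zero), and then add back an edge \circled{$a$}~$\to$~\circled{$b$} carrying the new weight. Both the removed edge and the added edge have source $a \in \Omega$, hence both are non-influential to~$\Omega$ by Case~1. Applying the Indifference Principle to each step shows that neither alters $S_\Omega(t)$, so the net weight change leaves $S_\Omega(t)$ unchanged. To modify several outgoing edges of~$\Omega$ simultaneously, I would iterate this argument edge by edge; at every intermediate stage the edges being altered still have their sources in~$\Omega$, so they remain non-influential throughout.

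The step requiring the most care is the bookkeeping in this multi-edge iteration: one must check that the non-influential status of a Case-1 edge does not depend on the presence or absence of the \emph{other} edges emanating from~$\Omega$. This is immediate from the definition, because Case~1 classifies \circled{$a$}~$\to$~\circled{$b$} as non-influential using only the condition $a \in \Omega$, with no reference to the rest of the network. Consequently the classification is stable under all the intermediate removals and additions, and the iteration is legitimate. I therefore do not expect a substantive obstacle here; the only genuine content is the reduction of a weight change to an add/remove pair and the verification that Case~1 is insensitive to the surrounding edges, after which the result drops out of Lemma~\ref{lem:Indifference}.
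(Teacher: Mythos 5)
Your proposal is correct and follows exactly the paper's own argument: the paper's proof is the one-line observation that any directed edge starting from a node in~$\Omega$ is non-influential to~$\Omega$ (Case~1 of Definition~\ref{def:influential-edge}), after which the result follows from Lemma~\ref{lem:Indifference}. Your additional bookkeeping---reducing a weight change to a removal plus an addition, and noting that Case-1 status is insensitive to the rest of the network---is a harmless elaboration of the same idea.
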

\begin{proof}
Any directed edge that starts from a node in $\Omega$ is non-influential to $\Omega$. Therefore, the result follows from Lemma~\ref{lem:Indifference}.
\end{proof}

To motivate condition 3 in Definition~\ref{def:influential-edge} for a non-influential edge, we first note that the sequence of influential edges in the proof of the indifference principle does not go through the same node more than once. To further motivate this condition, consider the edge $\circled{u} \to \circled{v}$ in Figure~\ref{fig:influential}A. 
This edge is non-influential to~$\Omega$, because for it to influence~$\Omega$, $u$ should be an adopter in order to influence $v$. But then, for $v$ to influence $\Omega$,
$v$ should influence $u$. But $u$ is already an adopter, so the edge $\circled{u} \to \circled{v}$ cannot influence $\Omega$.
A second example is Figure \ref{fig:influential}B, where $\Omega$ is node $M$. All the left-going edges  $\circled{k+1} \to \circled{k}$ are non-influentials, because if $k+1$ adopted, then its influence on $k$ has no effect on the future adoption of~$M$, since to get back from $k$ to $M$, one has to go through $k+1$, but  $k+1$ is already an adopter.

We now present two applications of the indifference principle. Additional applications are given in 
Lemmas~\ref{lem:f_one-sided_line},  \ref{lem:f_two-sided_line}, \ref{lem:circle_with_line}, and~\ref{lem:X1XM} and in Appendix \ref{app:psi}.
\begin{lemma}  
\label{lem:S_k^one-sided}
Consider the Bass model~\eqref{GeneralAdoptionRates_Left} on a one-sided circle. Then 
\begin{equation}
\label{indifference_2}
S_k^{\rm one-sided}(t;M) = S^{\rm one-sided}_{1}(t;M-(k-1))e^{-(k-1)pt}, \qquad k=2, \dots, M,
\end{equation}
where $S_k^{\rm one-sided}$ is given by \eqref{eq:S_k-def}.\footnote{$S_1^{\rm one-sided}(t;\widetilde{M})$ is the probability of a node in a one-sided circle with $\widetilde{M}$ nodes to remain a non-adopter.}   
\end{lemma}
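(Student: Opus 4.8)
The plan is to apply the indifference principle to the set $\Omega$ of $k$ consecutive nodes on the one-sided circle, stripping away every non-influential edge so that the joint non-adoption event decouples into one node governed by a line together with $k-1$ mutually independent nodes driven by external influence alone. By translation invariance (Lemma~\ref{lem:Translation-Invariance}) I may take $\Omega = \{1,2,\ldots,k\}$. On the one-sided circle the only edges are $\circled{$i$} \to \circled{$i+1$}$ with indices taken mod $M$, encoding the influence of node $i$ on its right neighbor.

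First I would classify these edges relative to $\Omega$ using Definition~\ref{def:influential-edge}. Every edge whose source lies in $\Omega$ — in particular the internal edges $\circled{$1$}\to\circled{$2$},\ldots,\circled{$k-1$}\to\circled{$k$}$ and the outgoing edge $\circled{$k$}\to\circled{$k+1$}$ — is non-influential by case~1. By contrast, each edge along the arc $\circled{$k+1$}\to\circled{$k+2$}\to\cdots\to\circled{$M$}\to\circled{$1$}$ is influential, since from every such source there is a forward directed path that reaches node $1\in\Omega$ without revisiting the source. I would then invoke the indifference principle (Lemma~\ref{lem:Indifference}) to delete all non-influential edges without changing $S_k$.

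In the resulting network the nodes $2,3,\ldots,k$ have no incoming influential edge and are severed from node $1$ and from the arc; each therefore adopts purely by external influence, so $\text{Prob}(X_j(t)=0)=e^{-pt}$, and because these nodes are isolated their adoptions are driven by disjoint sources of randomness and are hence mutually independent and independent of $X_1$. Meanwhile node $1$ sits at the end of the directed line $\circled{$k+1$}\to\cdots\to\circled{$M$}\to\circled{$1$}$ on $M-(k-1)$ nodes. This yields the factorization $S_k^{\text{one-sided}}(t;M)=\text{Prob}(X_1(t)=0)\cdot e^{-(k-1)pt}$.

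It remains to identify $\text{Prob}(X_1(t)=0)$ — the non-adoption probability of the last node of a one-sided line on $M-(k-1)$ nodes — with $S_1^{\text{one-sided}}(t;M-(k-1))$. I would obtain this by a second, elementary application of the indifference principle: on a one-sided circle with $\widetilde{M}$ nodes the single edge leaving the observed node is non-influential by case~1, and removing it turns the circle into a one-sided line whose last node is precisely the observed node, so the two non-adoption probabilities coincide. The main obstacle I anticipate is purely the combinatorial bookkeeping — verifying the edge classification on the circle, and confirming that after deletion the portion of $\Omega$ carrying nodes $2,\ldots,k$ is genuinely decoupled, so that the independence underlying the product formula is rigorous rather than merely plausible.
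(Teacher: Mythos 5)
Your proposal is correct and follows essentially the same route as the paper: translation invariance, then the indifference principle applied to $\Omega=\{1,\dots,k\}$ (all deleted edges being non-influential by case~1), giving the factorization $S_k = \mathrm{Prob}(X_1(t)=0)\,e^{-(k-1)pt}$. The only cosmetic difference is that the paper adds the non-influential edge from node $j+1$ to node $j+k+1$ so that the observed node sits directly on a one-sided circle with $M-(k-1)$ nodes, whereas you leave it at the end of an open line and then identify the line-endpoint probability with $S_1^{\rm one-sided}(t;M-(k-1))$ by a second indifference step---precisely the move the paper itself makes later in the proof of Lemma~\ref{lem:f_one-sided_line}.
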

\begin{proof}
By translation invariance, $S_k^{\rm one-sided}$ is independent of~$j$.
By the indifference principle, we can calculate $S_k^{\rm one-sided}$ from the network illustrated in Figure~\ref{fig:indifference_s_k}B. In that network, the states of $j+1,\dots,j+k$ are independent, and so
\begin{eqnarray*}
 S_k^{\rm one-sided}(t; M) &=& \prod_{m=j+1}^{j+k} \text{Prob}\left(X_{m}(t)=0\right).
\end{eqnarray*}
For $ m = j+1$ we have that $\text{Prob}\left(X_{m}(t)=0\right) =  S^{\rm one-sided}(t;M-k+1)$.
For $ j+2\le m \le j+k$, since~$m$ is not influenced by other individuals, $\text{Prob}\left(X_{m}(t)=0\right) = S(t;M=1)$.  

The equation for $S(t;M=1)$ reads 
$\frac{\text{d}}{\text{dt}}S(t;M=1) =-pS(t;M=1)$, see~\cite[Eq.~$(16)$]{OR-10}, subject to $S(t=0;M=1)=1$. Therefore,
\begin{equation}
  \label{eq:S_M=1}
S(t;M=1) = e^{-pt}.
\end{equation}
Hence, the result follows. 
\end{proof}

\begin{figure}[ht!]
\begin{center}
\scalebox{0.75}{\includegraphics{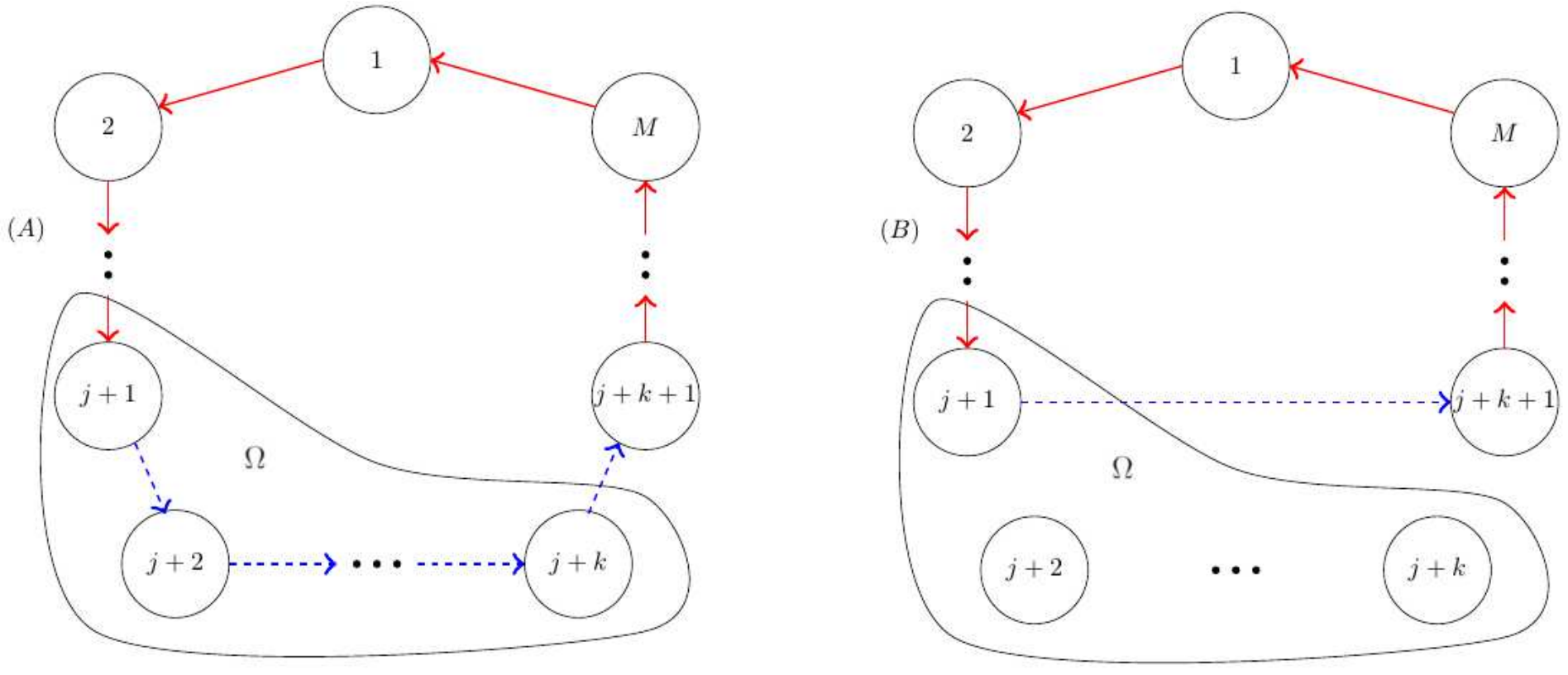}}
\caption{Equivalent networks for the calculation of $S_k^{\rm one-sided}(t;M)$ which are used in the proof of Lemma~\emph{\ref{lem:S_k^one-sided}}. Solid red and dashed blue arrows correspond to influential and non-influential edges to $\Omega = \{j+1,j+2,\dots,j+k\}$, respectively. \emph{(}A\emph{)}~One-sided circle. \emph{(}B\emph{)}~The $k-1$ non-influential edges 
\protect\circled{$j+1$} $\to \dots \to$ \protect\circled{$j+k$} are deleted. The non-influential edge \protect\circled{$j+1$} $\to$ \protect\circled{$j+k+1$} is added. }
\label{fig:indifference_s_k}
\end{center}
\end{figure}

\begin{lemma} 
\label{lem:S_k^two-sided}
 Consider the Bass model~\eqref{GeneralAdoptionRates_Left_Right} on a two-sided circle.
Then 
\begin{equation}
\label{eq:indifference_s_k_two_sided}
S_k^{\rm two-sided}(t;M) = S_2^{\rm two-sided}(t;M-(k-2))e^{-(k-2)pt}, \qquad k=3, \dots, M,
\end{equation}
where $S_k^{\rm two-sided}$ is given by \eqref{eq:S_k-def}.
\end{lemma}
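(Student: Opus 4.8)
The plan is to mirror the proof of Lemma~\ref{lem:S_k^one-sided}, the essential difference being that in the two-sided model influence enters the block $\Omega = \{j+1, \dots, j+k\}$ from \emph{both} ends, so $\Omega$ must be reduced to its two boundary nodes rather than a single left endpoint. First, by translation invariance (Lemma~\ref{lem:Translation-Invariance} and the remark following it) $S_k^{\rm two-sided}$ is independent of $j$, so I may fix $\Omega = \{j+1, \dots, j+k\}$ and study $S_{\Omega}(t) = \text{Prob}\left(X_{j+1}(t)=0, \dots, X_{j+k}(t)=0\right)$.

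Next I would invoke the indifference principle (Lemma~\ref{lem:Indifference}) to delete every edge internal to $\Omega$. Each such edge, whether the right-going $m \to m+1$ or the left-going $m+1 \to m$ for $j+1 \le m \le j+k-1$, starts from a node of $\Omega$ and is therefore non-influential to $\Omega$ by case~1 of Definition~\ref{def:influential-edge}; deleting them leaves $S_{\Omega}$ unchanged. After this deletion, each interior node $j+2, \dots, j+k-1$ has lost both of its incoming neighbor-edges and is acted on only by the external parameter $p$, whereas the two boundary nodes retain their one surviving external incoming edge, $j \to j+1$ and $j+k+1 \to j+k$ respectively (both influential to $\Omega$, hence kept).

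Since the $k-2$ interior nodes are now decoupled from the rest of the network, their non-adoption events are mutually independent and independent of the boundary nodes, and each contributes a factor $\text{Prob}\left(X_m(t)=0\right) = S(t;M=1) = e^{-pt}$ by~\eqref{eq:S_M=1}. Hence
\[
S_k^{\rm two-sided}(t;M) = \text{Prob}\left(X_{j+1}(t)=0,\, X_{j+k}(t)=0\right)\, e^{-(k-2)pt}.
\]
It remains to identify the bracketed two-point probability. In the reduced network the nodes $j+1$ and $j+k$ are the two endpoints of a homogeneous two-sided line on the $M-k+2$ nodes $j+1, j, j-1, \dots, j+k+1, j+k$ (the long external arc together with the two boundary nodes, every neighbor-edge of weight $q/2$). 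I would then apply the indifference principle a second time, to the pair $\Omega' = \{j+1, j+k\}$: the two edges $j+1 \to j+k$ and $j+k \to j+1$ both start from $\Omega'$ and are non-influential to it, so I may add them with weight $q/2$ without changing the joint non-adoption probability. This closes the line into a homogeneous two-sided circle on $M-k+2 = M-(k-2)$ nodes in which $j+1$ and $j+k$ are adjacent, whence $\text{Prob}\left(X_{j+1}(t)=0, X_{j+k}(t)=0\right) = S_2^{\rm two-sided}(t; M-(k-2))$, and the claimed identity follows.

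The step most prone to error, and the one I would check carefully, is the edge bookkeeping: I must verify that after deleting the internal edges each boundary node carries exactly one surviving weight-$q/2$ external edge, so that the residual graph is genuinely a homogeneous two-sided line, and that closing it with the two weight-$q/2$ edges emanating from $\Omega'$ reproduces exactly the homogeneous two-sided circle on $M-(k-2)$ nodes. The conceptual point, absent in the one-sided case, is precisely that two-sided influence forces the reduction to the \emph{two} boundary nodes and that recovering $S_2$ of a smaller circle is legitimate only because the closing edges originate in $\Omega'$ itself.
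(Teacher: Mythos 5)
Your proof is correct and follows essentially the same route as the paper: translation invariance, then the indifference principle to delete the $k-1$ edges internal to $\Omega$, factor out $e^{-(k-2)pt}$ from the decoupled interior nodes, and identify the joint non-adoption probability of $\{j+1,j+k\}$ with $S_2^{\rm two-sided}(t;M-(k-2))$ by closing the residual arc into a circle. The only (cosmetic) difference is that you add the closing edge \circled{$j+1$} $\leftrightarrow$ \circled{$j+k$} in a second application of the indifference principle with $\Omega'=\{j+1,j+k\}$, whereas the paper adds it in the same single application with $\Omega$, since that edge starts in $\Omega$ and is therefore already non-influential to it.
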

\begin{proof}
By translation invariance, $S_k^{\rm two-sided}$ is independent of~$j$.
By the indifference principle, we can calculate $S_k^{\rm two-sided}(t;M)$ from the network
 illustrated in Figure~\ref{fig:indifference_s_k_two_sided}B. In that network, the states of the two-node set $\{j+1,j+k\}$ and the single-node sets 
$\{j+2\},\dots,\{j+k-1\}$ are independent, and so
\begin{eqnarray*}
 S_k^{\rm two-sided}(t)  &=& 
 \text{Prob}\left(X_{j+1}(t)=0,X_{j+k}(t)=0\right) \prod_{m=j+2}^{j+k-1} \text{Prob}\left(X_{m}(t)=0\right).
\end{eqnarray*}
Since $\text{Prob}\left(X_{j+1}(t)=0,X_{j+k}(t)=0\right) = S_2^{\rm two-sided}(t;M-k+2)$ and 
 $\text{Prob}\left(X_{m}(t)=0\right)=e^{-pt}$ for $m=j+2, \dots, j+k-1$, see~\eqref{eq:S_M=1}, the result follows.
\end{proof}

\begin{figure}[ht!]
\begin{center}
\scalebox{0.75}{\includegraphics{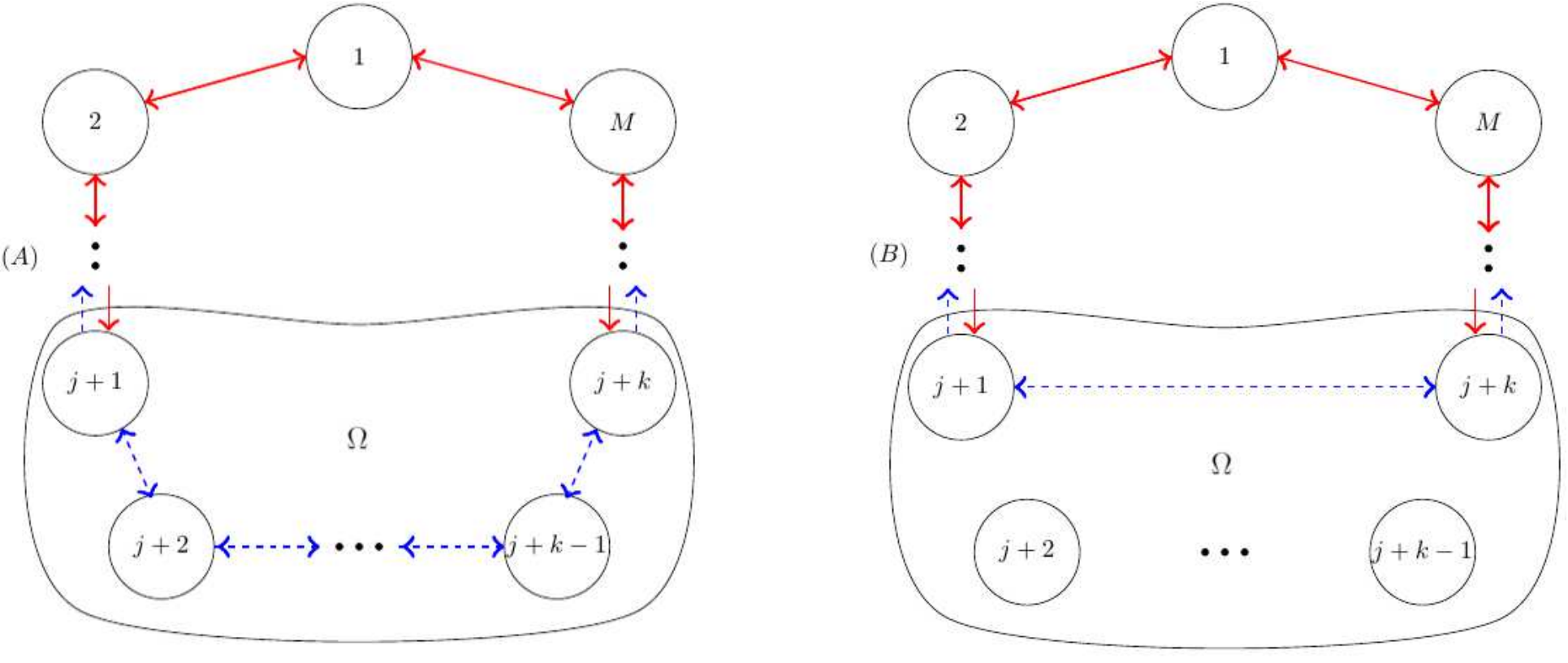}}
\caption{Equivalent networks for the calculation of $S_k^{\rm two-sided }(t;M)$, which are used in the proof of Lemma~\emph{\ref{lem:S_k^two-sided}}. Solid red and dashed blue arrows correspond to influential and non-influential edges to $\Omega = \{j+1,j+2,\dots,j+k\}$, respectively. \emph{(}A\emph{)}~Two-sided circle. \emph{(}B\emph{)} {The $k-1$ non-influential edges 
{ \protect\circled{$j+1$} $\leftrightarrow \dots \leftrightarrow$ \protect\circled{$j+k$}} are deleted. The non-influential edge \protect\circled{$j+1$} $\leftrightarrow$ \protect\circled{$j+k$}} is added.} 
\label{fig:indifference_s_k_two_sided}
\end{center}
\end{figure}



\Remark By Lemma~\ref{lem:S_k^one-sided}, $S_2^{\rm one-sided}(t;M-(k-2)) = S^{\rm one-sided}_{1}(t;M-(k-1))e^{-pt}$, and so
$$
S_k^{\rm one-sided}(t;M) = S_2^{\rm one-sided}(t;M-(k-2))e^{-(k-2)pt}.
$$
 This relation is also satisfied by $S_k^{\rm two-sided}$, see~\eqref{eq:indifference_s_k_two_sided}. 
Indeed, in~\cite{OR-10},  Fibich and Gibori showed that $S_k^{\rm one-sided}(t;M) \equiv S_k^{\rm two-sided}(t;M)$ for  $1\leq k \leq M$.
Therefore, we sometimes drop the superscripts {\em one-sided} and {\em two-sided}, and denote
$S_k^{\rm one-sided}$ and~$S_k^{\rm two-sided}$ by~$S_k$, see~\eqref{eq:sk_1s=sk_2s}. 

\section{Diffusion in 1D networks}
\label{sec:1D-networks}

In this section we use the indifference principle to explicitly calculate the diffusion in one-dimensional networks. 

\subsection{Periodic case (circle)}
\label{sec:circle}

 We begin with the one-sided circle:
\begin{lemma}[\cite{OR-10}]
   \label{lem:1D_Left_Explicit_prop}
   Let $q \not=p, 2p, \dots, (M-1)p$. Then the expected fraction of adopters on the one-sided circle with $M$ nodes, see~\eqref{GeneralAdoptionRates_Left},  is
    \begin{subequations}
    \label{1D_Left_Explicit-periodic}
  \begin{equation}
    \label{1D_Left_Explicit-periodic-A}
 f^{\rm one-sided}_{\rm circle}(t;p,q,M) =1-\sum_{k=1}^{M-1} c_k
 \frac{(-q)^{k-1}}{p^{k-1}(k-1)!} e^{(-kp-q)t}  -  
 \frac{(-q)^{M-1}}{\prod_{j=1}^{M-1}(jp-q)} e^{-Mpt},
 \end{equation}
 where  
 \begin{equation}
 \label{c_m-k}
 c_{M-k} =  1-\frac{q^k}{\prod_{j=1}^{k}(q-jp)}-\sum_{j=1}^{k-1}\frac{p^{j-k}(-q)^{-j+k}}{(k-j)!}c_{M-j}, \qquad k=1, \dots, M-1.
 \end{equation}
 \end{subequations}
\end{lemma}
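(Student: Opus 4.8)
The plan is to reduce the statement to an explicit computation of the single-node non-adoption probability $S_1(t;M)=1-f^{\rm one-sided}_{\rm circle}$, using translation invariance (Lemma~\ref{lem:Translation-Invariance}) and~\eqref{eq:s=1-f}, and then to solve the governing master equations. First I would derive evolution equations for the arc probabilities $S_k^{\rm one-sided}(t;M)=\text{Prob}(X_{j+1}(t)=0,\dots,X_{j+k}(t)=0)$. The key observation is that when all of $j+1,\dots,j+k$ are non-adopters, the only internal influence acting on the arc is from node $j$ onto $j+1$, since every other node of the arc is influenced solely by an interior node that is itself a non-adopter. Accounting for the external rate $p$ on each of the $k$ nodes and the internal rate $qX_j$ on $j+1$, and writing $\text{Prob}(X_j(t)=1,X_{j+1}(t)=0,\dots)=S_k-S_{k+1}$, I obtain the closed triangular system
\[
\frac{d}{dt}S_k = -(kp+q)S_k + q\,S_{k+1},\quad 1\le k\le M-1,\qquad \frac{d}{dt}S_M=-Mp\,S_M,
\]
with $S_k(0)=1$; the $k=M$ equation decouples because for the full circle node $j+1$ is influenced by $j=j+M$, which now lies inside the arc, so all internal influences vanish.

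Next I would integrate this system from the top down. The last equation gives $S_M=e^{-Mpt}$, matching the coefficient of $e^{-Mpt}$ in the claimed formula. To reach $S_1$ efficiently I would invoke the indifference principle in the form~\eqref{indifference_2}, which for $k=2$ reads $S_2(t;M)=S_1(t;M-1)e^{-pt}$; substituting this into the $k=1$ equation collapses the whole system into a single first-order recursion in the number of nodes,
\[
\frac{d}{dt}g_M = -(p+q)g_M + q\,e^{-pt}\,g_{M-1},\qquad g_M:=S_1(t;M),\qquad g_1=e^{-pt},
\]
where the base case $g_1=e^{-pt}$ is the single-node solution~\eqref{eq:S_M=1}. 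Solving by an integrating factor yields $g_M=e^{-(p+q)t}\big[1+q\int_0^t e^{qs}g_{M-1}(s)\,ds\big]$.

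I would then prove by induction on $M$ that $g_M$ is a linear combination of the exponentials $e^{-(kp+q)t}$ for $k=1,\dots,M-1$ together with $e^{-Mpt}$. Each inductive step requires integrating terms of the form $e^{(q-mp)s}$, which is elementary precisely when $q\neq mp$; this is exactly the genericity hypothesis $q\neq p,2p,\dots,(M-1)p$, and it also guarantees that the exponents $-(kp+q)$ and $-Mp$ are pairwise distinct, so no resonant $t\,e^{(\cdots)t}$ terms appear. Matching the coefficient of each $e^{-(kp+q)t}$ across the recursion, and fixing the diagonal coefficient through the initial condition $g_M(0)=1$, reproduces the triangular recursion~\eqref{c_m-k} for the $c_k$; a telescoping product then gives the closed coefficient $(-q)^{M-1}/\prod_{j=1}^{M-1}(jp-q)$ of $e^{-Mpt}$, using that $\prod_{l=1}^{M-1}\big(q-(M-l)p\big)=(-1)^{M-1}\prod_{j=1}^{M-1}(jp-q)$. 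Finally $f^{\rm one-sided}_{\rm circle}=1-S_1=1-g_M$ gives~\eqref{1D_Left_Explicit-periodic-A}.

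The conceptual content lies in the first two paragraphs: the derivation of the triangular master system and its reduction, via the indifference principle, to a recursion in $M$. The main obstacle is the coefficient bookkeeping in the final step — verifying that matching exponentials exactly produces the stated recursion~\eqref{c_m-k}, and keeping careful track of signs and factorials in the $(-q)^{k-1}/\big(p^{k-1}(k-1)!\big)$ prefactors. This is routine but error-prone, and the genericity condition must be invoked at each integration to exclude the degenerate, resonant cases.
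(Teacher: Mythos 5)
Your proposal is correct and follows essentially the same route as the paper's proof: you close the $S_1$ evolution equation via the indifference-principle relation $S_2(t;M)=S_1(t;M-1)e^{-pt}$, solve the resulting first-order recursion in $M$ by an integrating factor, and recover the coefficients~\eqref{c_m-k} by induction and exponential matching, exactly as in the paper and its Appendix~\ref{app:1D_Left_Explicit_prop}. The only cosmetic differences are that you re-derive the triangular system~\eqref{eq:sk_deriviative} (which the paper cites from~\cite{OR-10}), and your side remark that $S_M=e^{-Mpt}$ ``matches the coefficient of $e^{-Mpt}$'' is loose (that coefficient is $(-q)^{M-1}/\prod_{j=1}^{M-1}(jp-q)$, not $1$), but neither affects the argument.
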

\begin{proof}
This result was originally proved in~\cite{OR-10}. Here we provide a simpler proof, which illustrates the power and beauty of the indifference principle.
 Let $S_k(t;M)$
denote the probability that $k$ adjacent nodes remained non-adopters by time~$t$ in a circle with $M$~nodes, 
see \eqref{eq:S_k-def} and~\eqref{eq:sk_1s=sk_2s}.
 In~\cite{OR-10}, it was shown that
$f = 1-S_1$, where $S_1$ satisfies 
\begin{equation}
\label{evolution_eq}
S_1'(t;M) = -(p+q)S_1(t;M)+qS_2(t;M), \qquad S_1(0) = 1.
\end{equation}
Thus, $S_1'$ depends on $S_2$. Similarly, 
$S_2'$ depends on~$S_3$, etc. Therefore, to close the system in~\cite{OR-10}, Fibich and Gibori derived 
the following system of ODEs for $\{S_k(t; M)\}_{k=1}^M$:\footnote{This system holds for both one-sided and two-sided 
diffusion~\cite{OR-10}.}
\begin{subequations}
\label{eq:sk_deriviative}
\begin{flalign}
 S_k'(t;M) = (-kp-q)S_k(t;M) +qS_{k+1}(t;M), \quad S_k(0) = 1, \qquad  k=1,\dots, M-1,
\end{flalign}
\begin{flalign}
 S_M'(t;M) =&-MpS_M(t;M), \qquad S_M(0) = 1.
\label{eq:sM_deriviative}
\end{flalign}
\end{subequations}

Here we take a different approach, and 
close equation~\eqref{evolution_eq} using the relation 
\begin{equation}
\label{eq:indifference_2}
S_2(t;M) = S(t;M-1)e^{-pt},
\end{equation}
 see~\eqref{indifference_2} and \eqref{eq:s1_1s=s1_2s}, which was derived using the indifference principle.
Combining \eqref{evolution_eq} and \eqref{eq:indifference_2} gives 
\begin{equation}
\label{eq:final_form-ODE}
S'(t;M) +(p+q)S(t;M)=qe^{-pt}S(t;M-1), \qquad S(0;M) = 1, \qquad M=2,3, \dots 
\end{equation}
 The solution of this first-order linear ODE reads
\begin{equation}
\label{final_form}
S(t;M) = e^{-(p+q)t}+qe^{-(p+q)t}\int_0^t\left[e^{q\tau}S(\tau;M-1)\right]d\tau  , \qquad M=2,3, \dots 
\end{equation}
This recursion relation expresses $S(t;M)$ in terms of $S(t;M-1)$. For example, substituting $S(t;M=1)=e^{-pt}$, see~\eqref{eq:S_M=1}, in~\eqref{final_form} yields for $q \not=p$, that
$
S(t;M=2) = \left(1-\frac{q}{q-p}\right)e^{-(p+q)t}+\frac{q}{q-p}e^{-2pt}.
$
This, in turn, can be substituted in~\eqref{final_form}, yielding for $q \not=p, 2p$, that
$
S(t;M=3) = \left(1-\frac{q}{q-p}-\frac{q^2}{(q-p)(q-2p)}\right)e^{-(p+q)t}+\frac{q}{q-p}e^{-(2p+q)t}+\frac{q^2}{(q-p)(q-2p)}e^{-3pt}.
$
More generally, it follows by induction from~\eqref{final_form} that for $q \not=p, 2p, \dots,  (M-1)p$,
\begin{equation}
   \label{eq:SM}
 S(t;M)=\sum_{k=1}^{M-1} A_{k,M}e^{-(kp+q)t}+B_{M}e^{-Mpt}, \qquad M=1,2, \dots,  
\end{equation}
where $\{ A_{k,M} \}_{k=1}^{M-1}$ and $B_{M}$ are constants that depend on $p$, $q$, and~$M$. Substituting~\eqref{eq:SM} 
in both sides of~\eqref{final_form}, integrating the right-hand side terms, and equating the coefficients of the exponentials on both sides, 
gives the result (see Appendix~\ref{app:1D_Left_Explicit_prop}).
\end{proof}

The explicit expression~\eqref{1D_Left_Explicit-periodic} for the adoption curve simplifies as $M \to \infty$: 
\begin{lemma}[\cite{OR-10}]
   \label{lem:1D_Left_Explicit_prop-M=infinity}
\begin{equation}
    \label{eq:f_1D}
\lim_{M \to \infty} f^{\rm one-sided}_{\rm circle}(t;p,q,M)=f_{\rm 1D}(t;p,q), \qquad f_{\rm 1D}(t;p,q):=  1-
e^{-(p+q)t +\frac{q}{p} (1-e^{-pt})}.
\end{equation}
\end{lemma}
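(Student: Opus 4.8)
The plan is to pass to the limit directly in the recursion~\eqref{final_form} for $S(t;M)$, rather than in the explicit (but unwieldy) formula~\eqref{1D_Left_Explicit-periodic}. Recall that by~\eqref{eq:s=1-f} we have $f^{\rm one-sided}_{\rm circle}(t;p,q,M)=1-S(t;M)$, so it suffices to show that $S(t;M)$ converges as $M\to\infty$ to $S_\infty(t):=e^{-(p+q)t+\frac{q}{p}(1-e^{-pt})}$. The first observation I would record is that the right-hand side of~\eqref{final_form} defines an order-preserving operator: if $0\le g_1\le g_2$ pointwise on $[0,t]$, then substituting $g_1$ or $g_2$ for $S(\cdot;M-1)$ preserves the inequality, since the kernel $qe^{-(p+q)t}e^{q\tau}$ is nonnegative.

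The first key step is to establish monotonicity in $M$. Using $S(t;1)=e^{-pt}$ from~\eqref{eq:S_M=1} as the base case, a direct computation of $S(t;2)$ from~\eqref{final_form} gives $S(t;2)<S(t;1)$ for $t>0$: the difference factors as $e^{-pt}$ times a quantity whose sign is controlled by the monotonicity of the map $t\mapsto p(1-e^{-qt})-q(1-e^{-pt})$. Applying the order-preserving operator repeatedly then yields, by induction, $S(t;M+1)\le S(t;M)$ for all $M$ and all $t\ge 0$. Since each $S(t;M)$ is a probability and hence lies in $[0,1]$, the sequence is monotone decreasing and bounded below, so it converges pointwise to a limit $S_\infty(t)\in[0,1]$.

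The second step is to identify $S_\infty$. Because $0\le S(\tau;M-1)\le 1$ uniformly on $[0,t]$, dominated (or monotone) convergence lets me pass the limit inside the integral in~\eqref{final_form}, yielding the fixed-point equation $S_\infty(t)=e^{-(p+q)t}+qe^{-(p+q)t}\int_0^t e^{q\tau}S_\infty(\tau)\,d\tau$. Differentiating, this is equivalent to the linear ODE $S_\infty'(t)=\bigl(-(p+q)+qe^{-pt}\bigr)S_\infty(t)$ with $S_\infty(0)=1$, i.e.\ the formal $M\to\infty$ limit of~\eqref{eq:final_form-ODE}. Separating variables and integrating gives $\ln S_\infty(t)=-(p+q)t+\frac{q}{p}(1-e^{-pt})$, so $S_\infty(t)=e^{-(p+q)t+\frac{q}{p}(1-e^{-pt})}$, and therefore $\lim_{M\to\infty}f^{\rm one-sided}_{\rm circle}=1-S_\infty=f_{\rm 1D}$ as in~\eqref{eq:f_1D}.

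The main obstacle is the convergence/interchange step: without monotonicity one cannot immediately justify passing to the limit under the integral. The cleanest route is the operator-monotonicity argument above, whose only genuinely computational ingredient is the base-case inequality $S(t;2)<S(t;1)$; once monotonicity is in hand, boundedness by $1$ makes both the existence of $S_\infty$ and the interchange automatic, and solving the resulting first-order ODE is routine.
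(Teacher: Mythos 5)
Your proof is correct, and it arrives at the same limiting linear ODE as the paper, but it is genuinely more complete than the paper's argument. The paper's proof is a one-line formal passage: it ``lets $M\to\infty$'' in the ODE recursion \eqref{eq:final_form-ODE}, implicitly assuming that $S(t;M)$ converges, that $S(t;M-1)$ has the same limit, and that derivatives pass to the limit as well; none of this is justified there. You instead work with the integral form \eqref{final_form} and supply exactly the missing ingredients: monotonicity of $S(t;M)$ in $M$ via the order-preserving Volterra operator (your base case $S(t;2)<S(t;1)$ checks out: the difference equals $\frac{e^{-pt}}{q-p}\left[p(1-e^{-qt})-q(1-e^{-pt})\right]$, and the bracket has the sign opposite to $q-p$, with $q=p$ handled by continuity or directly), hence pointwise convergence by boundedness in $[0,1]$, then dominated convergence to obtain the fixed-point equation, and differentiation to the ODE, which is solvable since $S_\infty(t)\ge e^{-(p+q)t}>0$. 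Two side remarks: (i) the monotonicity in $M$ that you establish is precisely the paper's Lemma~\ref{lem:s_decreasing} (the statement $\alpha(t,k)>0$), which the paper proves only later, in Appendix~\ref{app:end_nodes2}, by an induction using a Gronwall-type lemma and the indifference principle; your operator-positivity argument is a cleaner, self-contained route to that fact; (ii) your argument also bypasses the non-resonance assumption $q\neq p,2p,\dots,(M-1)p$ needed for the explicit formula \eqref{1D_Left_Explicit-periodic}, since the recursion \eqref{final_form} holds for all $p,q>0$. What the paper's version buys is brevity; what yours buys is an actual proof of the convergence/interchange step, which is the gap the paper glosses over.
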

\begin{proof}
This result was originally proved in~\cite{OR-10}. Here we use the indifference principle to provide a simpler proof. Letting $M \to \infty$ in~\eqref{eq:final_form-ODE} gives
$$
S'(t;M=\infty) +(p+q)S(t;M=\infty)=qe^{-pt}S(t;M=\infty), \qquad S(0;M=\infty) = 1.
$$
Solving this linear first-order ODE and substituting $f=1-S$ gives~\eqref{eq:f_1D}. 
\end{proof}

%

Next, we consider the two-sided circle case.

\begin{lemma}[\cite{OR-10}]
  \label{lem:circle_two-sided}
The expected fraction of adopters on the two-sided circle with $M$~nodes, see~\eqref{GeneralAdoptionRates_Left_Right}, is identical to that on the one-sided circle, i.e., 
\begin{equation}
\label{eq:f_2sided_circle=f_1sided_circle}
f^{\rm two-sided}_{\rm circle}(t;p,q,M) =
f^{\rm one-sided}_{\rm circle}(t;p,q,M),  
\end{equation}
where $f^{\rm one-sided}_{\rm circle}$ is given by~\eqref{1D_Left_Explicit-periodic}. In particular, 
\begin{equation}
\label{eq:f_2sided_circle=f_1sided_circle_M2infty}
  \lim_{M \to \infty}f^{\rm two-sided}_{\rm circle}(t;p,q,M) = f_{\rm 1D}(t;p,q).
\end{equation}
\end{lemma}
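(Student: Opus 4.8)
The plan is to reduce the lemma to the single assertion that the non-adoption probabilities coincide for the two models, i.e.\ that $S_k^{\text{one-sided}}(t;M)=S_k^{\text{two-sided}}(t;M)$ for every $k$. Once this is established (in particular for $k=1$), the statement follows immediately: by translation invariance (Lemma~\ref{lem:Translation-Invariance}) and~\eqref{eq:s=1-f}, $f^{\text{two-sided}}_{\text{circle}}=1-S_1^{\text{two-sided}}=1-S_1^{\text{one-sided}}=f^{\text{one-sided}}_{\text{circle}}$, and the limiting identity~\eqref{eq:f_2sided_circle=f_1sided_circle_M2infty} is then inherited from Lemma~\ref{lem:1D_Left_Explicit_prop-M=infinity}. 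Since the one-sided quantities are already known to satisfy the closed system~\eqref{eq:sk_deriviative}, it suffices to show that the two-sided quantities satisfy the \emph{same} initial-value problem.

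First I would derive the evolution equations for the two-sided block probabilities directly from the master equation. Fix a block $\Omega=\{j+1,\dots,j+k\}$ and differentiate $S_k^{\text{two-sided}}(t;M)=\text{Prob}(X_{j+1}=\cdots=X_{j+k}=0)$ using the adoption rates~\eqref{GeneralAdoptionRates_Left_Right}. Conditioned on all of $\Omega$ being silent, both neighbors of any \emph{interior} node of $\Omega$ are themselves non-adopters, so interior nodes lose probability mass only through the external rate $p$; the sole word-of-mouth leakage comes from the two end nodes $j+1$ and $j+k$, each having a single outside neighbor ($j$ and $j+k+1$) that contributes at rate $q/2$. Using $\mathbb{E}[X_j\cdot\mathbbm{1}_{\{X_{j+1}=\cdots=X_{j+k}=0\}}]=S_k-S_{k+1}$ (and the mirror identity on the right), the two boundary terms combine as $2\cdot\frac{q}{2}(S_k-S_{k+1})=q(S_k-S_{k+1})$. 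This yields exactly $\frac{d}{dt}S_k^{\text{two-sided}}=(-kp-q)S_k^{\text{two-sided}}+q\,S_{k+1}^{\text{two-sided}}$ for $1\le k\le M-1$, together with $\frac{d}{dt}S_M^{\text{two-sided}}=-Mp\,S_M^{\text{two-sided}}$ for the full block, and $S_k(0)=1$ throughout. This is precisely the system~\eqref{eq:sk_deriviative} obeyed by the one-sided quantities.

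Since both families solve the same finite, linear, triangular (solvable from $k=M$ downward) initial-value problem, uniqueness of solutions forces $S_k^{\text{two-sided}}\equiv S_k^{\text{one-sided}}$ for all $k$, and the reduction in the first paragraph then gives the lemma.

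The step I expect to be delicate is the coefficient bookkeeping in the ODE derivation: one must verify that the two one-sided boundary contributions at rate $q/2$ in the two-sided model reproduce the single boundary contribution at rate $q$ of the one-sided block. This coincidence is the entire content of the lemma and is easy to spoil by a factor of two; note that the case $k=1$ already fits the general formula, since the lone node has \emph{two} outside neighbors contributing $2\cdot\frac{q}{2}=q$ (recovering~\eqref{evolution_eq}), whereas only the closing case $k=M$ (no outside neighbors) needs separate handling. I would also remark that, unlike the one-sided case, the two-sided $S_2$ cannot be collapsed to a single-node quantity by the indifference principle alone: deleting the edges internal to $\Omega$ leaves the remaining arc attached to $\Omega$ at \emph{both} of its ends, so the arc's interior edges remain influential to $\Omega$. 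This is exactly why the full ODE system, rather than a one-step indifference reduction, is the natural engine for this proof.
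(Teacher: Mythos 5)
Your proposal is correct, but it is not the route the paper takes --- it is essentially a reconstruction of the original argument of \cite{OR-10}, which this paper explicitly sets out to replace with a proof showcasing the indifference principle. You derive the full hierarchy \eqref{eq:sk_deriviative} for the two-sided block probabilities directly from the master equation (the key bookkeeping being that the two outside neighbors, each at rate $q/2$, contribute $2\cdot\frac{q}{2}\left(S_k-S_{k+1}\right)=q\left(S_k-S_{k+1}\right)$ by translation invariance), observe that the one-sided quantities satisfy the identical finite triangular linear system with the same initial data, and conclude by ODE uniqueness that $S_k^{\rm one-sided}\equiv S_k^{\rm two-sided}$ for every $k$; the factor-of-two ``delicate step'' you flag is precisely the content of the footnote the paper attaches to \eqref{eq:sk_deriviative}, where this system is cited from \cite{OR-10} as valid in both cases. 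The paper instead takes only the two bottom equations \eqref{evolution_eq} and \eqref{eq:s2} as known from \cite{OR-10}, and closes the system at the level of $S_2$ using the indifference principle, $S_3^{\rm two-sided}(t;M)=S_2^{\rm two-sided}(t;M-1)e^{-pt}$ (Lemma~\ref{lem:S_k^two-sided}), which converts the hierarchy in $k$ into a recursion in $M$, namely \eqref{eq:S_2'}; the substitution $S_2(t;M)=e^{-pt}g(t;M-1)$ then shows $g$ obeys the one-sided recursion \eqref{eq:final_form-ODE} with the same seed, so $S_2^{\rm two-sided}=S_2^{\rm one-sided}$, and the shared equation \eqref{evolution_eq} for $S_1$ finishes. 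What your route buys: it is self-contained (no indifference principle is needed), and it yields equality of all block probabilities $S_k$ in one stroke rather than only $S_1$ and $S_2$. What the paper's route buys: it avoids re-deriving the whole hierarchy for the two-sided model, and it demonstrates the methodological point of the paper --- that the indifference principle can close the moment equations --- which is why the authors present it even though the \cite{OR-10}-style proof was available. Your closing remark is also accurate and consistent with the paper: for the two-sided circle the indifference principle can only collapse $S_k$ down to $S_2$ (never to $S_1$), which is exactly why the paper's closure happens at $k=2$.
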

\begin{proof}
This result was originally proved in~\cite{OR-10}. Here we again provide a different proof which makes use of the indifference principle. We recall that 
in~\cite{OR-10} it was shown (for both the one-sided and two-sided cases) that $S'(t;M)$ is given by~\eqref{evolution_eq}, and that
\begin{equation}
\label{eq:s2}
S_2'(t;M) = -(2p+q)S_2(t;M)+qS_3(t;M).
\end{equation}
 To close the ODEs system in the two-sided case, we use the 
indifference principle to get 
$
S_3(t;M) = S_2(t;M-1)e^{-pt},
$
 see~\eqref{eq:indifference_s_k_two_sided}. Plugging this in~\eqref{eq:s2} yields 
\begin{equation}
 \label{eq:S_2'}
S_2'(t;M) = -(2p+q)S_2(t;M)+qS_2(t;M-1)e^{-pt},\qquad S_2(0;M) = 1 
\end{equation}
for $M\geq 3$. In addition, the equation for $S_2(t;M=2)$ reads 
$\frac{\text{d}}{\text{dt}}S_2(t;M=2)=-2pS_2(t;M=2)$, subject to $S_2(0,M=2)=1$, and so
$$
S_2(t;M=2)=e^{-2pt}.
$$
Therefore, if we substitute $S_2(t;M)=e^{-pt}g(t;M-1)$ in~\eqref{eq:S_2'}, we get that $g(t;M)$
satisfies the same recursion relation as $S^{\rm one-sided}(t;M)$, see~\eqref{eq:final_form-ODE}. 
In addition,
$$
 g(t;M=1)=e^{pt} S_2(t;M=2)=  e^{-pt}=S^{\rm one-sided}(t;M=1).
$$
 Therefore, it follows that  
 $g(t;M) = S^{\rm one-sided}(t;M)$.  Hence, 
\begin{equation}
   \label{eq:S2_two_sided=S2_one_sided}
S^{\rm two-sided}_2(t;M)= e^{-pt}g(t;M-1) = e^{-pt}S^{\rm one-sided}(t;M-1) = 
S^{\rm one-sided}_2(t;M),
\end{equation}
where in the last equality we used~\eqref{indifference_2}.
Since for both the one-sided and two-sided cases, $S'(t;M)$ is given by~\eqref{evolution_eq}, 
it follows from~\eqref{eq:S2_two_sided=S2_one_sided} that $S^{\rm two-sided}(t;M)=S^{\rm one-sided}(t;M)$, and so \eqref{eq:f_2sided_circle=f_1sided_circle} follows. The limit~\eqref{eq:f_2sided_circle=f_1sided_circle_M2infty} follows from~\eqref{eq:f_2sided_circle=f_1sided_circle} and~\eqref{eq:f_1D}.
\end{proof}

Thus, the aggregate diffusions on the one-sided and two-sided circle are identical, as is confirmed numerically in 
Figure~\ref{fig:cellular_1D_one_versus_two_sided_main}A.
 Therefore, from now on we drop the superscripts and denote 
\begin{equation}
\label{eq:f=f1s=f2s}
f^{\rm one-sided}_{\rm circle}=f^{\rm two-sided}_{\rm circle}=f_{\rm circle}.
\end{equation} 

\begin{figure}[ht!]
\begin{center}
\scalebox{0.7}{\includegraphics{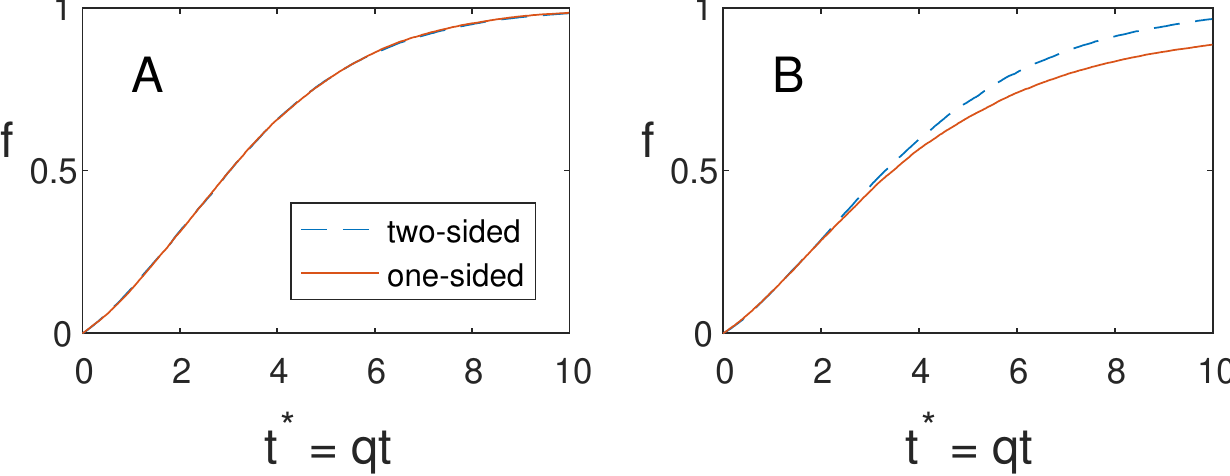}}
\caption{Fractional adoption on a one-dimensional network with $M = 6$ nodes, averaged over $\emph{4000}$ simulations, under two-sided \emph{(}dashed blue line\emph{)} and one-sided \emph{(}solid red line\emph{)} 
diffusion. Here $p=0.01$ and $q=0.1$. \emph{(}A\emph{)} Periodic boundary conditions \emph{(}circle\emph{)}. \emph{(}B\emph{)} Non-periodic boundary conditions \emph{(}line\emph{)}. }
\label{fig:cellular_1D_one_versus_two_sided_main}
\end{center}
\end{figure}

\subsection{Non-periodic case (line)}
\label{sec:line}

We now use the indifference principle to derive an explicit expression for the adoption curve
on the  one-sided line:
\begin{lemma}
\label{lem:f_one-sided_line}
The expected fraction of adopters on the one-sided line with $M$~nodes, see~\eqref{GeneralAdoptionRates_Left-line}, is given by 
$$
f^{\rm one-sided}_{\rm line}(t;p,q,M) :=
      \frac1M \sum_{j=1}^M f_j^{\rm one-sided}(t;p,q,M)=\frac1M \sum_{j=1}^M f_{\rm circle}(t;p,q,j),  
$$
 where $f_j^{\rm one-sided}(t;p,q,M)$ is the adoption probability of node~$j$ in a one-sided line with $M$ nodes, and $f_{\rm circle}(t;p,q,j)$ is given by~\eqref{1D_Left_Explicit-periodic} with $M=j$.  
\end{lemma}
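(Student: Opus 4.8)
The plan is to reduce everything to the single-node identity
$f_j^{\rm one-sided}(t;p,q,M) = f_{\rm circle}(t;p,q,j)$, valid for each $j=1,\dots,M$. Once this is established, summing over $j$, dividing by $M$, and invoking the definition $f = \frac1M\sum_j \text{Prob}(X_j(t)=1)$ from~\eqref{eq:f_sum} immediately gives the stated formula. Writing $S_{\{j\}}(t) = \text{Prob}(X_j(t)=0)$ for the probability that node $j$ is a non-adopter, it is equivalent to prove $S_{\{j\}}^{\rm line}(t;M) = 1 - f_{\rm circle}(t;p,q,j)$, and the whole argument is a double application of the indifference principle (Lemma~\ref{lem:Indifference}).

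First I would classify the edges influential to the singleton $\Omega = \{j\}$ in the one-sided line. Since every edge points rightward, $\circled{i}\to\circled{i+1}$, the only influential edges to $\{j\}$ are those forming the chain $\circled{1}\to\circled{2}\to\cdots\to\circled{j}$: every edge leaving a node $i \ge j$ is non-influential by Definition~\ref{def:influential-edge}, namely case 1 when $i=j$ (its tail lies in $\Omega$) and case 2 when $i>j$ (no directed path leads from $i+1$ back to $j$). By the indifference principle, deleting all these non-influential edges leaves $S_{\{j\}}$ unchanged, so $S_{\{j\}}^{\rm line}(t;M) = S_{\{j\}}^{\rm line}(t;j)$; that is, the non-adoption probability of node $j$ is insensitive to the nodes to its right.

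Next I would pass from the one-sided line with $j$ nodes to the one-sided circle with $j$ nodes. These two networks differ only in the wrap-around edge $\circled{j}\to\circled{1}$ that enforces the periodicity $X_0 := X_j$ of~\eqref{eq:periodicity}. For $\Omega=\{j\}$ this edge has its tail in $\Omega$, so it is non-influential by case 1 of Definition~\ref{def:influential-edge}. Applying the indifference principle a second time, deleting it leaves node 1 receiving external influence $p$ only, which is precisely the line boundary condition $X_0\equiv 0$ of~\eqref{eq:periodicity3}; hence $S_{\{j\}}^{\rm circle}(t;j)=S_{\{j\}}^{\rm line}(t;j)$. Finally, translation invariance (Lemma~\ref{lem:Translation-Invariance}) gives $S_{\{j\}}^{\rm circle}(t;j)=1-f_{\rm circle}(t;p,q,j)$. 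Chaining these three equalities yields the key identity.

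All the computations here are bookkeeping over the directed-path conditions of Definition~\ref{def:influential-edge}, so the proof is short. The one step demanding care is the second application of the indifference principle: I must verify that deleting the wrap-around edge of the circle reproduces exactly the line's Dirichlet condition $X_0\equiv 0$ (node 1 losing all internal influence), rather than some other boundary behavior. The degenerate case $j=1$, where the wrap-around edge is a self-loop $\circled{1}\to\circled{1}$, is consistent with this picture and reduces to $S(t;M{=}1)=e^{-pt}$ via~\eqref{eq:S_M=1}.
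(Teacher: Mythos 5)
Your proof is correct and is essentially the paper's own argument: the paper likewise applies the indifference principle to $\Omega=\{j\}$, replacing the non-influential edge \circled{$j$} $\to$ \circled{$j+1$} by the non-influential edge \circled{$j$} $\to$ \circled{$1$}, so that node $j$ sits on a one-sided circle of $j$ nodes, giving $\text{Prob}(X_j(t)=0)=S^{\rm one-sided}(t;p,q,M=j)=1-f_{\rm circle}(t;p,q,j)$ and then summing via~\eqref{eq:f_sum}. Your two-step bookkeeping (delete all non-influential edges, then add the wrap-around edge) is just a slightly more explicit rendering of the same single idea.
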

\begin{proof}
By the indifference principle, the probability that node~$j$ did not adopt by time~$t$ is the same in the two networks shown in Figure~\ref{fig:indifference_line}. By Figure~\ref{fig:indifference_line}B,   
\begin{equation*}
 \text{Prob}(X_j(t) = 0) = S^{\rm one-sided}(t;p,q,M=j),
\end{equation*}
or equivalently
$$
\text{Prob}(X_j(t) = 1) = 1-S^{\text{\rm one-sided}}(t;p,q,M=j) = f_{\rm circle}(t;p,q,j).
$$
Hence, the result follows from~\eqref{eq:f_sum}.
\end{proof}
\begin{figure}[ht!]
\begin{center}
\scalebox{0.7}{\includegraphics{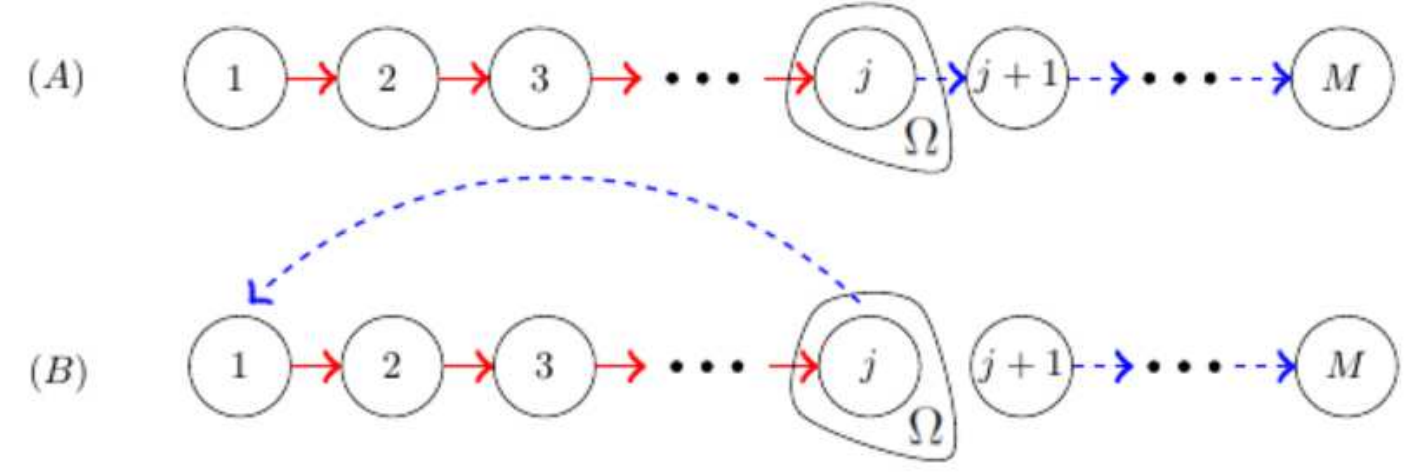}}
\caption{Equivalent networks for the calculation of $\emph{\text{Prob}}(X_j(t) = 0)$ on the one-sided line,  see proof of Lemma~\emph{\ref{lem:f_one-sided_line}}. Solid red and dashed blue arrows are influential and 
non-influential edges to~$\Omega = \{j\}$, respectively. \emph{(}A\emph{)} One-sided line. \emph{(}B\emph{)} The non-influential edge \protect\circled{$j$} $\to$ \protect\circled{$j+1$} is ``replaced'' with the non-influential edge 
\protect\circled{$j$} $\to$ \protect\circled{$1$}. 
}
\label{fig:indifference_line}
\end{center}
\end{figure}

One-sided diffusion on a line is slower than on a circle. This difference, however, disappears as $M\to \infty$:
\begin{lemma}
For $t,p,q,M>0$, 
\begin{equation}
  \label{eq:f_one-sided_line<circle}
f^{\rm one-sided}_{\rm line}(t;p,q,M) <
      f_{\rm circle}(t;p,q,M), 
\end{equation}
but
\begin{equation}
  \label{eq:f_one-sided_line-M=infinity}
\lim_{M \to \infty} f^{\rm one-sided}_{\rm line}(t;p,q,M) =
      \lim_{M \to \infty} f_{\rm circle}(t;p,q,M) = f_{\rm 1D}(t).  
\end{equation}
\end{lemma}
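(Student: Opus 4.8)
The plan is to reduce both assertions to the single representation of Lemma~\ref{lem:f_one-sided_line}, namely
\[
f^{\rm one-sided}_{\rm line}(t;p,q,M) = \frac1M \sum_{j=1}^M f_{\rm circle}(t;p,q,j),
\]
which exhibits the line adoption as the running (Ces\`aro) average of the circle adoption over the smaller circle sizes $j=1,\dots,M$. Once this is in place, everything hinges on one monotonicity fact: $f_{\rm circle}(t;p,q,j)$ is nondecreasing in~$j$ for $t>0$, with a strict increase between $j=1$ and $j=2$.

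To establish the monotonicity I would apply the dominance principle at the level of a single node. Recall from the proof of Lemma~\ref{lem:f_one-sided_line} that on a one-sided line the non-adoption probability of the terminal node of a directed chain of length $j$ equals $S(t;M{=}j)=1-f_{\rm circle}(t;p,q,j)$. I compare two networks on the common node set $\{0,1,\dots,j\}$: network $A$ consisting of the chain $1\to 2\to\cdots\to j$ together with an isolated node~$0$ (adding an isolated node changes no adoption probability), and network $B$ obtained from $A$ by inserting the single edge \circled{$0$} $\to$ \circled{$1$} of weight~$q$. Since $A\preceq B$, the coupling inequality~\eqref{eq:reduction} of Lemma~\ref{lem:dominance-principle} gives $\mathbb{E}[X_j^A(t)]\le\mathbb{E}[X_j^B(t)]$ for the terminal node, i.e.\ $1-S(t;j)\le 1-S(t;j{+}1)$, hence $f_{\rm circle}(t;p,q,j)\le f_{\rm circle}(t;p,q,j{+}1)$. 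The strict gap $f_{\rm circle}(t;p,q,1)<f_{\rm circle}(t;p,q,2)$ follows from the strict dominance principle (Corollary~\ref{cor:f_a<f_B}) applied to two isolated nodes versus the $2$-circle, the latter being obtained by adding internal edges of positive weight.

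With monotonicity in hand the first claim is immediate: writing
\[
f_{\rm circle}(t;p,q,M) - f^{\rm one-sided}_{\rm line}(t;p,q,M) = \frac1M\sum_{j=1}^M\bigl[f_{\rm circle}(t;p,q,M)-f_{\rm circle}(t;p,q,j)\bigr],
\]
every summand is nonnegative and the $j=1$ summand is strictly positive for $M\ge 2$, which yields~\eqref{eq:f_one-sided_line<circle} (for $M=1$ the line and circle degenerate to a single pure-external node, so the two quantities coincide). For~\eqref{eq:f_one-sided_line-M=infinity} I would invoke the Ces\`aro mean theorem: by Lemma~\ref{lem:1D_Left_Explicit_prop-M=infinity} the sequence $a_j:=f_{\rm circle}(t;p,q,j)$ converges to $f_{\rm 1D}(t)$ as $j\to\infty$ for each fixed $t,p,q>0$, and the running average of a convergent sequence converges to the same limit; hence $\lim_{M\to\infty} f^{\rm one-sided}_{\rm line}(t;p,q,M)=f_{\rm 1D}(t)$, matching $\lim_{M\to\infty} f_{\rm circle}(t;p,q,M)=f_{\rm 1D}(t)$ from the same lemma.

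The only genuinely delicate point is the node-level monotonicity of $f_{\rm circle}$ in~$M$: the dominance principle as stated compares networks sharing a node set and delivers strictness only for the aggregate, so the care lies in (i)~padding the shorter chain with an isolated node so the two networks share the set $\{0,\dots,j\}$, and (ii)~extracting the needed per-node inequality directly from the coupling~\eqref{eq:reduction} rather than from the aggregate statement. Once that comparison is set up, the averaging inequality and the Ces\`aro limit are routine.
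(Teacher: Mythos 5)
Your proposal is correct, but for the strict inequality \eqref{eq:f_one-sided_line<circle} it takes a genuinely different (and longer) route than the paper. The paper's proof is a one-liner: the one-sided circle is precisely the one-sided line plus the single closing edge from node $M$ to node $1$, so Corollary~\ref{cor:f_a<f_B} immediately gives the strict aggregate inequality, with no need for the representation of Lemma~\ref{lem:f_one-sided_line} or for any monotonicity in circle size. You instead route through that representation together with the monotonicity of $f_{\rm circle}(t;p,q,j)$ in $j$, which you prove by padding the length-$j$ chain with an isolated node and extracting the per-node inequality from the coupling \eqref{eq:reduction}; this is sound (equivalently, you could cite the generalized dominance principle, Lemma~\ref{lem:dominance-principle-generalized}, with $\Omega=\{j\}$, rather than re-deriving the node-level bound). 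The detour buys a real byproduct: the monotonicity you establish is the (non-strict) version of the paper's Lemma~\ref{lem:s_decreasing}, which the paper only obtains in Appendix~\ref{app:end_nodes2} by an induction combining the ODE recursion \eqref{eq:sk_deriviative}, the indifference identity \eqref{indifference_2}, and a Gronwall-type lemma; your coupling argument gets that statement in a few lines. You are also more careful than the paper about the degenerate case $M=1$, where line and circle coincide and \eqref{eq:f_one-sided_line<circle} holds only as an equality, so strictness genuinely requires $M\ge 2$ under either proof. Your treatment of the limit \eqref{eq:f_one-sided_line-M=infinity} coincides with the paper's: the Ces\`aro mean theorem applied to $a_j=f_{\rm circle}(t;p,q,j)\to f_{\rm 1D}(t)$, combining Lemmas~\ref{lem:1D_Left_Explicit_prop-M=infinity} and~\ref{lem:f_one-sided_line}.
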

\begin{proof}
Relation~\eqref{eq:f_one-sided_line<circle} is a consequence of the dominance principle, see Corollary~\ref{cor:f_a<f_B}.
We note that if $ a_\infty:=\lim_{n \to \infty} a_n$ then $\lim_{n \to \infty} \frac1n \sum_{k=1}^n a_k = a_\infty$.
Therefore, relation~\eqref{eq:f_one-sided_line-M=infinity} follows from Lemmas~\ref{lem:1D_Left_Explicit_prop-M=infinity} and~\ref{lem:f_one-sided_line}.
\end{proof}

Finally, we consider the two-sided line case.
\begin{lemma}
\label{lem:f_two-sided_line}
The expected fraction of adopters on a two-sided line with $M$ nodes, see~\eqref{eq:GeneralAdoptionRates_Left_Right-line}, is given by
\begin{subequations}
\begin{flalign}
  \label{eq:f_two-sided_line}
f_{\rm line}^{\rm two-sided}(t;p,q,M):= \frac{1}{M}\sum_{j=1}^{M} f_j^{\rm two-sided}(t;p,q,M),
\end{flalign}
where 
\begin{flalign}
f_j^{\rm two-sided}(t;p,q,M) = 
\begin{cases}
f_{\rm circle}\left(t;p,\frac{q}{2},M\right),\quad &\text{$j=1, M$} \\
1-e^{-(p+q)t}\left(1+\frac{q}{2}A_j(t)\right), \quad &\text{$2 \le j \le M-1$}
\end{cases}
\end{flalign}
is the adoption probability of node $j$ in a two-sided line with $M$ nodes, $f_{\rm circle}$ is given by~\eqref{1D_Left_Explicit-periodic}, 
\begin{equation}
\label{eq:two_sided_line_s_tilde}
\begin{aligned}
A_j\left(t\right)& = \int_{0}^{t}e^{(p+q)\tau}\Big[S\left(\tau;p,\frac{q}{2},j\right)S\left(\tau;p,\frac{q}{2},M-j\right)\\
&\qquad \quad ~~+ S\left(\tau;p,\frac{q}{2},j-1\right)S\left(\tau;p,\frac{q}{2},M-j+1\right)\Big]d\tau .
\end{aligned}
\end{equation}
\end{subequations}
and $S=1-f_{\text{circle}}$, see \eqref{eq:s=1-f}.
\end{lemma}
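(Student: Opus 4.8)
The plan is to compute each individual non-adoption probability $s_j(t) := \text{Prob}(X_j(t)=0)$ separately, handling the two boundary nodes $j=1,M$ and the interior nodes $2\le j\le M-1$ by different arguments, and then to assemble $f_{\rm line}^{\rm two-sided}=\frac1M\sum_j(1-s_j)$ via~\eqref{eq:f_sum}. The engine in both cases is the indifference principle applied to well-chosen node sets $\Omega$.

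For the boundary node $j=1$ (the case $j=M$ being symmetric), I would apply the indifference principle with $\Omega=\{1\}$. The key is the edge classification of Definition~\ref{def:influential-edge}: since node $0$ is frozen at $0$, the only live neighbor of node $1$ is node $2$ with weight $q/2$; every right-going edge \circled{$m-1$} $\to$ \circled{$m$} is non-influential by case 3 (on a line, any path from $m$ back to node $1$ must pass through $m-1$), while the left-going edges \circled{$m$} $\to$ \circled{$m-1$} remain influential. Deleting the non-influential edges leaves node $1$ as the sink of a one-sided chain \circled{$M$} $\to \cdots \to$ \circled{$1$} of $M$ nodes, each of weight $q/2$. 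By the computation underlying Lemma~\ref{lem:f_one-sided_line} this gives $s_1(t)=S(t;p,\tfrac q2,M)$, hence $f_1=1-S(t;p,\tfrac q2,M)=f_{\rm circle}(t;p,\tfrac q2,M)$ by~\eqref{eq:s=1-f}, as claimed.

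For an interior node $2\le j\le M-1$, I would first derive the evolution equation for $s_j$ by the standard hazard-rate argument: conditioning on the configuration at time $t$ and letting $\Delta t\to0$ in~\eqref{eq:GeneralAdoptionRates_Left_Right-line}, then rewriting $\mathbb{E}[\mathbbm{1}_{X_j=0}X_{j\pm1}]=s_j-\text{Prob}(X_j=0,X_{j\pm1}=0)$, yields
\[
s_j'(t) = -(p+q)s_j(t) + \tfrac q2\,\text{Prob}(X_{j-1}(t)=0,X_j(t)=0) + \tfrac q2\,\text{Prob}(X_j(t)=0,X_{j+1}(t)=0),
\]
with $s_j(0)=1$. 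The crux is to factor the two pair probabilities. Applying the indifference principle with $\Omega=\{j-1,j\}$, the two edges joining $j-1$ and $j$ are non-influential (case 1), and the backward edges inside each half are non-influential (case 3); deleting them disconnects the line into the two one-sided chains \circled{$1$} $\to\cdots\to$ \circled{$j-1$} and \circled{$M$} $\to\cdots\to$ \circled{$j$}. Thus $X_{j-1}$ and $X_j$ become independent, each being the sink of a one-sided chain, so $\text{Prob}(X_{j-1}=0,X_j=0)=S(t;p,\tfrac q2,j-1)\,S(t;p,\tfrac q2,M-j+1)$; the analogous argument with $\Omega=\{j,j+1\}$ gives $\text{Prob}(X_j=0,X_{j+1}=0)=S(t;p,\tfrac q2,j)\,S(t;p,\tfrac q2,M-j)$.

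Substituting these into the evolution equation turns it into a linear first-order ODE, which I would solve with integrating factor $e^{(p+q)t}$ subject to $s_j(0)=1$; this produces $s_j(t)=e^{-(p+q)t}\bigl(1+\tfrac q2 A_j(t)\bigr)$ with $A_j$ exactly as in~\eqref{eq:two_sided_line_s_tilde}, so that $f_j=1-s_j$ as stated, and averaging via~\eqref{eq:f_sum} finishes the proof. I expect the main obstacle to be the edge classification in the indifference step for $\Omega=\{j-1,j\}$: one must verify that the connecting and backward edges are genuinely non-influential (so the two halves decouple into independent one-sided chains) while the two feeding chains stay influential, since it is precisely this decoupling that factors the pair probabilities and reduces everything to the already-known one-sided survival probabilities $S$. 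Once the factorization is in hand, the remaining ODE solution is routine.
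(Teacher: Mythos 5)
Your proposal is correct and follows essentially the same route as the paper's proof: boundary nodes are handled by the indifference principle reducing to a one-sided structure of $M$ nodes with weight $q/2$ (the paper closes the chain into a one-sided circle, you invoke the chain-to-circle equivalence underlying Lemma~\ref{lem:f_one-sided_line} --- the same manipulation), and interior nodes are handled by the same evolution equation~\eqref{eq:two_sided_evolution}, whose pair probabilities are factored via the indifference principle with $\Omega=\{j-1,j\}$ and $\Omega=\{j,j+1\}$ into products of one-sided survival probabilities, followed by the same integrating-factor solution. The only cosmetic difference is that you derive~\eqref{eq:two_sided_evolution} by a direct inclusion--exclusion hazard-rate argument, whereas the paper's Appendix~\ref{app:two_sided_int_node} enumerates the neighbor configurations and recombines; both yield the identical equation.
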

\begin{proof}
We first consider the boundary nodes $j=1,M$. By the indifference principle, we can calculate the probability
that the right boundary node did not adopt by time~$t$ using the equivalent network in Figure~\ref{fig:indifference_two_sided_line}B.
Therefore, $\text{Prob}\left(X_{M}(t)=0\right)=S\left(t;p,\frac{q}{2},M\right)$.
By symmetry,  $\text{Prob}\left(X_{M}(t)=0\right)=\text{Prob}\left(X_{1}(t)=0\right)$. 
Therefore, 
\begin{equation}
\label{eq:two_sided_line_x1}
\text{Prob}\left(X_{1}(t)=1\right)=\text{Prob}\left(X_{M}(t)=1\right)=f_{\rm circle}\left(t;p,\frac{q}{2},M\right).
\end{equation}

\begin{figure}[ht!]
\begin{center}
\scalebox{0.65}{\includegraphics{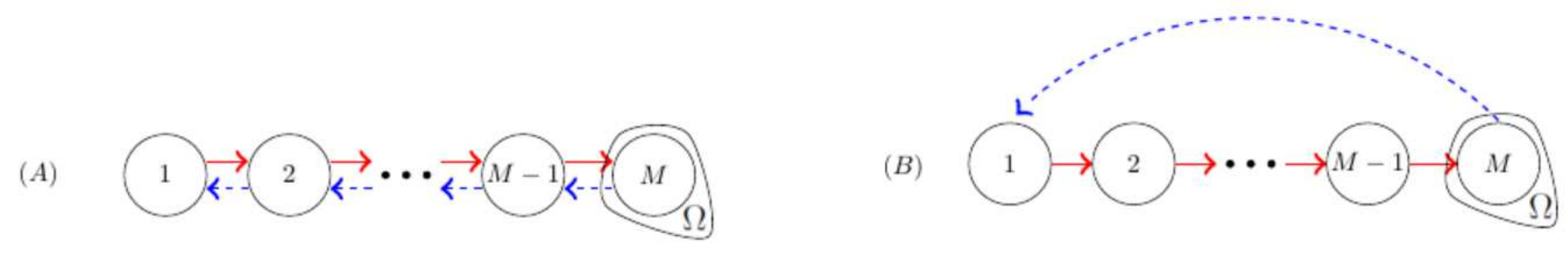}}
\caption{Equivalent networks for the calculation of $\emph{\text{Prob}}\left(X_{M}(t)=0\right)$ on the two-sided line. Solid red and dashed blue arrows are influential and 
non-influential edges to~$\Omega = \{M\}$, respectively. \emph{(}A\emph{)} Two-sided line. \emph{(}B\emph{)} The $M-1$ non-influential edges 
\protect\circled{$M$} $\to$ \protect\circled{$M-1$} $\to \cdots \to $ \protect\circled{$1$}
which ``point away'' from $\Omega$ are  deleted. The non-influential edge \protect\circled{$M$} $\to$ \protect\circled{$1$} is added. }

\label{fig:indifference_two_sided_line}
\end{center}
\end{figure}

Next, we consider the interior nodes $j=2, \dots, M-1$. The evolution equation for $\text{Prob}\left(X_j(t)=0\right)$ is, see Appendix \ref{app:two_sided_int_node},
\begin{equation}
\label{eq:two_sided_evolution}
\begin{aligned}
\frac{\text{d}}{\text{dt}}&\text{Prob}\left(X_j(t)=0\right)=-\left(p+q\right)\text{Prob}\left(X_j(t)=0\right)\\
&+\frac{q}{2}\big[\text{Prob}\left(X_{j-1}(t)=0,X_{j}(t)=0\right)+\text{Prob}\left(X_{j}(t)=0,X_{j+1}(t)=0\right)\big]. 
\end{aligned}
\end{equation}
By the indifference principle, we can calculate $\text{Prob}\left(X_{j-1}(t)=0,X_{j}(t)=0\right)$ 
 from Figure~\ref{fig:indifference_two_sided_line_middle_node}B. In that network, the states of $j-1$ and $j$ are independent, $j-1$ belongs to a one-sided circle with $j-1$ nodes, and $j$ belongs to a one-sided circle with $M-j+1$ nodes.
 Therefore,
\begin{equation}
\label{eq:two_sided_line_j_j+1}
\text{Prob}\left(X_{j-1}(t)=0,X_{j}(t)=0\right) = S\left(t;p,\frac{q}{2},j-1\right)S\left(t;p,\frac{q}{2},M-j+1\right) 
\end{equation}
for $j=2,\ldots,M$. Plugging   
this into~\eqref{eq:two_sided_evolution} and solving the ODE for $\text{Prob}\left(X_j(t)=0\right)$ yields
\begin{equation}
\label{eq:two_sided_line_xj}
\text{Prob}\left(X_j(t)=0\right) = e^{-(p+q)t}+\frac{q}{2}e^{-(p+q)t}A_j(t), \qquad 2 \leq j\leq M-1,
\end{equation}
where $A_j(t)$ is defined in \eqref{eq:two_sided_line_s_tilde}. The desired result follows from~\eqref{eq:f_sum}, \eqref{eq:two_sided_line_x1}, and~\eqref{eq:two_sided_line_xj}. 
%
\end{proof}

\begin{figure}[ht!]
\begin{center}
\scalebox{0.75}{\includegraphics{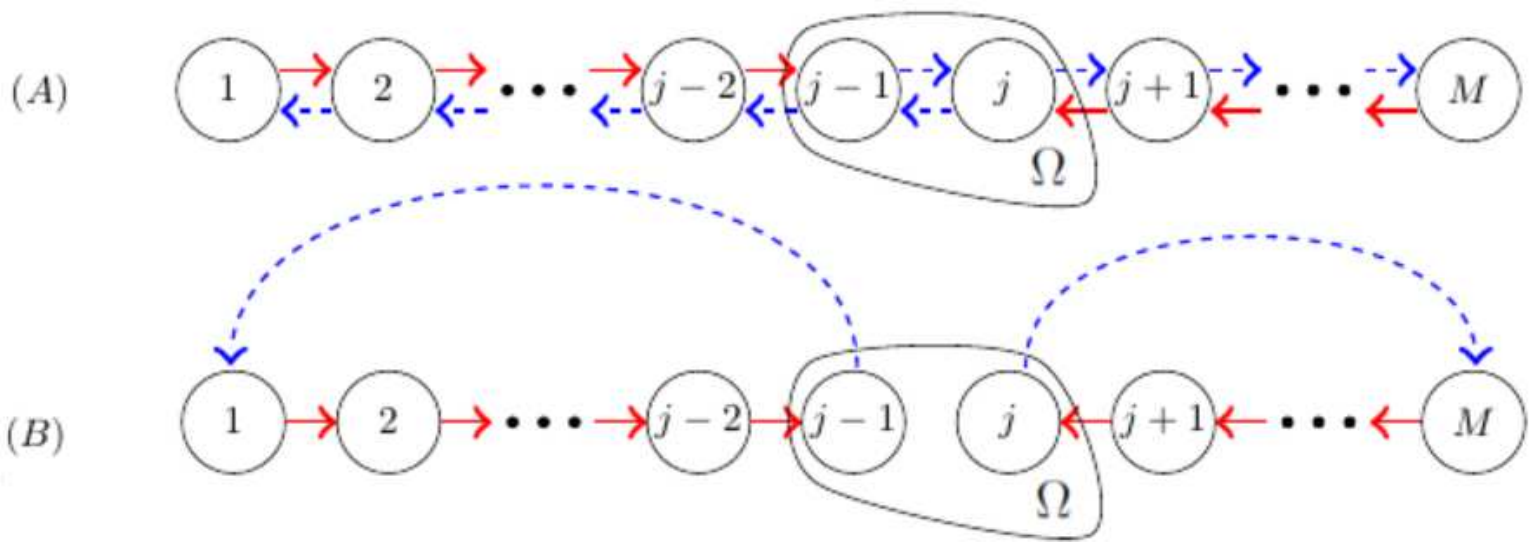}}
\caption{Equivalent networks for the calculation of $\emph{\text{Prob}}\left(X_{j-1}(t)=0,X_{j}(t)=0\right)$. Solid red and dashed blue arrows are influential and 
non-influential edges to $\Omega = \left\{j-1,j\right\}$, respectively. \emph{(}A\emph{)} Two-sided line. \emph{(}B\emph{)} All non-influential edges are deleted. The two non-influential edges 
\protect\circled{$j-1$} $\to$ \protect\circled{$1$} and \protect\circled{$j$} $\to$ \protect\circled{$M$} are added. 
\label{fig:indifference_two_sided_line_middle_node}
}
\end{center}
\end{figure}

Two-sided diffusion is (also) slower on a line than on a circle:
\begin{lemma}
For $t,p,q>0$, 
\begin{equation}
  \label{eq:f_two-sided_line<circle}
f^{\rm two-sided}_{\rm line}(t;p,q,M) <
      f_{\rm circle}(t;p,q,M). 
\end{equation}
\end{lemma}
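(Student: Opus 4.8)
The plan is to deduce this strict inequality directly from the dominance principle, exactly as was done for the one-sided case in~\eqref{eq:f_one-sided_line<circle}, rather than from the explicit formula of Lemma~\ref{lem:f_two-sided_line}. The key observation is that the two-sided line and the two-sided circle are built on the same node set $\{1,\dots,M\}$, carry the same external parameter $p$ at every node, and assign the same internal weight $q/2$ to every edge that is present in both networks. Thus the two networks differ in one respect only.

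First I would identify that single difference: the circle differs from the line precisely in the two ``wraparound'' edges \circled{$1$} $\to$ \circled{$M$} and \circled{$M$} $\to$ \circled{$1$}, each of weight $q/2$, which are present on the circle but absent (i.e.\ have weight zero) on the line. All remaining edge weights and all external parameters coincide. In the notation of the ordering $\preceq$ introduced before Lemma~\ref{lem:dominance-principle}, this says $\text{line} \preceq \text{circle}$, and since the two wraparound weights are strictly larger on the circle, in fact $\text{line} \prec \text{circle}$. The strict inequality $f^{\rm two-sided}_{\rm line}(t;p,q,M) < f_{\rm circle}(t;p,q,M)$ for $t>0$ then follows at once from Corollary~\ref{cor:f_a<f_B}, which guarantees that obtaining a network by adding links of positive weight strictly increases the expected fraction of adopters.

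The only point requiring a moment's care --- and the closest thing to an obstacle --- is the bookkeeping of edge weights at the boundary. One must check that a boundary node of the two-sided line, which has degree one, carries on its single present edge exactly the weight $q/2$ that the corresponding node carries on the circle, so that passing from line to circle is genuinely the \emph{addition} of two positive-weight edges and not a rescaling of existing ones. This is immediate from the homogeneous adoption rate~\eqref{eq:GeneralAdoptionRates_Left_Right-line} together with the convention that every two-sided edge has weight $q/2$ regardless of node degree. Once this is noted the comparison is literally between a network and the same network with two extra edges, and Corollary~\ref{cor:f_a<f_B} applies verbatim, completing the proof.
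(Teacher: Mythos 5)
Your proof is correct and is essentially the paper's own argument: the paper likewise deduces \eqref{eq:f_two-sided_line<circle} directly from the dominance principle via Corollary~\ref{cor:f_a<f_B}, since the two-sided circle is obtained from the two-sided line by adding the positive-weight wraparound edges. Your extra check that every existing edge carries weight $q/2$ in both networks (so the passage is a genuine addition of edges, not a rescaling) is a sound piece of bookkeeping that the paper leaves implicit.
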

\begin{proof}
This is a consequence of the dominance principle, see Corollary~\ref{cor:f_a<f_B}.
\end{proof}

\subsection{Hybrid network (circle with a ray)}
\label{sec:hybrid}

We can use the indifference principle to compute the adoption curve on hybrid networks.
For example, consider a one-sided circle with $M-K$~nodes, from which issues a one-sided ray with $K$~nodes (Figure~\ref{fig:circle_with_line2}).
All nodes and edges have the external and internal parameters of~$p$ and~$q$, respectively. 

\begin{figure}[ht!]
\begin{center}
\scalebox{0.7}{\includegraphics{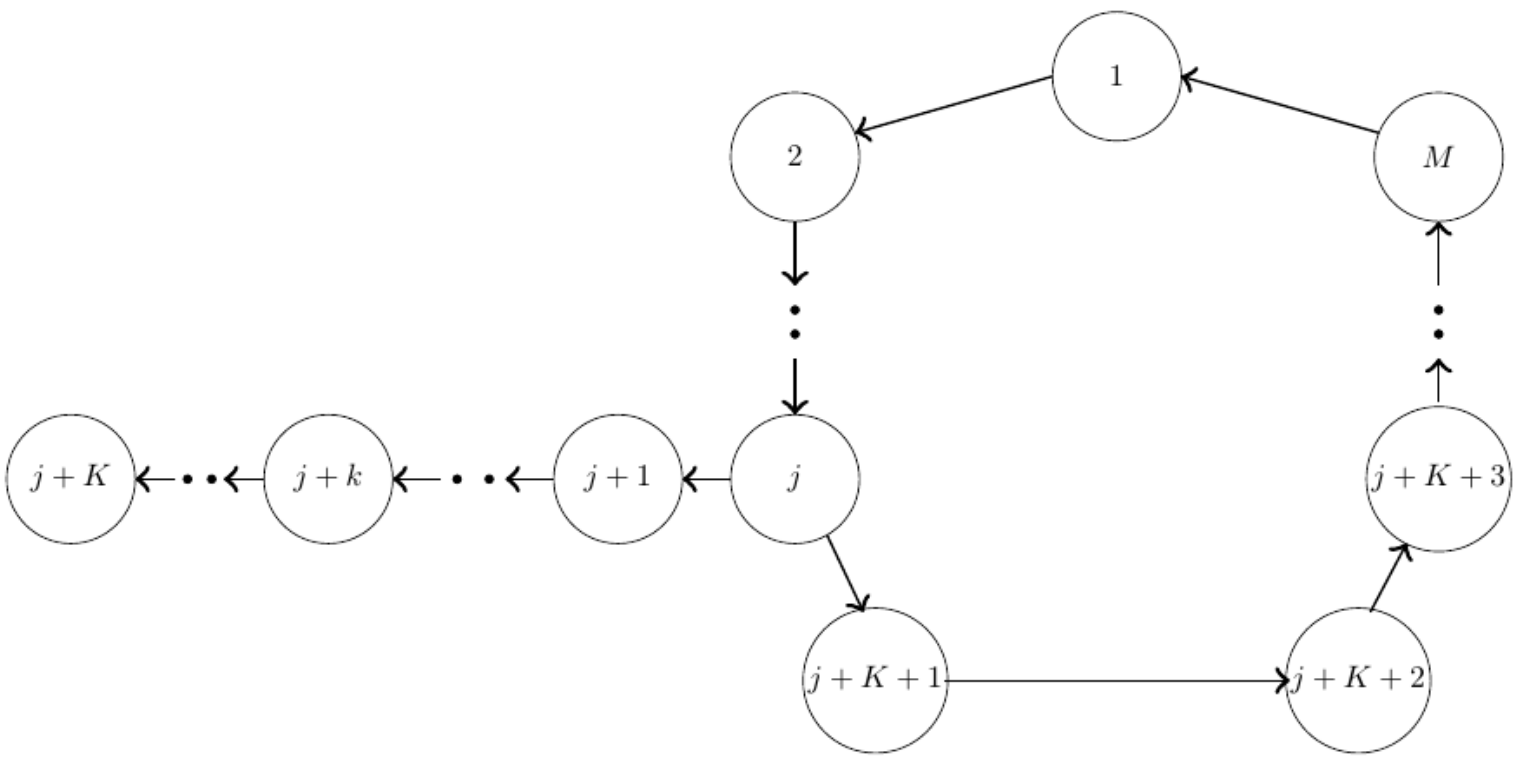}}
\caption{A one-sided circle with $M-K$~nodes, from which issues a one-sided ray with $K$~nodes.}
\label{fig:circle_with_line2}
\end{center}
\end{figure}
\begin{lemma}
\label{lem:circle_with_line}
The expected fraction of adopters in the hybrid circle-line network is
$$f_{\emph{\text{hybrid}}}(t) = \frac{1}{M}\left[\left(M-K\right)f_{\emph{\text{circle}}}(t;p,q,M-K)+\sum_{k=1}^{K}f_{\emph{\text{circle}}}(t;p,q,M-K+k)\right],
$$
where $f_{\emph{\text{circle}}}$ is given by \eqref{1D_Left_Explicit-periodic}.

\end{lemma}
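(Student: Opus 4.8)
The plan is to compute each node's individual adoption probability $f_j(t)=\text{Prob}(X_j(t)=1)$ and then assemble them through \eqref{eq:f_sum}. I treat the $M-K$ circle nodes and the $K$ ray nodes separately, and in each case I use the indifference principle (Lemma~\ref{lem:Indifference}) to reduce the single-node computation to a pure one-sided circle, on which translation invariance (Lemma~\ref{lem:Translation-Invariance}) gives the answer immediately. The relation $S=1-f_{\text{circle}}$ from \eqref{eq:s=1-f} then converts survival probabilities into adoption probabilities.

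First I would handle the circle nodes. Fix a circle node $j$ and take $\Omega=\{j\}$. Every ray edge points away from the circle, so no directed path starting at a ray node leads back to $\Omega$; by cases~1 and~2 of Definition~\ref{def:influential-edge} all ray edges are therefore non-influential to $\Omega$. Deleting them leaves node $j$ sitting inside a pure one-sided circle of $M-K$ nodes, so $\text{Prob}(X_j(t)=0)=S(t;p,q,M-K)$ and hence $f_j(t)=f_{\text{circle}}(t;p,q,M-K)$. Summing over the $M-K$ circle nodes contributes the term $(M-K)f_{\text{circle}}(t;p,q,M-K)$.

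The substantive part is the ray nodes, which I would handle in the spirit of Lemma~\ref{lem:f_one-sided_line}. Let $\Omega$ be the single ray node at depth $k$ (the $k$-th node counted from the attachment point). As before, all edges downstream of it are non-influential by cases~1 and~2. The decisive step is the circle-closing edge feeding back into the first circle node: every directed path from that first node to $\Omega$ must pass through the attachment node, so this edge is non-influential by case~3 of Definition~\ref{def:influential-edge}. This is the step I expect to be the main obstacle, since it is the one place where the subtler condition~3 (rather than the transparent conditions 1--2) is essential; the key observation is that cutting this edge ``unrolls'' the circle-plus-tail into a single directed path of $M-K+k$ nodes terminating at $\Omega$.

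Once that edge is cut, I add the non-influential edge from $\Omega$ back to the first circle node (case~1, as it originates in $\Omega$), exactly as in the proof of Lemma~\ref{lem:f_one-sided_line}, closing the path into a one-sided circle of $M-K+k$ nodes. Translation invariance then gives $\text{Prob}(X(t)=0)=S(t;p,q,M-K+k)$, so the depth-$k$ ray node satisfies $f=f_{\text{circle}}(t;p,q,M-K+k)$. Summing over $k=1,\dots,K$ and combining with the circle-node contribution via \eqref{eq:f_sum} yields the claimed identity; the only remaining work is the elementary bookkeeping of the node count in each reduced circle.
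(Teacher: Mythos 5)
Your proof is correct and follows essentially the same route as the paper's own proof in Appendix~\ref{app:hybrid}: the circle nodes are handled by deleting the ray edges as non-influential (cases 1 and 2), and each depth-$k$ ray node is handled by cutting the case-3 non-influential edge from the attachment node into the circle and re-closing the resulting path of $M-K+k$ nodes into a one-sided circle, exactly as in Figure~\ref{fig:circle_with_line}. No gaps.
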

\begin{proof}
{
See Appendix \ref{app:hybrid}.
}
\end{proof}

\subsection{$f_{\rm line}^{\rm one-sided}<f_{\rm line}^{\rm two-sided}$} 
\label{sec:f_one-sided<f_two-sided}

 In Lemma~\ref{lem:circle_two-sided} we proved that the one-sided and two-sided diffusion on the circle coincide. This equivalence was 
confirmed numerically in Figure~\ref{fig:cellular_1D_one_versus_two_sided_main}A.
Repeating this simulation on the line, however, suggests that one-sided diffusion is strictly slower that two-sided diffusion
(Figure~\ref{fig:cellular_1D_one_versus_two_sided_main}B).  
 To analytically prove this result, it suffices to show that, see~\eqref{eq:f_sum},  
\begin{equation}
\label{eq:prob_two<one}
\sum_{j=1}^{M}\text{Prob}\left(X^{\text{two-sided}}_{j}(t)=0\right)<\sum_{j=1}^{M}\text{Prob}\left(X^{\text{one-sided}}_{j}(t)=0\right), \qquad t>0.
\end{equation}
Obviously, this result would hold if $\text{Prob}\left(X^{\text{two-sided}}_{j}(t)=0\right)<\text{Prob}\left(X^{\text{one-sided}}_{j}(t)=0\right)$ for $j=1, \dots. M$.
This, however, is not the case, as is confirmed numerically in e.g.\ Figure~\ref{fig:one_sided_vs_two_sided}A, and analytically in the following lemma:  
\begin{lemma}
\label{lem:X1XM} 
  $
\emph{\text{Prob}}\left(X^{\rm two-sided}_{1}(t)=0\right)<\emph{\text{Prob}}\left(X^{\rm one-sided}_{1}(t)=0\right)$, but
$\emph{\text{Prob}}\left(X^{\rm two-sided}_{M}(t)=0\right)>\emph{\text{Prob}}\left(X^{\rm one-sided}_{M}(t)=0\right).$
\end{lemma}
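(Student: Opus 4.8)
The plan is to reduce all four probabilities to the already-computed quantity $S(t;p,q,M)$ and then compare. By Lemma~\ref{lem:f_one-sided_line}, $\text{Prob}(X_j^{\text{one-sided}}(t)=0)=S(t;p,q,j)$; in particular the left endpoint gives $\text{Prob}(X_1^{\text{one-sided}}(t)=0)=S(t;p,q,1)=e^{-pt}$, since node~$1$ has no left neighbor (its left neighbor is the Dirichlet boundary), while the right endpoint gives $\text{Prob}(X_M^{\text{one-sided}}(t)=0)=S(t;p,q,M)$. By Lemma~\ref{lem:f_two-sided_line} and its boundary identity~\eqref{eq:two_sided_line_x1}, together with the symmetry of the two-sided line, both endpoints satisfy $\text{Prob}(X_1^{\text{two-sided}}(t)=0)=\text{Prob}(X_M^{\text{two-sided}}(t)=0)=S(t;p,\frac{q}{2},M)$. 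Thus the lemma reduces to the two scalar inequalities $S(t;p,\frac{q}{2},M)<e^{-pt}$ and $S(t;p,\frac{q}{2},M)>S(t;p,q,M)$, which I treat separately.

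For the first inequality I use the dominance principle. By~\eqref{eq:s=1-f} we have $S(t;p,\frac{q}{2},M)=1-f_{\text{circle}}(t;p,\frac{q}{2},M)$, so it suffices to show $f_{\text{circle}}(t;p,\frac{q}{2},M)>1-e^{-pt}$ for $M\ge 2$. The one-sided circle on $M$ nodes with internal weight $\frac{q}{2}$ is obtained from the network of $M$ isolated nodes, each carrying external parameter $p$ and hence aggregate adoption $1-e^{-pt}$, by adding the $M$ circular edges of positive weight $\frac{q}{2}$. Corollary~\ref{cor:f_a<f_B} then yields the strict inequality $f_{\text{circle}}(t;p,\frac{q}{2},M)>1-e^{-pt}$, which is exactly $\text{Prob}(X_1^{\text{two-sided}}(t)=0)<\text{Prob}(X_1^{\text{one-sided}}(t)=0)$.

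The second inequality is immediate from Lemma~\ref{lem:s2<s1} with $k=1$: since $q>\frac{q}{2}$, we obtain $S(t;p,q,M)<S(t;p,\frac{q}{2},M)$, that is, $\text{Prob}(X_M^{\text{two-sided}}(t)=0)>\text{Prob}(X_M^{\text{one-sided}}(t)=0)$. Intuitively, the right endpoint $M$ receives its internal influence at the full rate $q$ from node $M-1$ in the one-sided line, but only at the reduced rate $\frac{q}{2}$ in the two-sided line (its right neighbor being the Dirichlet boundary), so it adopts more slowly in the two-sided model.

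No single step is technically hard once the four probabilities are identified; the substance of the lemma is precisely that the two comparisons point in opposite directions. At the left endpoint the two-sided model supplies an influencing neighbor (node~$2$) that the one-sided model lacks, making two-sided adoption faster there; at the right endpoint the one-sided model concentrates the full internal weight $q$ on the single available neighbor, making one-sided adoption faster there. The only subtlety is securing strictness in the first inequality, which I handle through Corollary~\ref{cor:f_a<f_B} rather than the merely non-strict generalized dominance of Lemma~\ref{lem:dominance-principle-generalized}. This node-level failure of a uniform comparison is exactly why the main result~\eqref{eq:prob_two<one} must be established at the aggregate level rather than node by node.
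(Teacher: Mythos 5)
Your proof is correct, and it follows the paper's basic skeleton: identify the four endpoint probabilities with circle quantities, then compare by dominance. The packaging differs enough to be worth noting. The paper re-derives the endpoint identifications from scratch with the indifference principle (deleting the non-influential edges of the line, Figures~\ref{fig:indifference_line_first_node} and~\ref{fig:indifference_line_last_node}), whereas you quote them from Lemmas~\ref{lem:f_one-sided_line} and~\ref{lem:f_two-sided_line} (via \eqref{eq:two_sided_line_x1}), reducing the lemma to the two scalar inequalities $S\left(t;p,\frac{q}{2},M\right)<e^{-pt}$ and $S\left(t;p,\frac{q}{2},M\right)>S(t;p,q,M)$. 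For the right endpoint the two arguments then coincide: both rest on the strict weight-monotonicity of Lemma~\ref{lem:s2<s1}. For the left endpoint, the paper invokes the ``strong version'' of the generalized dominance principle (justified only in the remark after Lemma~\ref{lem:dominance-principle-generalized}) to compare the isolated node against the incoming chain of weight $\frac{q}{2}$; you instead pass through \eqref{eq:s=1-f} and translation invariance, comparing $M$ isolated nodes with the circle via Corollary~\ref{cor:f_a<f_B}. That substitution buys you strictness from a cleanly stated consequence of the strict dominance principle rather than from the informally sketched strong generalized dominance; the small cost is a detour through the aggregate quantity $f_{\rm circle}$, which is harmless here precisely because translation invariance makes the aggregate and per-node probabilities coincide on the circle. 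Your implicit restriction to $M\geq 2$ is also the right one, since for $M=1$ the two models coincide and the strict inequalities fail.
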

\begin{proof}
See Appendix \ref{app:X1XM}, which makes use of the generalized dominance and the indifference principles.
\end{proof}
\begin{figure}[ht!]
\begin{center}
\scalebox{0.4}{\includegraphics{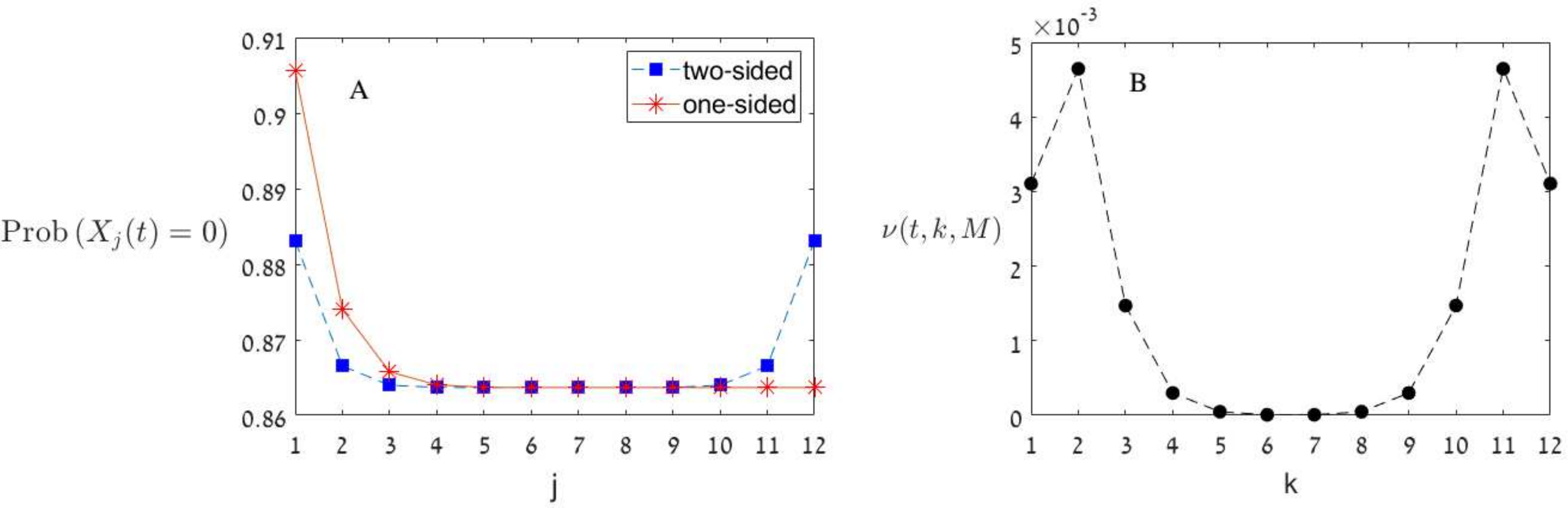}}
\caption{{ \emph{(}A\emph{)}~Probability to remain a non-adopter by time~$t$, on a one-sided (red asterisks) and a two-sided (blue squares) line with~$M$ nodes. \emph{(}B\emph{)}~$\nu(t,k,M)$ as a function of~$k$. In both figures~$M=12$, $t=10$, $p=0.01$ and~$q=0.1$.}}
\label{fig:one_sided_vs_two_sided}
\end{center}
\end{figure}

The key to proving~\eqref{eq:prob_two<one} is to show that for any {\em  pair of symmetric nodes} $\{k,M+1-k\}$, the sum of their  adoption probabilities 
in the one-sided case is smaller than in the two-sided case, i.e.,  
\begin{subequations}
\begin{equation}
\label{eq:nu(t,k)>0}
\nu(t,k,M)>0, \qquad t>0, \quad k = 1, \dots, M,
\end{equation}
 where 
\begin{equation}
\label{eq:nu_def}
\begin{aligned}
\nu(t,k,M) :=& \Big[\text{Prob}(X_{k}^{\rm one-sided}(t)=0)+\text{Prob}(X_{M-k+1}^{\rm one-sided}(t)=0)\Big]\\
&-\Big[\text{Prob}(X_{k}^{\rm two-sided}(t)=0)+\text{Prob}(X_{M-k+1}^{\rm two-sided}(t)=0)\Big], 
\end{aligned}
\end{equation}
\end{subequations}
see e.g.\ Figure~\ref{fig:one_sided_vs_two_sided}B.

We first prove~\eqref{eq:nu(t,k)>0} for the boundary nodes $\{1,M\}$:

\begin{lemma}
\label{lem:end_nodes2}
Let $M\geq 2$.  Then $\nu(t,1,M)>0$ for $t>0$. 
\end{lemma}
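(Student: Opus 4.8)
The plan is to turn $\nu(t,1,M)$ into an explicit second difference in the internal parameter $q$, and then deduce its positivity from the strict convexity of the single-node survival probability $S(t;p,q,M)$ in $q$.

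First I would evaluate the four boundary survival probabilities using the indifference principle. On the one-sided line the left node is influenced only externally (its left neighbour is the phantom node with $X_0\equiv0$), so $\text{Prob}(X_1^{\rm one-sided}(t)=0)=e^{-pt}$, while Lemma~\ref{lem:f_one-sided_line} gives $\text{Prob}(X_M^{\rm one-sided}(t)=0)=S(t;p,q,M)$. On the two-sided line, \eqref{eq:two_sided_line_x1} together with \eqref{eq:s=1-f} yields $\text{Prob}(X_1^{\rm two-sided}(t)=0)=\text{Prob}(X_M^{\rm two-sided}(t)=0)=S(t;p,\frac{q}{2},M)$. Inserting these into the definition \eqref{eq:nu_def} of $\nu$, and noting that $e^{-pt}=S(t;p,0,M)$ (at $q=0$ the internal term in \eqref{eq:final_form-ODE} drops out), I get $\nu(t,1,M)=S(t;p,q,M)-2S(t;p,\frac{q}{2},M)+S(t;p,0,M)$, which is precisely the centered second difference of $q\mapsto S(t;p,q,M)$ evaluated at $q/2$ with step $q/2$.

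It therefore suffices to prove that $q\mapsto S(t;p,q,M)$ is strictly convex on $[0,\infty)$ for $M\ge2$ and $t>0$, since strict convexity gives $S(t;p,q,M)+S(t;p,0,M)>2S(t;p,\frac{q}{2},M)$. I would establish convexity by induction on $M$. Setting $P_M:=\partial_q S(t;p,q,M)$ and $Q_M:=\partial_q^2 S(t;p,q,M)$ and differentiating the closed recursion \eqref{eq:final_form-ODE} twice in $q$ produces the linear ODE $\partial_t Q_M+(p+q)Q_M=-2P_M+2e^{-pt}P_{M-1}+qe^{-pt}Q_{M-1}$ with $Q_M(0)=0$; integrating with the factor $e^{-(p+q)(t-s)}$ turns this into an integral recursion for $Q_M$. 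By Lemma~\ref{lem:s2<s1} we have $P_M<0$ for $M\ge2$ and $t>0$, and by the inductive hypothesis $Q_{M-1}\ge0$, so the only term that can threaten positivity of the integrand is $2e^{-pt}P_{M-1}$. The base case $M=2$ is immediate because $P_1\equiv0$, so the integrand is $-2P_2>0$ and hence $Q_2>0$.

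The hard part will be the monotone comparison $\partial_q S(t;p,q,M)\le e^{-pt}\,\partial_q S(t;p,q,M-1)$ for $M\ge3$ and $t\ge0$, which is exactly what makes the integrand nonnegative; heuristically, extending the influence chain by one node increases the $q$-sensitivity of the endpoint's survival by at least the discount factor $e^{-pt}$. I would prove this by a parallel induction: differencing the first-order equations \eqref{eq:final_form-ODE} for consecutive values of $M$ gives a linear ODE for $P_M-e^{-pt}P_{M-1}$ whose source is controlled by the analogous difference $S(t;p,q,M)-e^{-pt}S(t;p,q,M-1)$ one level down, together with the elementary facts $S(t;p,q,M)\ge e^{-Mpt}$ (the probability that no node adopts, which is independent of $q$) and $S(t;p,q,M)\le S(t;p,q,M-1)$ (monotonicity in chain length, Corollary~\ref{cor:f_a<f_B}); a Gr\"onwall estimate then fixes the sign. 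Once the comparison holds, the integral recursion yields $Q_M\ge0$, and strictness propagates: for $M\ge3$ the term $qe^{-pt}Q_{M-1}$ is already strictly positive on $(0,t]$, so $Q_M>0$. This gives strict convexity of $S$ in $q$ and hence the strict inequality $\nu(t,1,M)>0$.
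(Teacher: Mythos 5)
Your reduction is correct and is a genuinely different route from the paper's: since $\text{Prob}(X_1^{\rm one\text{-}sided}(t)=0)=e^{-pt}=S(t;p,0,M)$, $\text{Prob}(X_M^{\rm one\text{-}sided}(t)=0)=S(t;p,q,M)$, and both two-sided boundary nodes give $S(t;p,\frac{q}{2},M)$, the quantity $\nu(t,1,M)$ is indeed the centered second difference of $q\mapsto S(t;p,q,M)$, so strict convexity in $q$ would finish the proof. (The paper never differentiates in $q$; it derives $\frac{\text{d}}{\text{dt}}\nu+p\nu=q\beta(t,1,M)$ and proves $\beta>0$ by a reverse induction on $k$, Lemma~\ref{lem:beta}.) Your second-derivative recursion $\partial_t Q_M+(p+q)Q_M=-2P_M+2e^{-pt}P_{M-1}+qe^{-pt}Q_{M-1}$ is right, as is the base case $M=2$.

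The gap is exactly at what you call the hard part, and the ingredients you propose for it cannot close it. Writing $D_M:=P_M-e^{-pt}P_{M-1}$ and $S(t;M):=S(t;p,q,M)$, differencing \eqref{eq:final_form-ODE} gives
\begin{equation*}
\partial_t D_M+(p+q)D_M \;=\; qe^{-pt}D_{M-1}+pe^{-pt}P_{M-1}-\Gamma_M,
\qquad
\Gamma_M:=S(t;M)-2e^{-pt}S(t;M-1)+e^{-2pt}S(t;M-2).
\end{equation*}
The first two source terms have the right sign ($D_{M-1}\le 0$ by induction, $P_{M-1}\le 0$ by Lemma~\ref{lem:s2<s1}), but you also need $\Gamma_M\ge 0$, and this is \emph{not} implied by the two facts you cite: $S(t;M)\le S(t;M-1)$ and $S\ge e^{-Mpt}$ are first-order monotonicity statements, whereas $\Gamma_M\ge0$ is itself a second-difference (correlation-type) inequality of precisely the kind you are trying to establish. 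In fact, by the indifference relation \eqref{indifference_2}, $\Gamma_M=S_1(t;M)-2S_2(t;M)+S_3(t;M)=\gamma(t,1,M)$, i.e.\ the quantity of the paper's Lemma~\ref{lem:gamma}, which is proved there via the inclusion--exclusion identity of \cite[Lemma 7]{OR-10} (it equals the probability of the configuration adopter--nonadopter--adopter, hence is positive). So to complete your argument you must either import Lemma~\ref{lem:gamma}, or run one more ODE induction layer: setting $R_M:=S(t;M)-e^{-pt}S(t;M-1)\ge0$ (which is $S_1-S_2\ge0$) one finds $\partial_t\Gamma_M+(p+q)\Gamma_M=2pe^{-pt}R_{M-1}+qe^{-pt}\Gamma_{M-1}$, and Lemma~\ref{lem:Gronwall} closes the induction. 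With that ingredient supplied your proof is complete; it is worth noting that both your route and the paper's ultimately rest on the same inequality of Lemma~\ref{lem:gamma}.
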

\begin{proof}
See Appendix \ref{app:end_nodes2}.
\end{proof}

By symmetry, see~\eqref{eq:nu_def},
$
\nu(t,k,M)=\nu(t,M+1-k,M).
$
Therefore, we only need to prove that $\nu(t,k,M)>0$ for $2 \leq k\leq \frac{M+1}{2}$. 

We first provide an intuitive induction-type argument for~\eqref{eq:nu(t,k)>0}, namely that if $\nu(t,k-1,M)>0$, then $\nu(t,k,M+2)>0$.
 Consider the symmetric pair $\{k,M+2+1-k\}$ in a line with $M+2$ nodes. If we ignore the 
influence of the boundary nodes $\{1,M+2\}$, then the adoption probabilities of nodes $\{k,M+2+1-k\}$ in a line with $M+2$ nodes are given by the adoption probabilities of the symmetric pair $\{k-1,M+1-(k-1)\}$ in a line with $M$ nodes. Therefore, by the induction assumption, $\nu(t,k,M+2)|_{\{X_1(t)= X_{M+1}(t) \equiv 0\}}   = \nu(t,k-1,M)>0$.
 To add the influence of the boundary nodes $\{1,M+2\}$, we should only consider the case where their adoptions are external, i.e.,  
not influenced by the nodes $\{2,M+1\}$. Since the external adoptions of nodes $\{1,M+2\}$  are identical in both networks, and since the combined influence of the nodes $\{1,M+2\}$ on the nodes $\{2,M+1\}$ is the same in both networks, adding their effect does not change the result  that $\nu(t,k,M+2)>0$.

The rigorous proof of~\eqref{eq:nu(t,k)>0} is provided by  
\begin{lemma}
\label{lem:end_nodes}
Let $M\geq 3$. Then
$\nu(t,k,M)>0$ for $t>0$ and $2\leq k\leq \frac{M+1}{2}$.
\end{lemma}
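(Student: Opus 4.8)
The plan is to prove the inequality by induction on $k$, taking Lemma~\ref{lem:end_nodes2} as the base case $k=1$ and using the symmetry $\nu(t,k,M)=\nu(t,M+1-k,M)$ already noted above to restrict to $2\le k\le\frac{M+1}{2}$. The induction I have in mind advances both indices at once: I would show that $\nu(t,k-1,M)>0$ forces $\nu(t,k,M+2)>0$, which is exactly the heuristic reduction sketched just before the statement. Since $M_0:=M-2(k-1)\ge 2$ precisely when $M\ge 2k$, every admissible pair $(k,M)$ with $M\ge 2k$ is then reached from a base case $\nu(t,1,M_0)>0$; the remaining diagonal $M=2k-1$, where the ``symmetric pair'' $\{k,\,M+1-k\}$ collapses onto the single central node, I would treat separately.

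For the induction step I first invoke the indifference principle to decouple the two boundary nodes. With $\Omega=\{k,\,M+3-k\}$ and $3\le k$, so that $1,2,M+1,M+2\notin\Omega$, the edges $\circled{2}\to\circled{1}$ and $\circled{M+1}\to\circled{M+2}$ are non-influential to $\Omega$ in the two-sided network by case~3 of Definition~\ref{def:influential-edge}, while $\circled{M+1}\to\circled{M+2}$ is non-influential in the one-sided network by case~2. Deleting them leaves $\text{Prob}(X_k(t)=0,\,X_{M+3-k}(t)=0)$ unchanged in each network (Lemma~\ref{lem:Indifference}), but makes nodes $1$ and $M+2$ adopt purely externally, at rate $p$. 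After the relabeling $j\mapsto j-1$ the interior nodes $\{2,\dots,M+1\}$ form an $M$-line on which $\Omega$ becomes the symmetric pair $\{k-1,\,M+1-(k-1)\}$, so that switching the boundary nodes off returns exactly the configuration defining $\nu(t,k-1,M)$, which is positive by the induction hypothesis.

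The heart of the argument, and the step I expect to be the main obstacle, is to show that switching the two boundary nodes back on preserves positivity. Following the heuristic I would couple the one-sided and two-sided dynamics through a common sequence $\{\vomega^{n}\}$, as in the proofs of Lemmas~\ref{lem:dominance-principle} and~\ref{lem:Indifference}, so that the external adoption times of nodes $1$ and $M+2$ agree in both networks, and condition on them. Conditionally, the two boundary nodes act as identical time-dependent forcings of total internal weight $q$ on the two ends of the interior $M$-line: the one-sided network places all of this weight on the left end, whereas the two-sided network splits it as $q/2$ on each end. The key claim is that, for the \emph{symmetric} observable $\nu$, the balanced two-sided forcing lowers the pair's joint survival probability at least as much as the concentrated one-sided forcing does. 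This is not a monotone (dominance) comparison, since the total weight is equal and merely redistributed, so I expect to prove it by inserting the explicit survival functions of Lemmas~\ref{lem:f_one-sided_line} and~\ref{lem:f_two-sided_line} into the boundary correction $\nu(t,k,M+2)-\nu(t,k-1,M)$ and pairing its terms so that all surviving differences are sign-definite. The two ingredients this pairing needs are the strict monotonicity of $S(t;p,q,j)$ in $j$, which I would establish by induction from the recursion~\eqref{final_form}, and the $k\leftrightarrow M+1-k$ symmetry of the configuration.

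Finally I would dispose of the cases excluded from the reduction. When $k=2$ a member of $\Omega$ is adjacent to the boundary node $1$, so the indifference reduction must be applied on one side only and the bookkeeping of the forcing changes; this case I would run through the step by hand. On the diagonal $M=2k-1$ the pair degenerates and $\nu(t,k,2k-1)=2\big[\text{Prob}(X_k^{\rm one-sided}(t)=0)-\text{Prob}(X_k^{\rm two-sided}(t)=0)\big]$ becomes a single-node comparison at the exact centre, where the two-sided node is pushed symmetrically from both sides; here positivity follows directly from the explicit formulas of Lemmas~\ref{lem:f_one-sided_line} and~\ref{lem:f_two-sided_line}.
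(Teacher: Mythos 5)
Your plan is an attempt to rigorize the intuitive induction sketch that the paper itself gives just before the lemma --- and which the paper explicitly declines to use as the proof, deferring instead to a different argument. The genuine gap is exactly the step you flag as ``the heart of the argument'': the claim that, conditionally on the boundary nodes' purely external adoption times, the balanced two-sided forcing ($q/2$ at each end of the interior $M$-line) depresses the symmetric pair's joint survival at least as much as the concentrated one-sided forcing ($q$ at the left end). You correctly observe that no dominance/coupling argument can give this, since the total weight is equal and merely redistributed; but the substitute you offer --- inserting the explicit formulas of Lemmas~\ref{lem:f_one-sided_line} and~\ref{lem:f_two-sided_line} and ``pairing terms so that all surviving differences are sign-definite'' --- is never carried out, and there is good reason to doubt it can be done naively. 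Those formulas involve integrals of products of circle survival functions, see~\eqref{eq:two_sided_line_s_tilde}, and Lemma~\ref{lem:X1XM} together with Figure~\ref{fig:one_sided_vs_two_sided}A shows that the individual comparisons at the two members of a symmetric pair go in \emph{opposite} directions, so no term-by-term pairing is sign-definite: only the symmetric combination is, and proving that is precisely the content of the lemma. Your induction step therefore rests on an unproven claim that is essentially as strong as (in fact a time-inhomogeneous generalization of) the statement being proven. The same objection applies to the two cases you defer: on the diagonal $M=2k-1$ you assert the central-node comparison ``follows directly from the explicit formulas,'' and for $k=2$ you would ``run through the step by hand''; neither is any easier than the general case.

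For comparison, the paper's proof avoids conditioning on boundary adoption times altogether. It derives the ODE $\frac{\text{d}}{\text{dt}}\nu(t,k,M)+(p+q)\nu(t,k,M)=q\psi(t,k,M)$, where $\psi$ is built from joint two-node survival probabilities, and proves $\psi(t,k,M)>0$ for $k\ge 2$, $M\ge 2k-1$ by induction on $k$ descending in \emph{both} indices: the base case $k=2$ reduces, via the indifference-principle factorization~\eqref{eq:two_sided_line_j_j+1}, the relation~\eqref{indifference_2}, and monotonicity of $S$ in $M$ (Lemma~\ref{lem:s_decreasing}), to $\nu(t,1,M-2)\ge 0$, i.e.\ to Lemma~\ref{lem:end_nodes2}; the induction step bounds $\frac{\text{d}}{\text{dt}}\psi(t,k,M)+(2p+q)\psi(t,k,M)$ from below by $qe^{-pt}\psi(t,k-1,M-2)$, and the Gronwall-type Lemma~\ref{lem:Gronwall} converts positive forcing into positivity of $\psi$, hence of $\nu$. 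That differential-inequality machinery is what replaces your missing ``balanced versus concentrated forcing'' comparison; without it, or something equally substantive, your induction does not close.
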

\begin{proof}
See Appendix \ref{app:end_nodes}.
\end{proof}

Lemmas \ref{lem:end_nodes2} and \ref{lem:end_nodes} immediately lead to

\begin{theorem}
 \label{lem:f_one-sided<f_two-sided}
Consider a line with $M\geq 2$ consumers. Then
\begin{equation}
  \label{eq:f_line_1sided<2_sided}
 f_{\rm line}^{\rm one-sided}(t;p,q,M) < f_{\rm line}^{\rm two-sided}(t;p,q,M), \qquad t,p,q>0.
\end{equation}
In addition, 
\begin{equation}
  \label{eq:f_line_1sided<2_sided_M=infty}
\lim_{M \to \infty} f_{\rm line}^{\rm one-sided}(t;p,q,M) =\lim_{M \to \infty} f_{\rm line}^{\rm two-sided}(t;p,q,M).
\end{equation}
\end{theorem}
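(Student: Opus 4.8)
The plan is to derive the theorem directly from the pairwise positivity of $\nu$. Lemma~\ref{lem:end_nodes2} supplies $\nu(t,1,M)>0$ and Lemma~\ref{lem:end_nodes} supplies $\nu(t,k,M)>0$ for $2\le k\le\frac{M+1}{2}$; together with the symmetry $\nu(t,k,M)=\nu(t,M+1-k,M)$, which is immediate from the definition~\eqref{eq:nu_def}, this gives $\nu(t,k,M)>0$ for \emph{every} $k=1,\dots,M$. Since~\eqref{eq:f_sum} shows that the strict inequality~\eqref{eq:f_line_1sided<2_sided} is equivalent to the summed non-adoption inequality~\eqref{eq:prob_two<one}, the first step is to rewrite the sum in~\eqref{eq:prob_two<one} in terms of the $\nu$'s.

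The key is the left--right symmetry of the line. Setting
$$
a_j:=\text{Prob}\bigl(X_j^{\text{one-sided}}(t)=0\bigr)-\text{Prob}\bigl(X_j^{\text{two-sided}}(t)=0\bigr),
$$
the reindexing $j\mapsto M+1-j$ leaves $\sum_{j=1}^M a_j$ unchanged, so that
$$
2\sum_{j=1}^M a_j=\sum_{j=1}^M\bigl(a_j+a_{M+1-j}\bigr)=\sum_{j=1}^M\nu(t,j,M),
$$
the last equality being merely the definition~\eqref{eq:nu_def}. As each summand on the right is strictly positive, the left-hand side is positive, which is exactly~\eqref{eq:prob_two<one}; dividing by $M$ and invoking~\eqref{eq:f_sum} yields~\eqref{eq:f_line_1sided<2_sided}. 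A virtue of this symmetric rewriting is that the central node in the odd-$M$ case requires no separate treatment: it is counted once on the left and contributes the single term $\nu(t,\frac{M+1}{2},M)$ on the right.

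For the limit~\eqref{eq:f_line_1sided<2_sided_M=infty} I would sandwich the two-sided line. Combining the inequality~\eqref{eq:f_line_1sided<2_sided} just established with~\eqref{eq:f_two-sided_line<circle} gives, for every $M$,
$$
f_{\rm line}^{\rm one-sided}(t;p,q,M)<f_{\rm line}^{\rm two-sided}(t;p,q,M)<f_{\rm circle}(t;p,q,M).
$$
Both the lower and upper bounds converge to $f_{\rm 1D}(t)$ as $M\to\infty$ by~\eqref{eq:f_one-sided_line-M=infinity}, so the squeeze theorem forces $\lim_{M\to\infty}f_{\rm line}^{\rm two-sided}(t;p,q,M)=f_{\rm 1D}(t)$ as well, and the two limits coincide.

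All the substantive work is contained in Lemmas~\ref{lem:end_nodes2} and~\ref{lem:end_nodes}, so given those this theorem is essentially bookkeeping. The one place demanding care---though I anticipate no real obstacle---is the symmetric summation identity: one must check that the factor of two and the reindexing $j\mapsto M+1-j$ are applied consistently, which the single displayed identity above handles without any case split on the parity of $M$.
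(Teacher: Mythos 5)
Your proposal is correct and follows essentially the same route as the paper: both reduce \eqref{eq:f_line_1sided<2_sided} to \eqref{eq:prob_two<one} via the pairwise positivity of $\nu$ from Lemmas~\ref{lem:end_nodes2} and~\ref{lem:end_nodes} together with the symmetry $\nu(t,k,M)=\nu(t,M+1-k,M)$, and both obtain \eqref{eq:f_line_1sided<2_sided_M=infty} by sandwiching $f_{\rm line}^{\rm two-sided}$ between $f_{\rm line}^{\rm one-sided}$ and $f_{\rm circle}$ and letting $M\to\infty$. The only difference is cosmetic: you spell out the summation identity $2\sum_j a_j=\sum_k\nu(t,k,M)$ explicitly, which the paper leaves as implicit bookkeeping.
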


\begin{proof}
{Lemmas \ref{lem:end_nodes2} and \ref{lem:end_nodes}} imply~\eqref{eq:prob_two<one}.
Combining~\eqref{eq:f_sum} with~\eqref{eq:prob_two<one} yields \eqref{eq:f_line_1sided<2_sided}.

Equation \eqref{eq:f_line_1sided<2_sided_M=infty} is obtained as follows. By~\eqref{eq:f_two-sided_line<circle} and~\eqref{eq:f_line_1sided<2_sided},
$$
 f_{\rm line}^{\rm one-sided}(t;p,q,M) < f_{\rm line}^{\rm two-sided}(t;p,q,M)< f_{\rm circle}(t;p,q,M).
$$  
Taking the limit $M \to \infty$ and using~\eqref{eq:f_2sided_circle=f_1sided_circle_M2infty} and~\eqref{eq:f_one-sided_line-M=infinity}  gives~\eqref{eq:f_line_1sided<2_sided_M=infty}.
\end{proof}


Therefore, {\em the equivalence of one-sided and two-sided diffusion requires that the network be periodic}.
The difference between one-sided and two-sided diffusion initially increases with time, 
see Figure~\ref{fig:cellular_1D_one_versus_two_sided_main}B, as
the probability that adopters reach the boundary increases. Since $\lim_{t \to  \infty} f = 1$
in both cases, however, this difference vanishes as $t \to \infty$.

\subsection{$D \ge 2$}
 \label{sec:D>=2}

  In higher dimensions, the analysis becomes much harder, and so we resort to numerics. 
In Figure~\ref{fig:cellular_2D_one_versus_two_sided_main} we simulate the diffusion on a two-dimensional Cartesian grid.
In the periodic case (a two-dimensional torus), one-sided diffusion and two-sided diffusion are identical
(Figure~\ref{fig:cellular_2D_one_versus_two_sided_main}A).  
In the non-periodic case (a two-dimensional square), however, one-sided diffusion is strictly slower than two-sided diffusion
(Figure~\ref{fig:cellular_2D_one_versus_two_sided_main}B). 
In Figure~\ref{fig:cellular_3D_one_versus_two_sided_main} we observe similar results for diffusion on a three-dimensional Cartesian grid.
Therefore, based on Lemmas~\ref{lem:circle_two-sided} and Theorem~\ref{lem:f_one-sided<f_two-sided}, and Figures~\ref{fig:cellular_2D_one_versus_two_sided_main} and~\ref{fig:cellular_3D_one_versus_two_sided_main},
we formulate 
\begin{conj}
   In the discrete Bass model on a $D$-dimensional Cartesian network:
\begin{enumerate}
  \item One-sided and two-sided diffusion are identical when the network is periodic.
  \item  One-sided diffusion is strictly slower than two-sided diffusion when the network is non-periodic.
 \end{enumerate}  
\end{conj}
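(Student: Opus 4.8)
The plan is to treat the two assertions separately, in each case lifting the one-dimensional machinery to the $D$-dimensional torus and box; I expect part~1 to be provable along the lines already developed, while part~2 will require a genuinely new idea.

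For part~1 (periodic case) I would generalize the ODE-hierarchy argument behind Lemma~\ref{lem:circle_two-sided}. By the generalized translation invariance of Lemma~\ref{lem:Translation-Invariance} on the torus $[1,\dots,m]^D$, the survival probability $S_\Omega(t)=\mathrm{Prob}(\text{all nodes of }\Omega\text{ are non-adopters})$ depends only on the \emph{shape} of the connected cluster $\Omega$, not on its location. Differentiating $S_\Omega$ and conditioning on $\Omega$ being entirely non-adopted (so that only neighbors outside $\Omega$ can trigger an adoption) yields the evolution hierarchy
\begin{equation*}
\frac{d}{dt}S_\Omega
=-\Bigl(|\Omega|\,p+\tfrac{q}{k_D}\,b(\Omega)\Bigr)S_\Omega
+\tfrac{q}{k_D}\sum_{i}S_{\Omega\cup\{i\}},
\end{equation*}
where $b(\Omega)$ counts the boundary edges entering $\Omega$ and the sum runs over the external neighbors $i$ of $\Omega$. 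At the single-node level this is $S'=-(p+q)S+qS_{\mathrm{pair}}$ in \emph{both} the one-sided ($k_D=D$, edge weight $q/D$) and two-sided ($k_D=2D$, edge weight $q/2D$) dynamics, exactly as in~\eqref{evolution_eq}. The key step is then to invoke the indifference principle (Lemma~\ref{lem:Indifference}) to delete the non-influential edges leaving each cluster and, when a neighbor $i$ attaches to $\Omega$ as a leaf, to reduce $S_{\Omega\cup\{i\}}$ to a smaller cluster on a modified torus times $e^{-pt}$, in the spirit of the reductions $S_2=S_1(\cdot;M-1)e^{-pt}$ and $S_3=S_2(\cdot;M-1)e^{-pt}$. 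If the resulting closed one-sided and two-sided hierarchies can be matched shape-by-shape, then $S^{\mathrm{one}}=S^{\mathrm{two}}$ and hence $f^{\mathrm{one}}_{\mathrm{torus}}=f^{\mathrm{two}}_{\mathrm{torus}}$ propagate up from the single-node equation exactly as in Lemma~\ref{lem:circle_two-sided}.

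For part~2 (non-periodic case) the naive dominance argument fails, since one-sided and two-sided are \emph{not} comparable in the ordering $\preceq$: the one-sided backward edges carry weight $q/D>q/2D$, whereas one-sided forward edges carry weight $0<q/2D$. I would therefore generalize the symmetric-pairing argument of Theorem~\ref{lem:f_one-sided<f_two-sided}. The box $B_D$ carries the central reflection $x\mapsto(m+1)\mathbf{1}-x$ (indeed the full hyperoctahedral reflection group), which partitions the nodes into symmetric orbits; for each orbit $O$ I would study $\nu_O(t):=\sum_{j\in O}\bigl[\mathrm{Prob}(X^{\mathrm{one}}_j=0)-\mathrm{Prob}(X^{\mathrm{two}}_j=0)\bigr]$ and aim to prove $\nu_O(t)>0$, which summed over all orbits gives~\eqref{eq:prob_two<one} and hence the strict inequality. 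The analogue of the inductive step $\nu(t,k-1,M)\Rightarrow\nu(t,k,M+2)$ would be a boundary-peeling induction on the side length $m$: freezing the outer shell to the non-adopted state reduces $B_D$ of side $m$ to side $m-2$, and one then argues that reinstating the symmetric, equal-weight influence of the shell cannot reverse the sign of $\nu_O$.

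The main obstacle, shared by both parts, is the loss of the single integer index $k$ (the ``length of a chain'') that made the one-dimensional hierarchy close and the explicit formulas available. In $D\ge2$ a cluster $\Omega$ is an arbitrary polyomino-like shape, and attaching one boundary node produces genuinely new shapes (straight versus bent triples, L- and T-junctions, and so on); the indifference principle prunes dangling edges but does \emph{not} collapse these into a finite, single-parameter family, so the hierarchy in part~1 never closes and the shape-by-shape matching has no finite recursion to rest on. For part~2 the same difficulty reappears in the peeling step: removing the outer shell of $B_D$ lowers the degree of the new boundary layer, and the ``combined shell influence is equal in both dynamics'' claim, transparent for the two endpoints of a line, becomes a statement about an entire $(D-1)$-dimensional boundary with faces, edges, and corners of differing degrees, so the clean reduction to a smaller box is lost. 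Overcoming this — presumably via a monotonicity or coupling argument robust to the cluster combinatorics, rather than the explicit computations of Sections~\ref{sec:circle}--\ref{sec:f_one-sided<f_two-sided} — is what the conjecture leaves open.
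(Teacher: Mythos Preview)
The statement you are attempting to prove is a \emph{conjecture}: the paper does not prove it and offers no proof to compare against. The authors state it on the basis of Lemma~\ref{lem:circle_two-sided} and Theorem~\ref{lem:f_one-sided<f_two-sided} in $D=1$ together with the numerical evidence of Figures~\ref{fig:cellular_2D_one_versus_two_sided_main} and~\ref{fig:cellular_3D_one_versus_two_sided_main} for $D=2,3$, and explicitly leave the $D\ge 2$ case open (``In higher dimensions, the analysis becomes much harder, and so we resort to numerics'').

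Your proposal is therefore not a proof but a correct diagnosis of why the one-dimensional machinery does not lift. Your hierarchy for $S_\Omega$ on the torus is the right starting point, and your identification of the obstruction --- that in $D\ge 2$ the clusters $\Omega$ are polyomino shapes rather than chains indexed by a single integer, so the indifference-principle reductions $S_{k+1}=e^{-pt}S_k(\cdot;M-1)$ have no analogue that closes the system --- is exactly the reason the paper stops at a conjecture. Likewise, your observation for part~2 that the one-sided and two-sided boxes are not $\preceq$-comparable, so dominance fails and one would need a higher-dimensional version of the symmetric-pairing argument of Lemmas~\ref{lem:end_nodes2}--\ref{lem:end_nodes}, is on target; but the boundary-peeling induction you sketch does not go through as stated, for the reason you yourself give: the outer shell of $B_D$ is a $(D-1)$-dimensional object whose nodes have heterogeneous degrees, and the ``equal combined influence'' cancellation that works for the two endpoints of a line has no clean analogue. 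In short, you have accurately located the gap rather than closed it, which is consistent with the paper's own position.
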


\begin{figure}[ht!]
\begin{center}
\scalebox{0.7}{\includegraphics{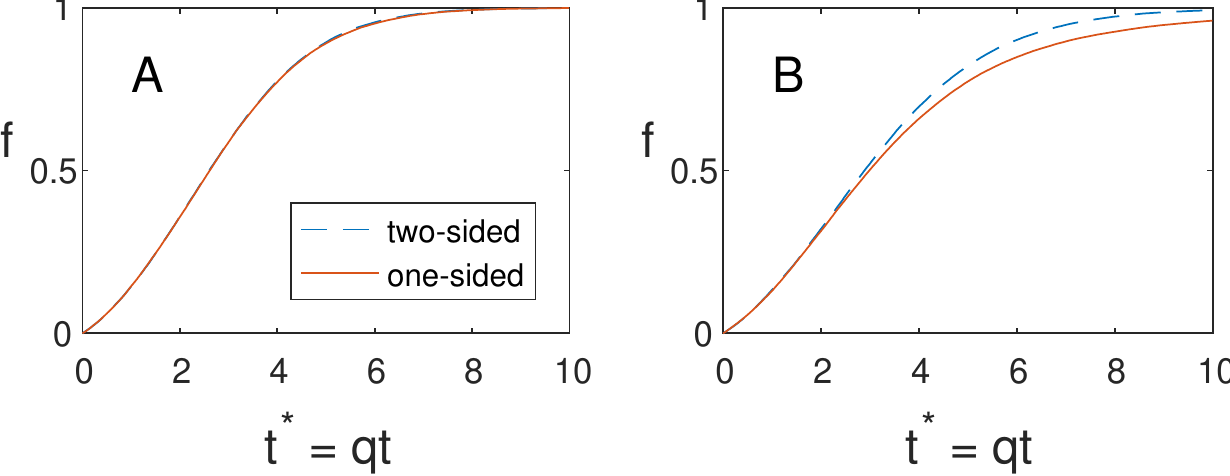}}
\caption{Same as \emph{Figure~\ref{fig:cellular_1D_one_versus_two_sided_main}} on a two-dimensional Cartesian grid with $M = 6 \times 6$ nodes. \emph{(}A\emph{)}~Periodic boundary conditions. \emph{(}B\emph{)}~Non-periodic boundary conditions.}
\label{fig:cellular_2D_one_versus_two_sided_main}
\end{center}
\end{figure}

\begin{figure}[ht!]
\begin{center}
\scalebox{0.7}{\includegraphics{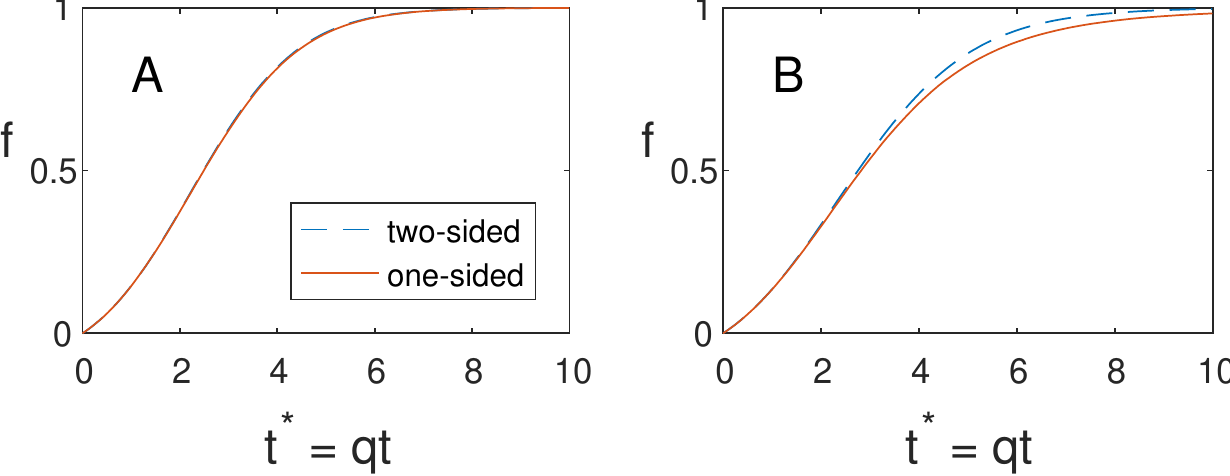}}
\caption{Same as \emph{Figure~\ref{fig:cellular_1D_one_versus_two_sided_main}} on a three-dimensional Cartesian grid with $M = 6 \times 6 \times 6$ nodes. \emph{(}A\emph{)}~Periodic boundary conditions. \emph{(}B\emph{)}~Non-periodic boundary conditions.}
\label{fig:cellular_3D_one_versus_two_sided_main}
\end{center}
\end{figure}

\appendix

\section{End of proof of Lemma~\ref{lem:1D_Left_Explicit_prop}}
\label{app:1D_Left_Explicit_prop}

Substituting the expression for $S(t;M-1)$ from~\eqref{eq:SM} 
into the right-hand side of~\eqref{final_form}, integrating, and equating the coefficients of the exponents on both sides of~\eqref{final_form}  
gives, after some algebra,
\begin{equation}
 \label{eq:SM-RHS}
\begin{aligned}
 S(t;M)  & = e^{-(p+q)t}+qe^{-(p+q)t}\sum_{k=1}^{M-2} A_{k,M-1}\int_0^te^{-kp\tau}d\tau+qe^{-(p+q)t}B_{M-1}\int_0^te^{-(M-1)p\tau+q\tau}d\tau\\
& = e^{-(p+q)t}-qe^{-(p+q)t}\sum_{k=1}^{M-2} A_{k,M-1}\frac{e^{-kpt}-1}{kp}+qe^{-(p+q)t}B_{M-1}\frac{e^{-(M-1)pt+qt}-1}{q-(M-1)p} .
\end{aligned}
\end{equation}
Equating the coefficients of $e^{-Mpt}$ in~\eqref{eq:SM} and~\eqref{eq:SM-RHS} gives
$
B_M = \frac{q}{q-(M-1)p}B_{M-1}.
$  
Since $B_1 = 1$, see~\eqref{eq:S_M=1},  we get that 
\begin{equation}
  \label{eq:BM}
B_M = \frac{q^{M-1}}{\prod_{j=1}^{M-1}(q-jp)}.
\end{equation} 
Equating the coefficients of $e^{-(kp+q)t}$ in~\eqref{eq:SM} and~\eqref{eq:SM-RHS} gives
\begin{equation}
\label{A_km}
A_{k,M} = -\frac{qA_{k-1,M-1}}{(k-1)p}, \qquad k=2,\dots,M-1, 
\end{equation}
and
\begin{equation}
\label{A_1M}
\begin{aligned}
A_{1,M} & = 1+\frac{q}{p}\sum_{k=1}^{M-2}\frac{A_{k,M-1}}{k}-\frac{qB_{M-1}}{q-(M-1)p}
=1+\frac{q}{p}\sum_{k=1}^{M-2}\frac{A_{k,M-1}}{k}-\frac{q^{M-1}}{\prod_{j=1}^{M-1}(q-jp)},
\end{aligned}
\end{equation}
where in the second equality we used~\eqref{eq:BM}.
By~\eqref{A_km} and~\eqref{A_1M}, 
\begin{equation}
\label{A_m-k}
\begin{aligned}
A_{M-k,M} & = \frac{(-q)^{M-k-1}}{(M-k-1)!p^{M-k-1}}A_{1,k+1}\\
& = \frac{(-q)^{M-k-1}}{(M-k-1)!p^{M-k-1}}\left[1-\frac{q^k}{\prod_{j=1}^{k}(q-jp)}+\frac{q}{p}\sum_{j=1}^{k-1}\frac{A_{j,k}}{j}\right].
\end{aligned}
\end{equation}
Using~\eqref{A_km} again, we get that
\begin{equation}
\label{A_jk}
A_{j,k} = \frac{p^{M-k}(M-k+j-1)!}{(-q)^{M-k}(j-1)!}A_{M-k+j,M}.
\end{equation}
Plugging \eqref{A_jk} into \eqref{A_m-k} yields
$$
A_{M-k,M} = \frac{(-q)^{M-k-1}}{(M-k-1)!p^{M-k-1}}\left[1-\frac{q^k}{\prod_{j=1}^{k}(q-jp)}+\frac{q}{p}\sum_{j=1}^{k-1}\frac{p^{M-k}(M-k+j-1)!}{(-q)^{M-k}j!}A_{M-k+j,M}\right] .
$$
Therefore, 
\begin{equation}
\label{A_m-k2}
A_{M-k,M}=  \frac{(-q)^{M-k-1}}{(M-k-1)!p^{M-k-1}} c_{M-k},
\end{equation}
where
\begin{equation}
\label{defC}
c_{M-k} := 1-\frac{q^k}{\prod_{j=1}^{k}(q-jp)}-\sum_{j=1}^{k-1}\frac{p^{M-k-1}(M-k+j-1)!}{(-q)^{M-k-1}j!}A_{M-k+j,M}. 
\end{equation}
Relation~\eqref{1D_Left_Explicit-periodic-A} follows from~$f = 1-S$ and from~\eqref{eq:SM}, \eqref{eq:BM}, and~\eqref{A_m-k2} with $(M-k) \longrightarrow k$.

Plugging \eqref{A_m-k2} into \eqref{defC} yields 
\begin{equation*}
c_{M-k} = 1-\frac{q^k}{\prod_{j=1}^{k}(q-jp)}-\sum_{j=1}^{k-1}\frac{p^{-j}(-q)^{j}}{j!}c_{M-k+j}.
\end{equation*}
Making the change of variables $\tilde{j}:=k-j$ in the summation gives~\eqref{c_m-k}.

\section{Proof of~\eqref{eq:two_sided_evolution}} 
\label{app:two_sided_int_node}

Following~\cite[proof of Lemma 8]{OR-10}, we can obtain  from~\eqref{eq:GeneralAdoptionRates_Left_Right-line} that
\begin{equation}
\label{eq:two_sided_line_ode_original}
\begin{aligned}
\frac{\text{d}}{\text{dt}}\text{Prob}\left(X_j(t)=0\right)=&-p\text{Prob}\left(X_{j-1}(t)=0, X_{j}(t)=0, X_{j+1}(t)=0\right)\\
&-\left(p+\frac{q}{2}\right)\text{Prob}\left(X_{j-1}(t)=0, X_{j}(t)=0, X_{j+1}(t)=1\right)\\
&-\left(p+\frac{q}{2}\right)\text{Prob}\left(X_{j-1}(t)=1, X_{j}(t)=0, X_{j+1}(t)=0\right)\\
&-\left(p+q\right)\text{Prob}\left(X_{j-1}(t)=1, X_{j}(t)=0, X_{j+1}(t)=1\right).
\end{aligned}
\end{equation}
The configuration $\{X_{j}(t)=0, X_{j+1}(t)=0\}$ can be written as a union of two disjoint configurations:
\begin{equation*}
\begin{aligned}
\{X_{j}(t)=0,& X_{j+1}(t)=0\} =\\ 
&\{X_{j-1}(t)=0,X_{j}(t)=0, X_{j+1}(t)=0\}\cup \{X_{j-1}(t)=1,X_{j}(t)=0, X_{j+1}(t)=0\}.
\end{aligned}
\end{equation*}
Therefore, it probability is the sum of the probabilities of the disjoint configurations:
\begin{equation}
\label{eq:two_sided_line1}
\begin{aligned}
\text{Prob}\left(X_{j}(t)=0,X_{j+1}(t)=0\right)=&~~~\text{Prob}\left(X_{j-1}(t)=0,X_{j}(t)=0,X_{j+1}(t)=0\right)\\
&+\text{Prob}\left(X_{j-1}(t)=1,X_{j}(t)=0,X_{j+1}(t)=0\right).
\end{aligned}
\end{equation}
Similarly, 
\begin{equation}
\label{eq:two_sided_line2}
\begin{aligned}
\text{Prob}\left(X_{j-1}(t)=0,X_{j}(t)=0\right)=&~~~\text{Prob}\left(X_{j-1}(t)=0,X_{j}(t)=0,X_{j+1}(t)=0\right)\\
&+\text{Prob}\left(X_{j-1}(t)=0,X_{j}(t)=0,X_{j+1}(t)=1\right),
\end{aligned}
\end{equation}
and 
\begin{equation}
\label{eq:two_sided_line3}
\begin{aligned}
\text{Prob}\left(X_{j}(t)=0\right)=&~~~\text{Prob}\left(X_{j-1}(t)=0,X_{j}(t)=0,X_{j+1}(t)=0\right)\\
&+\text{Prob}\left(X_{j-1}(t)=1,X_{j}(t)=0,X_{j+1}(t)=0\right)\\
&+\text{Prob}\left(X_{j-1}(t)=0,X_{j}(t)=0,X_{j+1}(t)=1\right)\\
&+\text{Prob}\left(X_{j-1}(t)=1,X_{j}(t)=0,X_{j+1}(t)=1\right).
\end{aligned}
\end{equation}
Rearranging \eqref{eq:two_sided_line1}, \eqref{eq:two_sided_line2} and \eqref{eq:two_sided_line3}, and plugging the relevant terms of these equations into \eqref{eq:two_sided_line_ode_original}, leads to \eqref{eq:two_sided_evolution}.

A similar derivation yields the relation 
\begin{equation}
\label{eq:evolution_j_j+1_twosided}
\begin{aligned}
\frac{\text{d}}{\text{dt}}\text{Prob}\left(X_j(t)=0,X_{j+1}(t)=0\right) =& -(2p+q)\text{Prob}\left(X_j(t)=0,X_{j+1}(t)=0\right)\\
&+\frac{q}{2}\Big[\text{Prob}\left(X_{j-1}(t)=0,X_{j}(t)=0,X_{j+1}(t)=0\right)\\
&~~ \quad +\text{Prob}\left(X_j(t)=0,X_{j+1}(t)=0,X_{j+2}(t)=0\right)\Big]
\end{aligned}
\end{equation}
which is used in the derivation of~\eqref{eq:psi6}.

\section{Proof of Lemma \ref{lem:circle_with_line}}
\label{app:hybrid}
We first consider node  $j+k$ which is not on the circle, where $k=1, \dots, K$. By the indifference principle, its probability to be a non-adopter by time $t$ can be calculated from the equivalent network in Figure \ref{fig:circle_with_line}(B). Therefore, $\text{Prob}\left(X_{j+k}(t)=0\right)=S(t,p,q,M-K+k)$, or equivalently
\begin{equation}
\label{eq:hybrid_line}
\text{Prob}\left(X_{j+k}(t)=1\right)=f_{\text{circle}}(t;p,q,M-K+k).
\end{equation}
For the $M-K$ nodes on the circle, it can easily be verified that edges \circled{$j$} $\to$ \circled{$j+1$} $\to \ldots$ \circled{$j+K$} are non-influentials to any of them. Therefore, by the indifference principle, the probability of such a node $j$ to become an adopter is
\begin{equation}
\label{eq:hybrid_circle}
\text{Prob}\left(X_{j}(t)=1\right)=f_{\text{circle}}(t;p,q,M-K).
\end{equation}
Combining \eqref{eq:f_sum}, \eqref{eq:hybrid_line}, and \eqref{eq:hybrid_circle} yields the result.

\begin{figure}[ht!]
\begin{center}
\scalebox{0.8}{\includegraphics{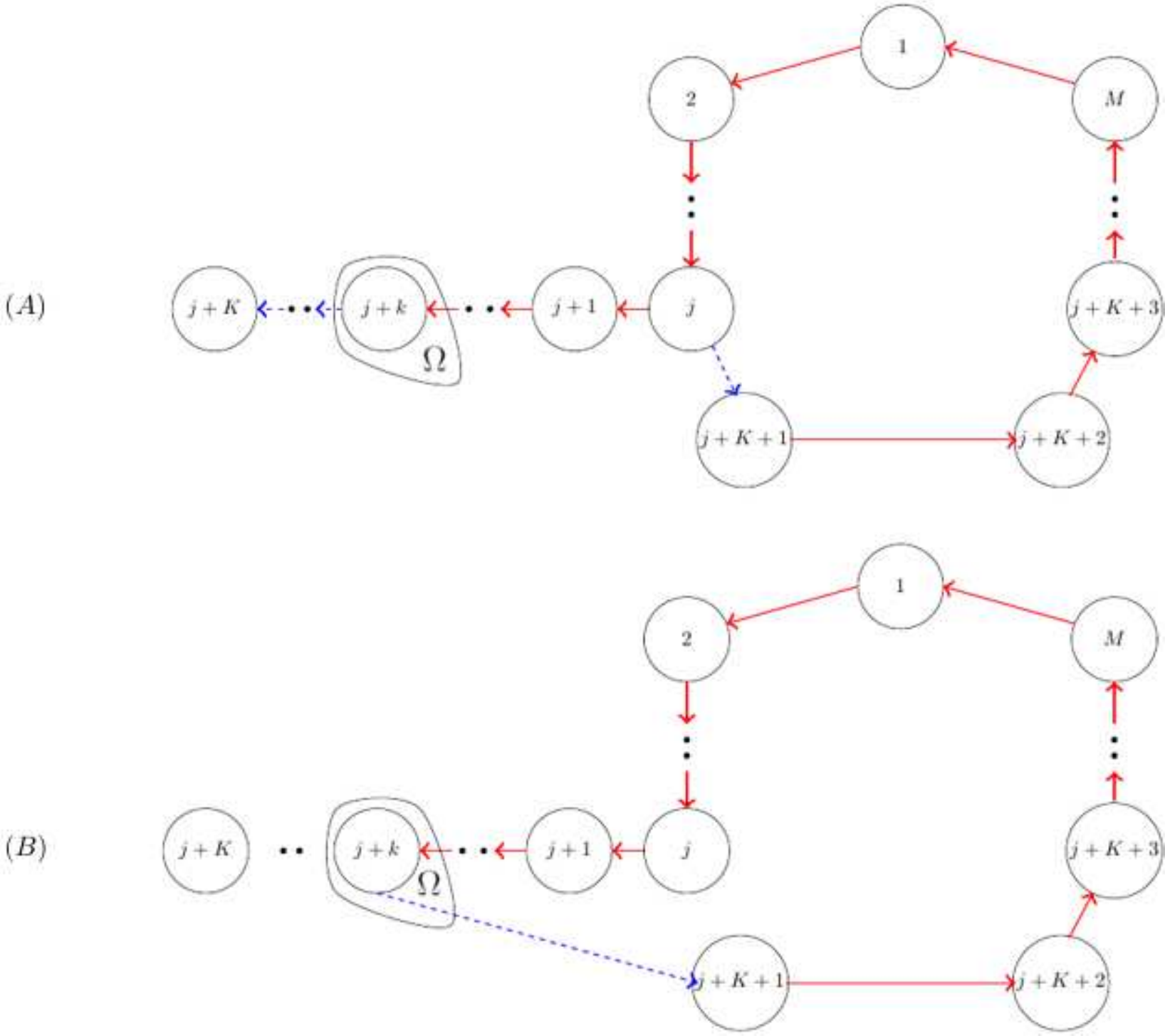}}
\caption{ Equivalent networks for the calculation of $\emph{\text{Prob}}\emph{(}X_{j+k}(t)=0\emph{)}$. Solid red and dashed blue arrows are influential and non-influential edges to $\Omega=\{j+k\}$, respectively. \emph{(}A\emph{)}~One-sided circle with a one-sided ray.  \emph{(}B\emph{)}~The non-influential edges \protect\circled{$j$} $\to$ \protect\circled{$j+K+1$} and \protect\circled{$j+k$} $\to$ \protect\circled{$j+k+1$} are deleted; the non-influential edge \protect\circled{$j+k$} $\to$ \protect\circled{$j+K+1$} is added.}
\label{fig:circle_with_line}
\end{center}
\end{figure}

\section{Proof of Lemma \ref{lem:X1XM}}
\label{app:X1XM}
By the indifference principle for $\Omega = \{1\}$, the networks in Figures  \ref{fig:indifference_line_first_node}(A1) and \ref{fig:indifference_line_first_node}(A2) are equivalent for the one-sided case, and the networks in Figures \ref{fig:indifference_line_first_node}(B1) and \ref{fig:indifference_line_first_node}(B2) are equivalent for the two-sided case. By the strong version of the generalized dominance principle applied to networks \ref{fig:indifference_line_first_node}(A2) and \ref{fig:indifference_line_first_node}(B2), see remark after Lemma \ref{lem:dominance-principle-generalized}, $\text{Prob}\left(X^{\rm two-sided}_1(t)=0\right)<\text{Prob}\left(X^{\rm one-sided}_1(t)=0\right)$. Similarly, by the indifference principle for $\Omega=\{M\}$, the networks in Figures \ref{fig:indifference_line_last_node}(B1) and \ref{fig:indifference_line_last_node}(B2) are equivalent for the two-sided case. By the strong version of the generalized dominance principle applied to networks \ref{fig:indifference_line_last_node}(A) and \ref{fig:indifference_line_last_node}(B2), $\text{Prob}\left(X^{\rm two-sided}_M(t)=0\right)>\text{Prob}\left(X^{\rm one-sided}_M(t)=0\right)$.

\begin{figure}[ht!]
\begin{center}
\vspace{0.5cm}
\scalebox{0.65}{\includegraphics{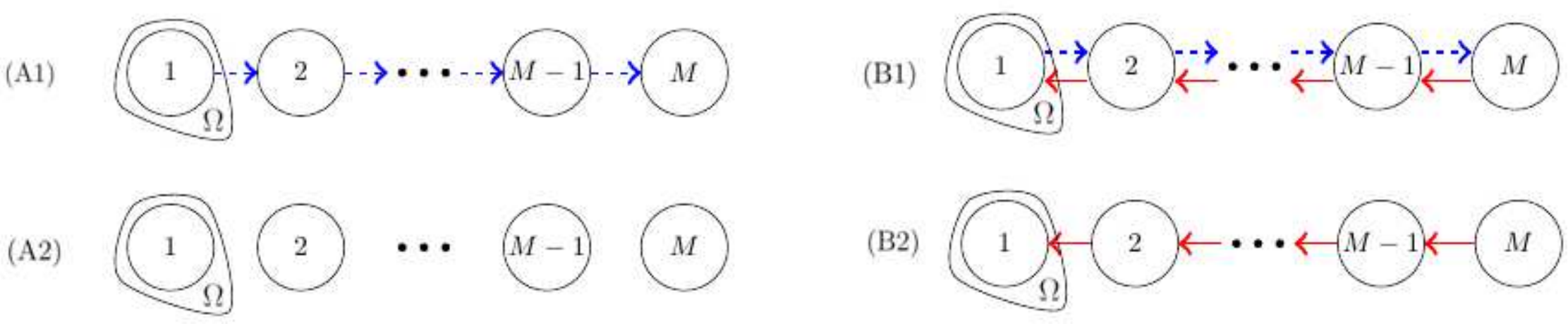}}
\caption{Equivalent networks for the calculation of $\emph{\text{Prob}}\emph{(}X_1(t)=0\emph{)}$ for the one-sided and the two-sided cases.  Solid red and dashed blue arrows correspond to influential and non-influential edges to $\Omega = \{1\}$, respectively. \emph{(}A1\emph{)}~One-sided line. 
\emph{(}A2\emph{)}~All non-influential edges in \emph{(}A1\emph{)} are deleted. \emph{(}B1\emph{)}~Two-sided line. \emph{(}B2\emph{)}~All non-influential edges in \emph{(}B1\emph{)} are deleted.
} 
\label{fig:indifference_line_first_node}
\end{center}
\end{figure}


\begin{figure}[ht!]
\begin{center}
\scalebox{0.65}{\includegraphics{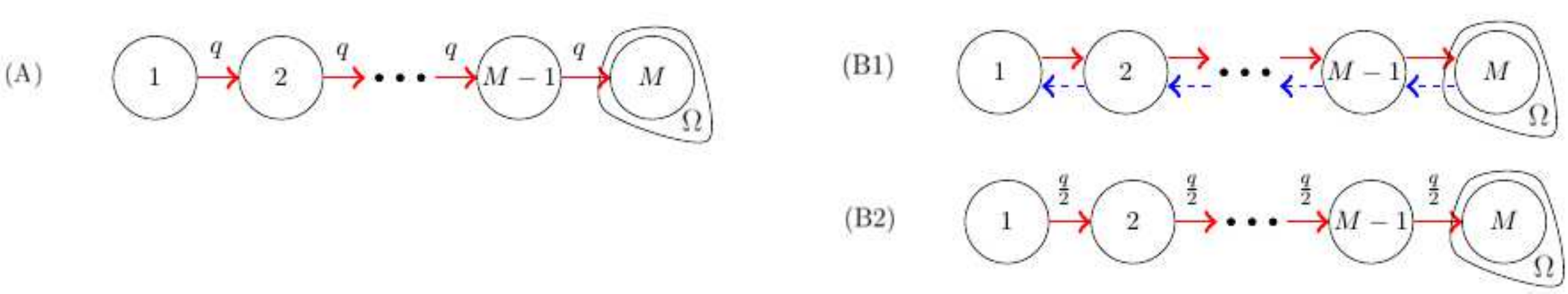}}
\caption{Equivalent networks for the calculation of $\emph{\text{Prob}}\emph{(}X_M(t)=0\emph{)}$ for the one-sided and the two-sided cases.  Solid red and dashed blue arrows correspond to influential and non-influential edges to $\Omega = \{M\}$, respectively. \emph{(}A\emph{)}~One-sided line (with an internal influence of $q$). \emph{(}B1\emph{)}~Two-sided line \emph{(}with an internal influence of $\frac{q}{2}$\emph{)}. \emph{(}B2\emph{)}~All non-influential edges in \emph{(}B1\emph{)} are deleted, resulting in a one-sided line with an internal influence of $\frac{q}{2}$.
} 
\label{fig:indifference_line_last_node}
\end{center}
\end{figure}

\section{Proof of Lemma \ref{lem:end_nodes2}}
\label{app:end_nodes2}
{
We begin with several auxiliary results.
\begin{lemma}
\label{lem:Gronwall}
 Let $\sigma(t)$ be the solution of 
\begin{equation*}
\frac{\text{d}}{\text{dt}}\sigma(t) + K\sigma(t)=b(t), \qquad t>0, \quad \sigma(0)=0,
\end{equation*}
where $K$ is a constant, and $b(t)>0$ for $t>0$. Then
$
\sigma(t)>0 \text{~for~} t>0.
$ 
\end{lemma}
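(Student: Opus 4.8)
The plan is to solve the linear first-order ODE explicitly by the method of integrating factors, and then read off positivity directly from the resulting integral representation. The key observation is that the sign of the constant $K$ is irrelevant to the conclusion, because the integrating factor $e^{Kt}$ is positive regardless of whether $K$ is positive, negative, or zero.

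First I would multiply the equation $\frac{\text{d}}{\text{dt}}\sigma(t) + K\sigma(t) = b(t)$ by $e^{Kt}$ and recognize the left-hand side as a total derivative, giving
$$
\frac{\text{d}}{\text{dt}}\left(e^{Kt}\sigma(t)\right) = e^{Kt}b(t).
$$
Integrating this from $0$ to $t$ and using the initial condition $\sigma(0)=0$ yields the representation
$$
\sigma(t) = e^{-Kt}\int_0^t e^{K\tau}b(\tau)\,\text{d}\tau, \qquad t>0.
$$

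The conclusion then follows immediately: for every $\tau\in(0,t)$ the integrand $e^{K\tau}b(\tau)$ is strictly positive, since $e^{K\tau}>0$ and $b(\tau)>0$ by hypothesis. Hence the integral is strictly positive for $t>0$, and multiplying by the positive prefactor $e^{-Kt}$ preserves this, so $\sigma(t)>0$ for $t>0$, as claimed.

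There is essentially no genuine obstacle here; the statement is a standard variation-of-parameters positivity result. The only point worth stating carefully is that no sign assumption on $K$ is needed, so the argument applies verbatim in every application of this lemma later in the paper (e.g.\ with $K=p+q$). One might alternatively phrase the argument via the comparison principle, but the explicit integrating-factor computation is both shorter and entirely self-contained, so I would present it that way.
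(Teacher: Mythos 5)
Your proposal is correct and follows essentially the same route as the paper: both multiply by the integrating factor $e^{Kt}$, integrate from $0$ to $t$, and use $\sigma(0)=0$ together with the positivity of $b$ (the paper keeps this as a differential inequality, while you write out the explicit solution, which is an immaterial difference). Your explicit remark that no sign assumption on $K$ is needed is a nice touch but does not change the substance.
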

\begin{proof}
Let $t>0$. Since $b(t)>0$,  we have that 
$
\frac{\text{d}}{\text{dt}}\sigma(t)+K\sigma(t)>0.
$
Multiplying both sided by $e^{Kt}$ yields
$\frac{\text{d}}{\text{dt}}\big(e^{Kt}\sigma(t)\big)>0.$
Integrating both sides from $0$ to $t$ and rearranging leads to
$\sigma(t)>e^{-Kt}\sigma(0).
$
Since $\sigma(0)=0$, the result follows.
\end{proof}

\begin{lemma}
\label{lem:gamma}
Let $M\geq 3$, and let
\begin{equation*}
\gamma(t,k,M):=S_k(t;p,q,M)-2S_{k+1}(t;p,q,M)+S_{k+2}(t;p,q,M).
\end{equation*}
Then $\gamma(t,k,M)>0$ for $t>0$ and $k=1, \dots, M-2$.
\end{lemma}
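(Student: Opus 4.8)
The plan is to show that the discrete second difference $\gamma(t,k,M)$ in the variable $k$ satisfies a first-order linear ODE with a strictly positive forcing term and zero initial data, and then to invoke Lemma~\ref{lem:Gronwall} repeatedly via a backward induction on $k$. Throughout I would use the ODE system~\eqref{eq:sk_deriviative} for $\{S_k(t;M)\}$ together with the fact that $S_k(0)=1$, so that $\gamma(0,k,M)=1-2+1=0$ for every $k$.

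First I would record the elementary monotonicity fact that $\delta_k(t):=S_k(t;p,q,M)-S_{k+1}(t;p,q,M)>0$ for $t>0$ and $1\le k\le M-1$. This is immediate from the definition~\eqref{eq:S_k-def}, since $\delta_k(t)=\text{Prob}(X_{j+1}(t)=0,\dots,X_{j+k}(t)=0,\,X_{j+k+1}(t)=1)$, which is strictly positive for $t>0$ because node $j+k+1$ can adopt externally at the positive rate $p$ while the preceding $k$ nodes remain non-adopters. This supplies the positivity of the forcing term below.

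Next I would differentiate $\gamma(t,k,M)=S_k-2S_{k+1}+S_{k+2}$ and substitute~\eqref{eq:sk_deriviative}. For $1\le k\le M-3$ all three indices obey the generic equation $S_m'=-(mp+q)S_m+qS_{m+1}$, and after collecting the coefficients of $S_k,S_{k+1},S_{k+2},S_{k+3}$ one obtains
\begin{equation*}
\frac{\text{d}}{\text{dt}}\gamma(t,k,M)=-(kp+q)\gamma(t,k,M)+q\,\gamma(t,k+1,M)+2p\,\delta_{k+1}(t).
\end{equation*}
For the top index $k=M-2$ the equation~\eqref{eq:sM_deriviative} for $S_M$ lacks the $qS_{M+1}$ term, and the same bookkeeping instead yields
\begin{equation*}
\frac{\text{d}}{\text{dt}}\gamma(t,M-2,M)=-((M-2)p+q)\gamma(t,M-2,M)+(2p+q)\,\delta_{M-1}(t).
\end{equation*}
The main technical content is precisely this coefficient bookkeeping, and the one subtlety to watch is the separate treatment of $k=M-2$ forced by the distinguished equation for $S_M$.

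Finally I would run a backward induction on $k$, from $k=M-2$ down to $k=1$. In the base case $k=M-2$ the forcing term $(2p+q)\,\delta_{M-1}(t)$ is strictly positive for $t>0$, and since $\gamma(0,M-2,M)=0$, Lemma~\ref{lem:Gronwall} (with $K=(M-2)p+q$) gives $\gamma(t,M-2,M)>0$ for $t>0$. For the inductive step, assuming $\gamma(t,k+1,M)>0$ for $t>0$, the forcing term $q\,\gamma(t,k+1,M)+2p\,\delta_{k+1}(t)$ is strictly positive for $t>0$, so Lemma~\ref{lem:Gronwall} again yields $\gamma(t,k,M)>0$. This completes the induction and hence the proof. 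I expect no genuine obstacle beyond the algebra in deriving the two ODEs and keeping the boundary index $k=M-2$ separate.
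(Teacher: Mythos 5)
Your proof is correct --- I verified the coefficient bookkeeping in both ODEs (for $1\le k\le M-3$ and for the boundary case $k=M-2$), and the backward induction via Lemma~\ref{lem:Gronwall} is sound --- but it takes a genuinely different route from the paper. The paper's proof is a one-liner: citing a combinatorial identity from~\cite{OR-10} (inclusion--exclusion together with translation invariance), it identifies the second difference directly as the probability of a specific configuration,
\begin{equation*}
\gamma(t,k,M)=\text{Prob}\left(X_{j+1}(t)=1,\,X_{j+2}(t)=0,\ldots,X_{j+k+1}(t)=0,\,X_{j+k+2}(t)=1\right),
\end{equation*}
i.e.\ an adopter, then $k$ non-adopters, then an adopter, which is strictly positive for $t>0$. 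You instead establish only the first-difference positivity $\delta_k=S_k-S_{k+1}>0$ probabilistically, and then lift it to the second difference through the ODE hierarchy~\eqref{eq:sk_deriviative}--\eqref{eq:sM_deriviative} and a Gronwall-type induction. The paper's approach buys brevity and a transparent probabilistic meaning for $\gamma$; your approach buys self-containedness (no appeal to the second-difference identity of~\cite{OR-10}) and stays squarely within the paper's own toolkit --- indeed it is the same induction-plus-Gronwall pattern the paper uses to prove Lemmas~\ref{lem:beta} and~\ref{lem:s_decreasing}. It is worth noting that your probabilistic ingredient is one inclusion--exclusion step short of the paper's: since $\gamma(t,k,M)=\delta_k(t)-\delta_{k+1}(t)$, applying to this difference the same reasoning you used for $\delta_k$ (with a shift by translation invariance) yields the paper's configuration identity immediately, collapsing your ODE machinery into a single line.
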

\begin{proof}
By~\cite[Lemma 7]{OR-10},  
\begin{equation*}
\gamma(t,k,M) =
\text{Prob}\left(X_{j+1}(t)=1,X_{j+2}(t)=0,\ldots,X_{j+k+1}(t)=0,X_{j+k+2}(t)=1\right).
\end{equation*}
Since the right-hand side is strictly positive for $t>0$, the result follows.\footnote{{The restriction $M\geq 3$ follows from the term $S_{k+2}(t;p,q,M)$, since $M\geq k+2$  and $k\geq 1$.}}
\end{proof}
\begin{lemma}
\label{lem:beta}
Let $M\geq 2$, and let
\begin{equation*}
\label{eq:beta_def}
\begin{aligned}
\beta(t,k,M):=\big[S_k\big(t;p,\frac{q}{2},M\big)-S_{k+1}\big(t;p,\frac{q}{2},M\big)\big]
 -\big[S_{k}\big(t;p,q,M\big)-S_{k+1}\big(t;p,q,M\big)\big]. 
\end{aligned}
\end{equation*}
Then
\begin{equation}
\label{eq:beta}
\beta(t,k,M)>0, \qquad t>0, \quad k=1, \dots, M-1.
\end{equation}
\end{lemma}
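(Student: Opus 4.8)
The plan is to reduce the claim to a single application of the Grönwall-type positivity result of Lemma~\ref{lem:Gronwall}, applied inside a downward induction on~$k$. Introduce the shorthand $D_k(r):=S_k(t;p,r,M)-S_{k+1}(t;p,r,M)$ (suppressing $t,p,M$), the probability that a block of $k$ consecutive nodes are non-adopters while the next node has adopted, so that $\beta(t,k,M)=D_k(\tfrac{q}{2})-D_k(q)$. Since $S_k(0;p,r,M)=1$ for every~$k$, we have $D_k(0)=0$ and hence $\beta(0,k,M)=0$, which supplies the initial condition required by Lemma~\ref{lem:Gronwall}. The goal is to produce, for each $k$, an ODE $\beta'(t,k,M)+K_k\beta(t,k,M)=b_k(t)$ with constant $K_k$ and $b_k(t)>0$ for $t>0$.

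First I would derive the evolution of $D_k$ from the system~\eqref{eq:sk_deriviative}. Subtracting the equation for $S_{k+1}$ from that for $S_k$ gives, for $1\le k\le M-2$,
\begin{equation*}
\frac{\text{d}}{\text{dt}}D_k(r)=-(kp+r)D_k(r)+r\,D_{k+1}(r)+p\,S_{k+1}(t;p,r,M),
\end{equation*}
whereas the special equation~\eqref{eq:sM_deriviative} for $S_M$ produces the closed relation $\frac{\text{d}}{\text{dt}}D_{M-1}(r)=-((M-1)p+r)D_{M-1}(r)+p\,S_M(t;p,r,M)$, with no $D_M$ term. Using $D_k(\tfrac{q}{2})=\beta(t,k,M)+D_k(q)$ and collecting terms, the $r=\tfrac{q}{2}$ and $r=q$ equations combine, for $1\le k\le M-2$, into
\begin{equation*}
\frac{\text{d}}{\text{dt}}\beta(t,k,M)+\Big(kp+\tfrac{q}{2}\Big)\beta(t,k,M)=\tfrac{q}{2}\beta(t,k+1,M)+\tfrac{q}{2}\big[D_k(q)-D_{k+1}(q)\big]+p\big[S_{k+1}(t;p,\tfrac{q}{2},M)-S_{k+1}(t;p,q,M)\big],
\end{equation*}
while for $k=M-1$ the right-hand side is simply $\tfrac{q}{2}D_{M-1}(q)$, since the $S_M$ contributions cancel ($S_M=e^{-Mpt}$ independently of~$r$).

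Next I would run the induction downward from $k=M-1$ to $k=1$. In the base case $k=M-1$ the forcing $\tfrac{q}{2}D_{M-1}(q)=\tfrac{q}{2}[S_{M-1}(t;p,q,M)-S_M(t;p,q,M)]$ is strictly positive for $t>0$, so Lemma~\ref{lem:Gronwall} yields $\beta(t,M-1,M)>0$. For the step, assume $\beta(t,k+1,M)>0$; then each of the three summands in $b_k(t)$ is strictly positive for $t>0$: the term $\tfrac{q}{2}\beta(t,k+1,M)$ by the induction hypothesis, the term $\tfrac{q}{2}[D_k(q)-D_{k+1}(q)]=\tfrac{q}{2}\gamma(t,k,M)$ by Lemma~\ref{lem:gamma}, and the term $p[S_{k+1}(t;p,\tfrac{q}{2},M)-S_{k+1}(t;p,q,M)]$ by Lemma~\ref{lem:s2<s1} with $q_1=\tfrac{q}{2}<q=q_2$. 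A further application of Lemma~\ref{lem:Gronwall} with $K=kp+\tfrac{q}{2}$ then gives $\beta(t,k,M)>0$, completing the induction.

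The hard part will be casting the $\beta$-equation in a form whose right-hand side is manifestly positive. Subtracting the two ODE systems generates the cross terms $\tfrac{q}{2}D_k(q)$ and $-\tfrac{q}{2}D_{k+1}(q)$, and the key observation is that their difference is precisely the second difference $\gamma(t,k,M)$ controlled by Lemma~\ref{lem:gamma}, while the leftover $S$-difference has a definite sign by Lemma~\ref{lem:s2<s1}; the remaining coupling to $\beta(t,k+1,M)$ is exactly what the downward induction neutralizes. Finally, one should verify the index bookkeeping: the step invokes Lemma~\ref{lem:gamma} only for $1\le k\le M-2$, which is its stated range, the case $k=M-1$ is the base case, and for $M=2$ only $k=1=M-1$ arises, handled by the base case alone.
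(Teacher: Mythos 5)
Your proof is correct and takes essentially the same route as the paper's: a downward induction on $k$ producing the same ODE for $\beta(t,k,M)$, with the same three positive forcing terms (the induction hypothesis, $\frac{q}{2}\gamma(t,k,M)$ via Lemma~\ref{lem:gamma}, and the $q$-monotonicity of Lemma~\ref{lem:s2<s1}), closed off by Lemma~\ref{lem:Gronwall}, and the same base case identifying the forcing as $\frac{q}{2}\left[S_{M-1}(t;p,q,M)-S_M(t;p,q,M)\right]>0$. The only (cosmetic) difference is that you absorb $M=2$ into the base case $k=M-1$, whereas the paper treats $M=2$ separately by a direct application of Lemma~\ref{lem:s2<s1}.
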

\begin{proof}
We begin by considering the case of $M\geq 3$. We prove \eqref{eq:beta} by a reverse induction on $k$. 
Thus, we first prove that $\beta(t,M-1,M)>0$. Then we show that $\beta(t,k+1,M)>0$ implies $\beta(t,k,M)>0$.

Differentiating $\beta(t,M-1,M)$, and using~\eqref{eq:sk_deriviative} and
 \begin{equation}
\label{eq:sm_solution}
S_M(t;p,q,M) = e^{-Mpt},
\end{equation}
see~\eqref{eq:sM_deriviative}, gives
\begin{equation}
  \label{eq:beta_M-1}
\begin{aligned}
\frac{\text{d}}{\text{dt}}\beta(t,M-1,M)  & +\Big[(M-1)p+\frac{q}{2}\Big]\beta(t,M-1,M) = 
\frac{q}{2}\big[S_{M-1}\big(t;p,q,M\big)-S_M\big(t;p,q,M\big)\big].
\end{aligned}
\end{equation}
By~\cite[Lemma 3]{OR-10}, 
\begin{equation*}
S_{M-1}(t;M) - S_{M}(t;M) = 
\text{Prob}\left(X_1(t)=0,X_2(t)=0,\ldots,X_{M-1}(t)=0,X_{M}(t)=1\right).
\end{equation*}
Since the right-hand side  is strictly positive for $t>0$, so is the right-hand side of~\eqref{eq:beta_M-1}. Since $\beta(0,M-1,M)=0$, Lemma \ref{lem:Gronwall} implies that $\beta(t,M-1,M)>0$ for $t>0$.

For $k<M-1$, assume that $\beta(t,k+1,M)>0$ for $t>0$.
Differentiating $\beta(t,k,M)$ and using \eqref{eq:sk_deriviative} yields
$
\frac{\text{d}}{\text{dt}}\beta(t,k,M)+\left(kp+\frac{q}{2}\right)\beta(t,k,M)=\frac{q}{2}\beta(t,k+1,M)+\frac{q}{2}\gamma(t,k,M)
+p\big[S_{k+1}\big(t;p,\frac{q}{2},M\big)-S_{k+1}\big(t;p,q,M\big)\big].
$
By the induction assumption, $\beta(t,k+1,M)>0$. By Lemma~\ref{lem:gamma}, $\gamma(t,k,M)>0$ for $M\geq 3$.
By Lemma \ref{lem:s2<s1}, the last term on the right-hand side is positive. Hence, since $\beta(0,k,M)=0$, it follows from Lemma~\ref{lem:Gronwall} that $\beta(t,k,M)>0$ for $t>0$ and $M \geq 3$.

When $M=2$, $\beta(t,1,2)=S_1\big(t;p,\frac{q}{2},2\big)-S_1\big(t;p,q,2\big)$,
since $S_M$ is independent of~$q$, see~\eqref{eq:sM_deriviative}. Hence, by Lemma \ref{lem:s2<s1}, $\beta(t,1,2)>0$.\footnote{{The restriction $M\geq 2$ follows from the term $S_{k+1}(t;p,q,M)$, since $M\geq k+1$  and $k\geq 1$.}}
\end{proof}

We now show that $S(t;p,q,M)$ is monotonically decreasing with $M$: 

\begin{lemma}
\label{lem:s_decreasing}
Let
\begin{equation}
\label{eq:alpha_def}
\alpha(t,k):=S_1(t;p,q,k)-S_1(t;p,q,k+1). 
\end{equation}
Then $\alpha(t,k)>0$ for $t>0$ and $k\geq 1$.
\end{lemma}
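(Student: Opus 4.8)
The plan is to show that the difference $\alpha(t,k)=S_1(t;p,q,k)-S_1(t;p,q,k+1)$ satisfies a linear first-order ODE of exactly the Gr\"onwall type treated in Lemma~\ref{lem:Gronwall}, and then to run an induction on $k$. The key observation is that differencing the closed recursion~\eqref{eq:final_form-ODE} in the size variable~$M$ is \emph{self-similar}: the forcing term in the ODE for $\alpha(t,k)$ turns out to be a strictly positive multiple of $\alpha(t,k-1)$, which is precisely what makes the induction close. Throughout I write $S(t;M)$ for $S_1(t;p,q,M)$, as in~\eqref{eq:s1_1s=s1_2s}.

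First I would record the evolution equation for $S$. Combining~\eqref{evolution_eq} with the indifference relation~\eqref{eq:indifference_2} gives~\eqref{eq:final_form-ODE}, namely
\[
S'(t;M)+(p+q)S(t;M)=qe^{-pt}S(t;M-1),\qquad S(0;M)=1,\qquad M\ge 2.
\]
Subtracting this identity at $M=k+1$ from the same identity at $M=k$ (legitimate once $k\ge 2$, so that both indices are $\ge 2$) yields
\[
\alpha'(t,k)+(p+q)\alpha(t,k)=qe^{-pt}\,\alpha(t,k-1),\qquad \alpha(0,k)=0,\qquad k\ge 2.
\]
This is the engine of the induction: its right-hand side is a positive multiple of the previous difference $\alpha(t,k-1)$.

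The base case $k=1$ I would treat directly, since there is no circle with zero nodes and hence no $\alpha(t,0)$ to feed the recursion. Using $S(t;1)=e^{-pt}$ from~\eqref{eq:S_M=1}, so that $S'(t;1)=-pe^{-pt}$, together with~\eqref{eq:final_form-ODE} at $M=2$, a short computation gives
\[
\alpha'(t,1)+(p+q)\alpha(t,1)=qe^{-pt}\bigl(1-e^{-pt}\bigr),
\]
whose right-hand side is strictly positive for $t>0$. Since $\alpha(0,1)=0$, Lemma~\ref{lem:Gronwall} (with $K=p+q$) gives $\alpha(t,1)>0$. For the inductive step ($k\ge 2$) I would assume $\alpha(t,k-1)>0$ for $t>0$; then the forcing term $qe^{-pt}\alpha(t,k-1)$ in the displayed ODE is strictly positive, and since $\alpha(0,k)=0$, Lemma~\ref{lem:Gronwall} again yields $\alpha(t,k)>0$. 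This closes the induction and proves the lemma for all $p,q>0$; note that no genericity condition on $q$ is required, because the argument uses only the ODE~\eqref{eq:final_form-ODE} and never the explicit formula~\eqref{1D_Left_Explicit-periodic}.

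The only mildly delicate point, and the place I would be most careful, is the base case: the recursion expressing $\alpha(t,k)$ through $\alpha(t,k-1)$ formally demands $\alpha(t,0)$, which is undefined. The direct computation at $k=1$ resolves this, and one checks \emph{a posteriori} that the forcing term $qe^{-pt}(1-e^{-pt})$ is exactly $qe^{-pt}\bigl(S(t;0)-S(t;1)\bigr)$ under the natural convention $S(t;0)\equiv 1$, so the base case is in fact the same relation in disguise. Everything else is a routine application of the already-established Gr\"onwall lemma, in the same spirit as the proofs of Lemmas~\ref{lem:gamma} and~\ref{lem:beta}.
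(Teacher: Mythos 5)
Your proof is correct and follows essentially the same route as the paper: induction on $k$, where the differenced recursion $\alpha'(t,k)+(p+q)\alpha(t,k)=qe^{-pt}\,\alpha(t,k-1)$ with $\alpha(0,k)=0$ is closed by the Gr\"onwall-type Lemma~\ref{lem:Gronwall} (the paper obtains this same ODE from~\eqref{eq:sk_deriviative} and~\eqref{indifference_2}, which is exactly what~\eqref{eq:final_form-ODE} packages). The only difference is the base case: the paper notes $S_1(t;p,q,1)=S_1(t;p,0,2)$ and invokes the coupling-based Lemma~\ref{lem:s2<s1}, whereas you compute the forcing term $qe^{-pt}\bigl(1-e^{-pt}\bigr)$ directly from $S(t;1)=e^{-pt}$ and~\eqref{eq:final_form-ODE} at $M=2$ --- both are valid, and yours keeps the entire argument inside the ODE framework.
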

\begin{proof}
We proceed by induction on~$k$. Thus, we first prove that $\alpha(t,1)>0$, and then show that if $\alpha(t,k-1)>0$, then $\alpha(t,k)>0$.

By \eqref{eq:alpha_def}, 
$
\alpha(t,1) = S_1(t;p,q,1)-S_1(t;p,q,2).
$
Since $S_1(t;p,q,1)$ is the probability of a single node with no neighbors to be a non-adopter by time $t$, then $S_1(t;p,q,1)=S_1(t;p,0,2)$.
By Lemma \ref{lem:s2<s1}, $S_1(t;p,0,2)>S_1(t;p,q,2)$.  Therefore, $\alpha(t,1)>0$ .


Now, assume that $\alpha(t,k-1)>0$. Differentiating $\alpha(t,k)$ and using~\eqref{eq:sk_deriviative} and~\eqref{indifference_2} gives
$
\frac{d}{dt}\alpha(t,k) +(p+q)\alpha(t,k)=q\left[S_2(t;p,q,k)-S_2(t;p,q,k+1)\right]
 = qe^{-pt}\left[S_1(t;p,q,k-1)-S_1(t;p,q,k)\right]
 = qe^{-pt}\alpha(t,k-1).
$
By the induction assumption, the right-hand side is positive for $t>0$. In addition, $\alpha(0,k)=0$. Therefore, by Lemma \ref{lem:Gronwall},  $\alpha(t,k)>0$ for $t>0$.
\end{proof}

We are now ready to prove Lemma \ref{lem:end_nodes2}:

\begin{proof}[Proof of Lemma \ref{lem:end_nodes2}]


In Lemma \ref{lem:f_one-sided_line} we proved that 
\begin{equation}
\label{eq:one-sidedj}
\text{Prob}\left(X_k^{\rm one-sided}(t)=0\right)=S_1(t;p,q,k), \qquad k=1, \dots, M.
\end{equation} 
In addition, in Lemma \ref{lem:f_two-sided_line} we showed that 
\begin{equation}
\label{eq:two-sided1M}
\text{Prob}(X_1^{\rm two-sided}(t)=0)=\text{Prob}(X_M^{\rm two-sided}(t)=0) = S_1\left(t;p,\frac{q}{2},M\right).
\end{equation}
Plugging expressions \eqref{eq:one-sidedj} and \eqref{eq:two-sided1M} into \eqref{eq:nu_def} for $k=1$ gives 
\begin{equation}
\label{eq:nu_1}
\nu(t,1,M) = S_1(t;p,q,1)+S_1(t;p,q,M)-2S_1\left(t;p,\frac{q}{2},M\right).
\end{equation}
Differentiating $\nu(t,1,M)$  and using \eqref{eq:sk_deriviative} gives
$
\frac{\text{d}}{\text{dt}}\nu(t,1,M) + p\nu(t,1,M)=q\beta(t,1,M).
$
By Lemma \ref{lem:beta}, $\beta(t,1,M)>0$ for $t>0$ and $M\geq 2$. In addition, $\nu(0,1,M)=0$. Therefore, by Lemma \ref{lem:Gronwall}, 
\begin{equation}
\label{eq:nu_1>0}
\nu(t,1,M)>0, \qquad t>0, \quad M\geq 2.
\end{equation}
\end{proof}

\Remark For $M=1$, we have from~\eqref{eq:S_M=1} and~\eqref{eq:nu_1} that 
\begin{equation}
\label{eq:nu_1M=1}
\nu(t,1,1)=0.
\end{equation}
Therefore, combining \eqref{eq:nu_1>0} and \eqref{eq:nu_1M=1} yields
\begin{equation}
\label{eq:nu1_summery}
\nu(t,1,M)\geq 0, \qquad t>0,\quad M\geq 1.
\end{equation}
This inequality will be used in Appendix~\ref{app:psi}.





}
\section{Proof of Lemma \ref{lem:end_nodes}}
\label{app:end_nodes}
%
%
%
By symmetry, 
\begin{equation}
\label{eq:symmetryfornu2}
\text{Prob}(X_k^{\rm two-sided}(t)=0)=\text{Prob}(X_{M-k+1}^{\rm two-sided}(t)=0), \qquad k=1, \dots, M. 
\end{equation}
Plugging expressions \eqref{eq:one-sidedj} and \eqref{eq:symmetryfornu2} into \eqref{eq:nu_def} gives
\begin{equation*}
\label{eq:nu_k2}
\nu(t,k,M) = S_1(t;p,q,k)+S_1(t;p,q,M-k+1)-2\text{Prob}(X_k^{\rm two-sided}(t)=0).
\end{equation*}
Differentiating $\nu(t,k,M)$ and using \eqref{eq:two_sided_evolution} and \eqref{eq:sk_deriviative} yields 
\begin{equation}
\label{eq:evolution_nu22}
\frac{\text{d}}{\text{dt}}\nu(t,k,M) +(p+q)\nu(t,k,M)=q\psi(t,k,M),
\end{equation}
where
\begin{equation}
\label{eq:psi_def2}
\begin{aligned}
\psi&(t,k,M) = ~S_{2}(t;p,q,k)+S_{2}(t;p,q,M-k+1)\\
&-\text{Prob}(X_{k-1}^{\rm two-sided}(t)=0,X_{k}^{\rm two-sided}(t)=0)-\text{Prob}(X_k^{\rm two-sided}(t)=0,X_{k+1}^{\rm two-sided}(t)=0).
\end{aligned}
\end{equation}
In Appendix \ref{app:psi} we show that 
\begin{equation}
  \label{eq:psi(t,k)>02}
\psi(t,k,M)>0, \qquad t>0, \quad k\geq 2, \quad M\geq 2k-1.
\end{equation}
Therefore, $\psi(t,k,M)>0$ for $t>0$ and $2\leq k \leq \frac{M+1}{2}$.
In addition, $\nu(0,k,M)=0$.
Now applying Lemma \ref{lem:Gronwall} to \eqref{eq:evolution_nu22} gives the desired result.

%
%
%
%
%
%
%
\subsection{Proof of~\eqref{eq:psi(t,k)>02}} 
\label{app:psi}



 We proceed by induction on $k$.
Thus, we first show that $\psi(t,2,M)>0$ for $t>0$ and for $M \geq 2\times 2-1=3$. Then we show that if $\psi(t,k-1,M)>0$ for $M\geq 2(k-1)-1=2k-3$, then $\psi(t,k,M)>0$ for $M \geq 2k-1$.

By \eqref{eq:two_sided_line_j_j+1},
\begin{equation}
\label{eq:psi1}
\begin{aligned}
\psi(t,&k,M) = ~S_{2}(t;p,q,k)+S_{2}(t;p,q,M-k+1)\\
&-S\Big(t;p;\frac{q}{2},k-1\Big)S\Big(t;p;\frac{q}{2},M-k+1\Big)-S\Big(t;p;\frac{q}{2},k\Big)S\Big(t;p;\frac{q}{2},M-k\Big).
\end{aligned}
\end{equation}
By  Lemma \ref{lem:S_k^one-sided},
\begin{equation}
\label{eq:psi2}
S_{2}(t;p,q,k) = e^{-pt}S(t;p,q,k-1),
\quad S_{2}(t;p,q,M-k+1) =e^{-pt}S(t;p,q,M-k).
\end{equation}
By Lemma \ref{lem:s_decreasing}, 
\begin{equation}
\label{eq:psi3}
S\Big(t;p,\frac{q}{2},k\Big)< S\Big(t;p,\frac{q}{2},k-1\Big),
~~ 
S\Big(t;p,\frac{q}{2},M-k+1\Big)<S\Big(t;p,\frac{q}{2},M-k\Big).
\end{equation}
By \eqref{eq:sm_solution},
\begin{equation}
\label{eq:psi4}
S\Big(t;p,\frac{q}{2},1\Big)=S\big(t;p,q,1\big)=e^{-pt}.
\end{equation} 
Plugging \eqref{eq:psi2}, \eqref{eq:psi3}, and~\eqref{eq:psi4} into~\eqref{eq:psi1} with $k=2$, and then using \eqref{eq:nu_1} yields 
\begin{equation*}
\label{eq:psi5}
\psi(t,2,M)>e^{-pt}\left[S\big(t;p,q,1\big)+S\big(t;p,q,M-2\big)-2S\big(t;p,\frac{q}{2},M-2\big)\right]=e^{-pt}\nu(t,1,M-2).
\end{equation*}
By \eqref{eq:nu1_summery}, the right-hand side is not negative for $M-2\geq 1$, i.e. $M\geq 3$. Therefore, $\psi(t,2,M)>0$ for $M\geq 3$.

For $k\geq 3$, assume that $\psi(t,k-1,M)>0$ for $t>0$ and for $M\geq 2k-3$.
 Differentiating $\psi(t,k,M)$ in \eqref{eq:psi_def2} and using \eqref{eq:sk_deriviative} and \eqref{eq:evolution_j_j+1_twosided} yields
\begin{equation}
\label{eq:psi6}
\begin{aligned}
\frac{\text{d}}{\text{dt}}\psi(t,k&,M)+(2p+q)\psi(t,k,M)= q\Big[S_3\big(t;p,q,k\big)+S_3\big(t;p,q,M-k+1\big)\\
&-\frac{1}{2}\text{Prob}\left(X^{\rm two-sided}_{k-2}(t)=0,X^{\rm two-sided}_{k-1}(t)=0,X^{\rm two-sided}_{k}(t)=0\right)\\
&-\frac{1}{2}\text{Prob}\left(X^{\rm two-sided}_{k-1}(t)=0,X^{\rm two-sided}_{k}(t)=0,X^{\rm two-sided}_{k+1}(t)=0\right)\\
&-\frac{1}{2}\text{Prob}\left(X^{\rm two-sided}_{k-1}(t)=0,X^{\rm two-sided}_{k}(t)=0,X^{\rm two-sided}_{k+1}(t)=0\right)\\
&-\frac{1}{2}\text{Prob}\left(X^{\rm two-sided}_{k}(t)=0,X^{\rm two-sided}_{k+1}(t)=0,X^{\rm two-sided}_{k+2}(t)=0\right)\Big].
\end{aligned}
\end{equation}
By Lemma \ref{lem:S_k^one-sided}, for $k \ge 3$, 
$$S_3\big(t;p,q,k\big) = e^{-2pt}S\big(t;p,q,k-2\big), \quad 
S_2\big(t;p,q,k-1\big) = e^{-pt}S\big(t;p,q,k-2\big).
$$
Therefore, 
\begin{subequations}
\label{eq:psi7}
\begin{flalign}
S_3\big(t;p,q,k\big) = e^{-pt}S_2\big(t;p,q,k-1\big). 
\end{flalign}
Similarly,
\begin{flalign}
S_3\big(t;p,q,M-k+1\big) = e^{-pt}S_2\big(t;p,q,M-k\big). 
\end{flalign}
\end{subequations}
By the indifference principle, see Figure \ref{fig:indifference_line_3nodes_interior}, 
\begin{equation}
\label{eq:psi8}
\begin{aligned}
&\text{Prob}\left(X^{\rm two-sided}_{k}(t)=0,X^{\rm two-sided}_{k+1}(t)=0,X^{\rm two-sided}_{k+2}(t)=0\right) =\\ & S\big(t;p,q,1\big)S\Big(t;p,\frac{q}{2},k\Big)S\Big(t;p,\frac{q}{2},M-k-1\Big)= e^{-pt}S\Big(t;p,\frac{q}{2},k\Big)S\Big(t;p,\frac{q}{2},M-k-1\Big),
\end{aligned} 
\end{equation}
\begin{figure}[ht!]
\begin{center}
\scalebox{0.85}{\includegraphics{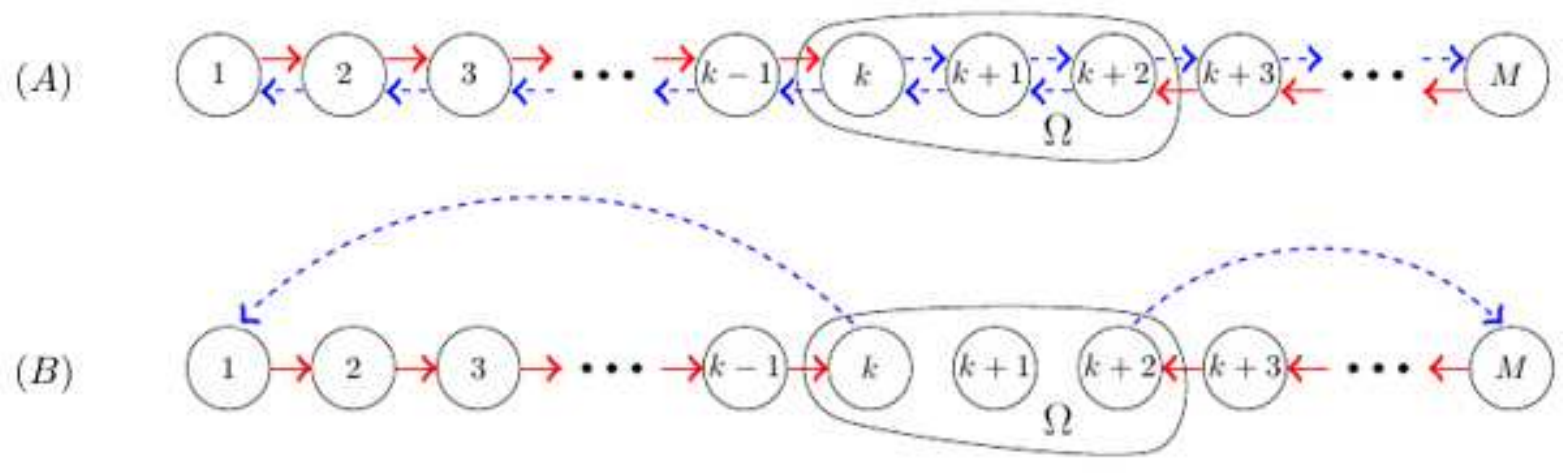}}
\caption{Equivalent networks for the calculation of $\emph{\text{Prob}}(X^{\rm two-sided}_{k}(t)=0,X^{\rm two-sided}_{k+1}(t)=0,X^{\rm two-sided}_{k+2}(t)=0)$ on the two-sided line. Solid red and dashed blue arrows are influential and 
non-influential edges to~$\Omega = \{k,k+1,k+2\}$, respectively. \emph{(}A\emph{)} Two-sided line. \emph{(}B\emph{)} All non-influential edges are  deleted. The non-influential edges \protect\circled{$k$} $\to$ \protect\circled{$1$} and \protect\circled{$k+2$} $\to$ \protect\circled{$M$} are added.}
\label{fig:indifference_line_3nodes_interior}
\end{center}
\end{figure}
where in the second equality we used~\eqref{eq:sm_solution}.
Plugging \eqref{eq:psi7} and \eqref{eq:psi8} into \eqref{eq:psi6} gives
\begin{equation}
\label{eq:psi9}
\begin{aligned}
\frac{\text{d}}{\text{dt}}&\psi(t,k,M)+(2p+q)\psi(t,k,M)=\\
& qe^{-pt}\Big[S_2\big(t;p,q,k-1\big)+S_2\big(t;p,q,M-k\big) -\frac{1}{2}S\big(t;p,\frac{q}{2},k-2\big)S\big(t;p,\frac{q}{2},M-k+1\big)\\
&-\frac{1}{2}S\big(t;p,\frac{q}{2},k-1\big)S\big(t;p,\frac{q}{2},M-k\big)-\frac{1}{2}S\big(t;p,\frac{q}{2},k-1\big)S\big(t;p,\frac{q}{2},M-k\big)\\
&-\frac{1}{2}S\big(t;p,\frac{q}{2},k\big)S\big(t;p,\frac{q}{2},M-k-1\big)\Big].
\end{aligned}
\end{equation}
By Lemma \ref{lem:s_decreasing},
\begin{subequations}
\label{eq:psi10}
\begin{flalign}
S\Big(t;p,\frac{q}{2},k-1\Big)<S\Big(t;p,\frac{q}{2},k-2\Big), \quad S\Big(t;p,\frac{q}{2},k\Big)<S\Big(t;p,\frac{q}{2},k-1\Big),
\end{flalign}
\begin{flalign}
S\Big(t;p,\frac{q}{2},M-k+1\Big)<S\Big(t;p,\frac{q}{2},M-k\Big), \quad S\Big(t;p,\frac{q}{2},M-k\Big)<S\Big(t;p,\frac{q}{2},M-k-1\Big).
\end{flalign}
\end{subequations}
By \eqref{eq:psi9} and \eqref{eq:psi10},
$
\frac{\text{d}}{\text{dt}}\psi(t,k,M)+(2p+q)\psi(t,k,M)> qe^{-pt}\psi(t,k-1,M-2), 
$
where
\begin{equation*}
\begin{aligned}
\psi(t,k-1,M-2) = & ~S_2\big(t;p,q,k-1\big)+S_2\big(t;p,q,M-k\big)
-S\Big(t;p,\frac{q}{2},k-2\Big)S\Big(t;p,\frac{q}{2},M-k\Big)\\
&-S\Big(t;p,\frac{q}{2},k-1\Big)S\Big(t;p,\frac{q}{2},M-k-1\Big),
\end{aligned}
\end{equation*}
see~\eqref{eq:psi1}.
By the induction assumption, the right-hand side is positive for $M-2\geq 2k-3$, i.e. $M\geq 2k-1$. Therefore, by Lemma \ref{lem:Gronwall}, $\psi(t,k,M)>0$ for $M\geq 2k-1$. 


\section*{Acknowledgments}
We thank K. Gillingham for useful discussions. 

\bibliographystyle{siamplain}
\bibliography{diffusion}

\begin{thebibliography}{10}

\bibitem{Albert-00}
{\sc R.~Albert, H.~Jeong, and A.~Barab\'asi}, {\em Error and attack tolerance
  of complex networks}, Nature, 406 (2000), pp.~378--382.

\bibitem{Anderson-92}
{\sc R.~Anderson and R.~May}, {\em Infectious Diseases of Humans}, Oxford
  University Press, Oxford, 1992.

\bibitem{Bass-69}
{\sc F.~Bass}, {\em A new product growth model for consumer durables},
  Management Sci., 15 (1969), pp.~1215--1227.

\bibitem{Bollinger-12}
{\sc B.~{Bollinger} and K.~{Gillingham}}, {\em Peer effects in the diffusion of
  solar photovoltaic panels}, Marketing Science, 31 (2012), pp.~900--912.

\bibitem{Delre-07}
{\sc S.~A. Delre, W.~Jager, T.~Bijmolt, and M.~Janssen}, {\em Targeting and
  timing promotional activities: {A}n agent-based model for the takeoff of new
  products}, J. Bus. Res., 60 (2007), pp.~826--835.

\bibitem{Bass-SIR-model-16}
{\sc G.~Fibich}, {\em Bass-{SIR} model for diffusion of new products in social
  networks}, Phys. Rev. E, 94 (2016), p.~032305.

\bibitem{Bass-SIR-analysis-17}
{\sc G.~Fibich}, {\em Diffusion of new products with recovering consumers},
  SIAM J. Appl. Math.,  (2017).

\bibitem{OR-10}
{\sc G.~Fibich and R.~Gibori}, {\em Aggregate diffusion dynamics in agent-based
  models with a spatial structure}, Oper. Res., 58 (2010), pp.~1450--1468.

\bibitem{Garber-04}
{\sc T.~Garber, J.~Goldenberg, B.~Libai, and E.~Muller}, {\em From density to
  destiny: Using spatial dimension of sales data for early prediction of new
  product success}, Marketing Sci., 23 (2004), pp.~419--428.

\bibitem{Garcia-05}
{\sc R.~Garcia}, {\em Uses of agent-based modeling in innovation/new product
  development research}, Journal of Product Innovation Management, 22 (2005),
  pp.~380--398.

\bibitem{Goldenberg-09}
{\sc J.~Goldenberg, S.~Han, D.~Lehmann, and J.~Hong}, {\em The role of hubs in
  the adoption process}, Journal of Marketing, 73 (2009), pp.~1--13.

\bibitem{GLM-02}
{\sc J.~Goldenberg, B.~Libai, and E.~Muller}, {\em Riding the saddle: How
  cross-market communications can create a major slump in sales}, Acad. Market.
  Sci. Rev., 66 (2002), pp.~1--16.

\bibitem{GLM-10}
{\sc J.~Goldenberg, B.~Libai, and E.~Muller}, {\em The chilling effect of
  network externalities}, International Journal of Research in Marketing, 27
  (2010), pp.~4--15.

\bibitem{Graziano-15}
{\sc M.~Graziano and K.~Gillingham}, {\em Spatial patterns of solar
  photovoltaic system adoption: {T}he influence of neighbors and the built
  environment}, J. Econ. Geogr., 15 (2015), pp.~815--839.

\bibitem{Jackson-08}
{\sc M.~Jackson}, {\em Social and Economic Networks}, Princeton University
  Press, Princeton and Oxford, 2008.

\bibitem{Mahajan-93}
{\sc V.~Mahajan, E.~Muller, and F.~Bass}, {\em New-product diffusion models},
  in Handbooks in Operations Research and Management Science, J.~Eliashberg and
  G.~Lilien, eds., vol.~5, North-Holland, Amsterdam, 1993, pp.~349--408.

\bibitem{Pastor-Satorras-01}
{\sc R.~Pastor-Satorras and A.~Vespignani}, {\em Epidemic spreading in
  scale-free networks}, Phys. Rev. Lett., 86 (2001), pp.~3200--3203.

\bibitem{Rogers-03}
{\sc E.~Rogers}, {\em Diffusion of Innovations}, Free Press, New York,
  fifth~ed., 2003.

\bibitem{Strang-98}
{\sc D.~Strang and S.~Soule}, {\em Diffusion in organizations and social
  movements: From hybrid corn to poison pills}, Annu. Rev. Sociol., 24 (1998),
  pp.~265--290.

\bibitem{Hopp-04}
{\sc e.~W.J.~Hopp}, {\em Ten most influential papers of management science's
  first fifty years}, Management Sci., 50 (2004), pp.~1763--1893.

\end{thebibliography}

\end{document}